\newcommand{\incl}{\subseteq}
\newcommand{\emps}{\emptyset}
\newcommand{\setm}{\setminus}
\renewcommand{\d}{\delta}
\newcommand{\s}{\sigma}
\renewcommand{\t}{\tau}
\newcommand{\Oo}{\mathcal{O}}
\newcommand{\Tt}{\mathcal{T}}
\newcommand{\Uu}{\mathcal{U}}
\newcommand{\Ii}{\mathcal{I}}
\newcommand{\Hh}{\mathcal{H}}
\newcommand{\xra}[1]{\xrightarrow{#1}}
\newcommand{\I}{\operatorname{I}}
\newcommand{\Max}{\mathsf{Max}}
\newcommand{\Min}{\mathsf{Min}}
\newcommand{\prob}{\mathrm{Prob}}
\newcommand{\chance}{\mathsf{Chance}}
\newcommand{\act}{\operatorname{Act}}
\newcommand{\his}{\textit{hist}}
\newcommand{\pfr}{\textsc{pfr}}
\newcommand{\alr}{\textsc{alr}}
\newcommand{\nam}{\textsc{nam}}
\newcommand{\salr}{\textsc{s-alr}}
\newcommand{\Ee}{\mathbb{E}}
\newcommand{\Rat}{\mathbb{Q}}
\newcommand{\Real}{\mathbb{R}}
\newcommand{\NP}{\operatorname{NP}}
\newcommand{\sqsum}{\textsc{Square-Root-Sum}}
\title{Simplifying imperfect recall games\footnote{Full version of paper to appear in the proceedings of AAMAS 2025}}
\author{Hugo Gimbert}    {LaBRI, CNRS, Université de Bordeaux, France} {hugo.gimbert@labri.fr}{}{}
 \author{Soumyajit Paul} {University of Liverpool, UK\footnote{A considerable part of this work was done when this author was at LaBRI, Université de Bordeaux and at IRIF, Université Paris Cité.
}} {soumyajit.paul@liverpool.ac.uk}{}{}
 \author{B. Srivathsan}  {Chennai Mathematical Institute and CNRS, ReLaX, IRL 2000, Siruseri, India} {sri@cmi.ac.in}{}{}
\authorrunning{H.Gimbert, S.Paul and B.Srivathsan}
\keywords{Games, Imperfect information games, Imperfect recall}
\begin{document}
\maketitle 
\begin{abstract}

In games with imperfect recall, players may forget the sequence of decisions they made in the past. When players also forget whether they have already encountered their current decision point, they are said to be absent-minded. Solving one-player imperfect recall games is known to be $\NP$-hard, even when the players are not absent-minded. This motivates the search for polynomial-time solvable subclasses.
A special type of imperfect recall, called \emph{A-loss recall}, is amenable to efficient polynomial-time algorithms. 
In this work, we present novel techniques to simplify non-absent-minded imperfect recall games into equivalent A-loss recall games.
The first idea involves shuffling the order of actions, and leads to a new polynomial-time solvable class of imperfect recall games that extends A-loss recall. The second idea generalises the first one, by constructing a new set of action sequences which can be ``linearly combined'' to give the original game. The equivalent game has a simplified information structure, but it could be exponentially bigger in size (in accordance with the $\NP$-hardness). We present an algorithm to generate an equivalent A-loss recall game with the smallest size.

\end{abstract}


\section{Introduction}

Games play a central role in AI research. In the early $20^{th}$ century, \cite{zermelo1913} showed that perfect information games in extensive form can be solved by a bottom-up traversal of the game tree. Despite the fact that this does not readily provide efficient ways to solve large games such as Chess or Go in practice, this has indeed 
laid the foundation for the dramatic progress in the field of perfect information games, with computer programs being able to challenge human experts. Solving games becomes more intricate when the players (agents) have incomplete information about the state of the game -- Poker for instance, where a player does not know the cards of the others. One of the remarkable imperfect information games where computer programs have been able to defeat professional human players is Texas Hold'em Poker~\cite{libratus-poker,deepstack,pluribus-poker}. A main technique used in these algorithms is the abstraction of large games into smaller \emph{imperfect recall} games. 

Perfect recall is the ability of a player to remember her own actions. Poker is an imperfect information game played by several players. However, ideally one would assume that the players have a perfect recall of their actions. An imperfect recall player does not remember the sequence of her own actions. Imperfect recall allows for a structured mechanism to forget the information history and as \cite{ijcai2024p332} argues, it is particularly suited for AI agents.

From a modeling perspective, imperfect recall has been used to describe teams of agents, where each team can be represented as a single agent with imperfect recall~\cite{VONSTENGEL1997309,DBLP:conf/aaai/Celli018} or to describe agents modeling multiple nodes which do not share information between each other due to privacy reasons~\cite{DBLP:conf/aaai/Conitzer19}. Moreover~\cite{LAMBERT2019164} argues that imperfect recall is a model of bounded rationality. Given the limited memory of players, it is not realistic to assume that the players remember all their actions. We refer the reader to~\cite{ijcai2024p332} for an excellent introduction to different uses of imperfect recall.  
From a practical perspective, the most prominent use of imperfect recall is in abstracting games~\cite{PracticalUseImperfect,DBLP:conf/aaai/GanzfriedS14,DBLP:conf/atal/BrownGS15, CERMAK2020103248}. The state space generated by usual games is typically very large and abstractions are crucial for solving such games. Abstractions that preserve perfect recall force a player to distinguish the current information gained, in all later rounds, even if it is not relevant. Abstractions using players with imperfect recall have been shown to outperform those using players with perfect recall~\cite{ DBLP:conf/sara/WaughZJKSB09, johanson2013evaluating, DBLP:conf/atal/BrownGS15,DBLP:conf/ijcai/CermakBL17}.

From a complexity perspective, imperfect recall games are known to be $\NP$-hard

~\cite{KollerMegiddo::1992,Cermak::2018} even when there is a single player, whereas perfect recall games can be solved in polynomial-time~\cite{KollerMegiddo::1992,vonStengel::1996}. Recent studies have aligned the complexity of different solution concepts for imperfect recall games to the modern complexity classes~\cite{GPS20,tewolde-et-al:2023,ijcai2024p332}. The hardness of imperfect recall games has motivated the search for subclasses which are polynomial-time solvable~\cite{kline2002minimum,kaneko1995behavior}, or where algorithms similar to the perfect recall case can be applied~\cite{DBLP:conf/icml/LanctotGBB12,DBLP:conf/sigecom/KroerS16}. The class of \emph{A-loss recall}~\cite{kline2002minimum,kaneko1995behavior} is a special kind of imperfect recall, where the loss of information can be traced back to a player forgetting her own action at a point in the past -- the player remembers \emph{where} it was played, but forgets \emph{what} was played. We consider A-loss recall games to be \emph{simple} since there are polynomial-time algorithms for solving them. To the best of our knowledge, A-loss recall games are the biggest known class of imperfect recall with a polynomial-time solution. This has led to research towards finding A-loss recall abstractions~\cite{Cermak::2018}. 

\emph{Contributions.} Our broad goal in this work is to find efficient ways to solve imperfect recall games in extensive-form. We do so by simplifying them into A-loss recall games. We focus on games where the players are not absent-minded: a player is absent-minded if she even forgets whether a decision point was previously seen or not. Here are our major contributions.
\begin{enumerate}\item We first identify a class of one-player games where the player's information structure is more complex than A-loss recall, but shuffling the order of actions results in an equivalent A-loss recall game. This leads to a new $\mathsf{PTIME}$ solvable class of imperfect recall games, that extends A-loss recall (\cref{thm:1p-shuffle-ptime}, \cref{cor:effic-solv-class}, \cref{cor:2-effic-solv-class}). Furthermore, these classes themselves can be tested in $\mathsf{PTIME}$. 

\item We show that every game with \emph{non-absentminded} players can be transformed into an equivalent A-loss recall game (\cref{thm:existence-alr-span}). We present an algorithm to generate an equivalent A-loss recall game with the smallest size.
\end{enumerate}

The caveat in the second result above is that the resulting A-loss recall game could be exponentially bigger. This is expected, since solving imperfect recall games is $\NP$-hard, whereas A-loss recall games can be solved in polynomial-time. The result however shows that in order to solve imperfect recall games, one could either use a worst-case exponential-time algorithm on the original game, or apply our transformation to a worst-case exponential-sized game and run a polynomial-time algorithm on it. From a conceptual point of view, our result shows that as long as there is no absentmindedness, imperfect recall can be transformed into one where the information loss can be attributed to forgetting own actions at a past point.

\emph{Organization of the document.} Section~\ref{sec:an-example} introduces a modification of the popular matching pennies game that will be used as a running example to illustrate our results. Section~\ref{sec:background} recalls necessary preliminaries on extensive-form games. Section~\ref{sec:shuffled-loss-recall} presents the new polynomial-time class of shuffled A-loss recall. Section~\ref{sec:span} generalizes the idea of shuffling to incorporate a ``linear combination'' of action sequences, and presents the second result mentioned above. Section~\ref{sec:two-player} extends the results to the two-player setting.

\section{An example}
\label{sec:an-example}

Let us start with a one-player game called the \emph{single team matching-unmatching pennies game}, which will be used as a running example. A team of players with the same goal can be interpreted as a single player. 
In this case, the team consists of two players Alice and Bob, each possessing a coin with two sides, Head (H) and Tail (T) and each of them must choose a side for their respective coins independently. 
The game unfolds in the following manner : a fair $n$-faced die with outcomes from $\{0, \dots, n-1 \}$ is rolled; then Alice chooses a side from $\{H,T\}$, followed by Bob choosing from $\{H,T\}$. Winning or losing depends on the parity of the die outcome. If the outcome of the die is even, then they win if and only if they match their sides. If the outcome is odd, they win if and only if their sides do not match. We consider three variants depending on what Alice and Bob can observe, and model it in extensive form in \cref{fig:match-penny-3-die} for $n=3$. An informal description of the figures follows after this paragraph.
\begin{description}
  \item[I.] Both Alice and Bob observe nothing (\cref{fig:match-penny-3-die-a}).
  \item[II.] Alice can't distinguish between die outcome $2i$ and $2i+1$ for $i \geq 0$,  but Bob observes nothing (\cref{fig:match-penny-3-die-b}).
   \item[III.] Alice can't distinguish between die outcome $2i$ and $2i+1$ for $i \geq 0$, Bob only observes coin of Alice but not outcome of die (\cref{fig:match-penny-3-die-c}). 
 \end{description} 
Alice and Bob want to maximize their \emph{expected payoff}. We will see their possible strategies in Section~\ref{sec:background}.  
Later, we will see that game \textbf{I} falls under the simple class of A-loss recall. 
In Section~\ref{sec:shuffled-loss-recall} and \cref{sec:span} we will see how to simplify games \textbf{II} and \textbf{III} respectively. 
\input{Figures/fig-matching-penny-3-faced-die}

Before we delve into the background and results, here is a description of the extensive-form model. 
The root node, marked with a triangle, is the event of rolling the die. The triangle nodes are called $\chance$ nodes, and the
edges out of them associate probabilities to each of the outcomes. For this game, the distribution is uniform. The circle nodes denote decision nodes of the team. The nodes in the second level (root being the first level) belong to Alice whereas the nodes in the third level belong to Bob. The actions labelled in edges out of these nodes denote the actions available to the corresponding players. 
A leaf node indicates an end state, and a path from root to leaf denotes
a play from start to end. The number associated with a leaf gives the
payoff that the team receives at the end of the corresponding play. E.g., in \cref{fig:match-penny-3-die-a} in the play resulting from the path $0, H, T$ the payoff is $0$ because the team loses. It is $1$ when they win. 

Imperfect information is expressed using a dotted line: a player cannot distinguish between two nodes joined by a dotted line.
For e.g., in \cref{fig:match-penny-3-die-a} the dotted red line joining all of Alice's nodes indicates that Alice cannot observe the die outcome. Similarly, the blue dotted line for Bob indicates, he neither observes the outcome of the die, nor the side of the coin chosen by Alice. These sets of indistinguishable nodes are called \emph{information sets}.

\section{Background and notations}
\label{sec:background}


\begin{figure}
\tikzset{
triangle/.style = {regular polygon,regular polygon sides=3,draw,inner sep = 2},
circ/.style = {circle,fill=cyan!10,draw,inner sep = 3},
term/.style = {circle,draw,inner sep = 1.5,fill=black},
sq/.style = {rectangle,fill=gray!20, draw, inner sep = 4}
}

\begin{subfigure}{.45\columnwidth}
\centering
\begin{tikzpicture}[scale=0.9]
\tikzstyle{level 1}=[level distance=9mm,sibling distance = 22mm]
\tikzstyle{level 2}=[level distance=7mm,sibling distance=10mm]
\tikzstyle{level 3}=[level distance=7mm,sibling distance=6mm]
\tikzstyle{level 4}=[level distance=7mm,sibling distance=5mm]


\begin{scope}[->, >=stealth]
\node (0) [circ] {}
child {
  node (00) [triangle] {}
  child {
    node (000) [circ] {}
    child {
      node (0000) [term, label=below:{}] {}
      edge from parent node [left] {\scriptsize $c$}
    }
    child {
      node (0001) [term, label=below:{}] {}
      edge from parent node [right] {\scriptsize $d$}
      }
    edge from parent node [left] {}
  }
  child {
    node (001) [circ] {}
    child {
      node (0010) [term, label=below:{}] {}
      edge from parent node [left] {\scriptsize $c$}
    }
    child {
      node (0011) [term, label=below:{}] {}
      edge from parent node [right] {\scriptsize $d$}
      }
    edge from parent node [right] {} 
  }
  edge from parent node [above] {\scriptsize$a$}
}
child {
  node (01) [triangle] {}
   child {
     node (010) [circ] {}
     child {
      node (0100) [term, label=below:{}] {}
      edge from parent node [left] {\scriptsize $e$}
    }
    child {
      node (0101) [term, label=below:{}] {}
      edge from parent node [right] {\scriptsize $f$}
      }
    edge from parent node [left] {}
  }
  child {
    node (011) [circ] {}
    child {
      node (0110) [term, label=below:{}] {}
      edge from parent node [left] {\scriptsize $e$}
    }
    child {
      node (0111) [term, label=below:{}] {}
      edge from parent node [right] {\scriptsize $f$}
      }
    edge from parent node [right] {} 
  }
  edge from parent node [above] {\scriptsize$b$}
}
;
\end{scope}


  \node[fit=(0),dashed,thick,red, draw, circle,inner sep=1pt] {};
\draw [dashed, thick, blue, in=150,out=30] (000) to (001) ;
\draw [dashed, thick, ForestGreen, in=150,out=30] (010) to (011);

\node [black] at (0,0.35) {\scriptsize $r$};
\node [black] at (-1,-0.55) {\scriptsize $u_1$};
\node [black] at (1, -0.55) {\scriptsize $u_2$};
\node [black] at (-2, -1.5) {\scriptsize $u_3$};
\node [black] at (-.25, -1.5) {\scriptsize $u_4$};

\node [black] at (0.25, -1.5) {\scriptsize $u_5$};
\node [black] at (2, -1.5) {\scriptsize $u_6$};

\node [red] at (0,-.5) {\scriptsize $I_1$};
\node [blue] at (-1.1,-1.6) {\scriptsize $I_2$};
\node [ForestGreen] at (1.1,-1.6) {\scriptsize $I_3$};

\end{tikzpicture}

\caption{$\Max$ with perfect recall}
\label{fig-allexmp-pftrec}
\end{subfigure}
\quad
\begin{subfigure}{.45\columnwidth}
\centering
\begin{tikzpicture}
\tikzstyle{level 1}=[level distance=7mm,sibling distance = 10mm]
\tikzstyle{level 2}=[level distance=7mm,sibling distance=10mm]
\tikzstyle{level 3}=[level distance=7mm,sibling distance=15mm]
\tikzstyle{level 4}=[level distance=7mm,sibling distance=8mm]


\begin{scope}[->, >=stealth]
\node (0) [circ] {}
child{
  node (1) [circ] {}
  child{
    node (3) [term, label=below:{}] {}
    edge from parent node [left] {\scriptsize $a$}
  }
  child{
    node (4) [term,label=below:{}] {}
    edge from parent node [right] {\scriptsize $b$}
  }
  edge from parent node [left] {\scriptsize $a$}
}
child{
  node (2) [term, label=below:{}] {}
  edge from parent node [right] {\scriptsize $b$}
}
;
\end{scope}

\draw [dashed, thick, blue, in=10,out=-100] (0) to (1);

\node [black] at (0,0.25) {\scriptsize $r$};
\node [black] at (-.9,-0.6) {\scriptsize $u_1$};

\node [blue] at (.1,-.6) {\scriptsize $I_1$};

\end{tikzpicture}
\caption{$\Max$ with absentmindedness}
\label{fig-allexmp-absentm}
\end{subfigure}

\caption{Recalls of $\Max$}
\label{fig:recall-examples}
\end{figure}
This section presents the formal definitions. The single team matching-unmatching pennies game has only one player and chance nodes, but in general we will talk about zero-sum two player games. As in \cref{fig:2-p-shuffle}, there are two players $\Max$ (circle nodes) and $\Min$ (square nodes). The payoff at the leaf, is the amount $\Min$ loses and $\Max$ gains. The goal of $\Max$ is to maximize the expected payoff whereas $\Min$ wishes to minimize it. In \cref{fig:match-penny-3-die} $\Max$ was the team consisting of Alice and Bob.

In this paper, we mainly work with \emph{game-structures} and not games
themselves. Game-structures are essentially games sans the numerical
quantities. Any game on a game structure can be represented symbolically as in shown \cref{fig:alossSpan-a} with symbolic payoffs $z_i$s and symbolic chance probabilities $p_i$s (with constraints on $p_i$'s). An extensive
form game can be obtained from a game structure by plugging in values for $z_i$s and $p_i$s.  We work with game structures because the
notions of perfect recall and imperfect recall can be determined
simply by looking at the game-structure.

Formally, a game-structure $\Tt$ is a tuple $(V, L, r, A, E, \Ii)$
where $V$ is a finite set of non-terminal nodes partitioned as
$V_{\Max}$, $V_{\Min}$ and $V_{\chance}$; $L$ is a finite set of leaves;
$r \in V$ is a root node; $A = A_{\Max} \cup A_{\Min}$ is a finite set
of actions; $E \incl V \times (V \cup L)$ is an edge
relation that induces a directed tree; edges originating from $V_{\Max} \cup V_{\Min}$ are labelled with actions from $A$; we write $u \xra{a} v$ if
$(u, v)$ is labelled with $a$, and assume that there is no incoming edge
$u \xra{} r$ to the root node $r$; $\Ii = \Ii_{\Max} \cup \Ii_{\Min}$
is a set of information sets for $i \in \{ \Max, \Min \}$, each
information set $I \in \Ii_i$ is a subset of vertices belonging to
$i$, i.e. $I \incl V_i$, and moreover, the set of information sets
$\Ii_i$ partitions $V_i$. E.g., in \cref{fig-allexmp-pftrec},
$\Ii_{\Max} = \{I_1, I_2, I_3\}$ and $I_1 = \{r\}, I_2 = \{u_3, u_4\}$
and $I_3 = \{u_5, u_6\}$. We can understand these information sets as a signal that the player receives when she reaches a node in it. On receiving the signal, the player knows the actions that are available to play at that position. 

An information set models the fact that a player cannot distinguish
between the nodes within it. Therefore, the set of outgoing actions
from each node in an information set is required to be the same. This
allows us to define $\act(I)$ as the set of actions available at
information set $I$. E.g., in \cref{fig-allexmp-pftrec},
$\act(I_2) = \{c, d\}$. For technical convenience, we make a second
assumption: for all $I, I' \in \Ii$ with $I \neq I'$, we have
$\act(I) \cap \act(I') = \emptyset$. Therefore, the actions identify
the information sets. With this assumption, in \cref{fig:match-penny-3-die}, the actions of Alice should be seen as $H_A, T_A$ and those of Bob's as $H_B, T_B$. But we omit the subscripts in the figure for clarity. 
\begin{definition}[Extensive form games]\label{def:ext-form-games}
  A two-player zero-sum game in extensive form is a tuple
  $(\Tt,\d, \Uu)$ where $\Tt$ is a game-structure, $\d$ is the
  \emph{chance probability} associating to each $\chance$ node, a
  probability distribution on the outgoing actions, and
  $\Uu : L \mapsto \Rat $ is the utility function associating a payoff
  to each leaf.
\end{definition}

The \emph{size} of a game is the sum of the bit-lengths of all chance probabilities and leaf
payoffs in it. A \emph{behavioral strategy} for player $\Max$ ($\Min$ resp.) assigns a probability
distribution to $\act(I)$ for each $I \in \Ii_{\Max}$ ($\Ii_{\Min}$ resp.). Once we fix behavioral strategies $\sigma$ and $\tau$ for $\Max$ and $\Min$ respectively,
each edge in the game has an associated probability of being taken,
given by the corresponding strategy or $\chance$. The probability of reaching a leaf $u \in L$ is given by the product of all the numbers along the path to the leaf. Consider \cref{fig:shuffle-a}.
Let $\sigma$ assign $\frac{1}{4}$ to $b$ and $\frac{3}{4}$ to
$\bar{b}$; $0$ and $1$ to $c$ and $\bar{c}$, and $\frac{1}{3}$ to $a$
and $\frac{2}{3}$ to $\bar{a}$. The probability of reaching the leaf $b \bar{a}$ is 
then: $p_1 \times \frac{1}{4} \times \frac{2}{3}$. For a leaf $u$, we denote this quantity by
$\prob_{\sigma, \tau}(u)$. The \emph{expected payoff} $\Ee(\s, \t)$
when $\Max$ plays $\sigma$ and $\Min$ plays $\tau$, then equals
$\sum_{u \in L} \prob_{\s, \t} (u) \Uu(u)$. The solution concept that we
will consider in this paper is the notion of maxmin.
The \emph{maxmin value} of a game is the
following: \[\max\limits_{\s}\min\limits_{\t}\Ee(\s,\t)\] where
$\s,\t$ are behavioral strategies of $\Max$ and $\Min$ respectively. A
strategy of $\Max$ which provides the maxmin value is called a
\emph{maxmin strategy}. In one-player games, we only have $\Max$ player and the maxmin value of the game is $\max\limits_{\s}\Ee(\s)$. For one-player non-absentminded games, the maxmin value can be in fact obtained by a \emph{pure strategy} -- pure strategies are special cases of behavioural strategies which assign either $0$ or $1$ to each action~\cite{KollerMegiddo::1992}.

The maxmin value of the game in \cref{fig:match-penny-3-die-a} is $\frac{2}{3}$ since Alice and Bob can win at most in 2 of the 3 die rolls by playing matching sides. Another way to see this is to consider the four possible pure strategies $HH, HT, TH, TT$, which induce payoffs $\frac{2}{3}$, $\frac{1}{3}$, $\frac{1}{3}$ and $\frac{2}{3}$ respectively. Now since, in the rest of the following two versions, the team has more information \footnote{This can be observed by the fact that information sets in each version are refinements of the previous versions.} they can guarantee at least $\frac{2}{3}$ by playing the same strategy. Interestingly, one can observe (by enumerating all pure strategies) that they cannot do better than that in any version. 
\paragraph*{Histories and recalls.} We now move on to describing the
various types of imperfect information, based on what the player
remembers about her history. A node $w \in V$ is reached by a unique
path from the root: $r = v_0 \xra{} v_1 \xra{} \cdots \xra{} v_n =
w$. Let $v_{i_1}, v_{i_2}, \dots, v_{i_k}$ be the vertices in this
sequence which do not belong to $\chance$. Then,
$\his(w) = a_{1} a_{2} \cdots a_{k-1}$, where $v_{i_j} \xra{a_j} v_{i_{j + 1}}$.
For a player $i \in \{\Max, \Min\}$ the history of $i$ at $w$, denoted
by $\his_i(w)$, is the sequence of player $i$'s actions in the path to
$w$, which is simply the sub-sequence of $\his(w)$ restricted to
actions from $A_i$. E.g.: in \cref{fig-allexmp-pftrec},
$\his_{\Max}(u_3) = \his_{\Max}(u_4) = a$; in
\cref{fig:shuffle-a}, $\his_{\Max}(u_3) = b$ and $\his_{\Max}(u_2) = \epsilon$, the empty sequence. It is important to remark that this definition uses the assumption that actions determine information sets -- otherwise, we would need to incorporate the information sets that were visited along the way, into the history.

Let $\Hh$ denote the set of all histories and $\Hh_i$ be the set of
all histories of player $i$. For an information set $I \in \Ii_i$ let
$\Hh(I) = \{ \his(u) \mid u \in I\}$ be the set of histories of all
nodes in $I$. Similarly, we can define $\Hh_i(I)$ with respect to
$\Hh_i$. Let $\Hh(L)$ denote the set of all leaf histories.
When $\Hh_i(I)$ has multiple histories, at a node $v \in I$ the player
does not remember which history she traversed to reach $v$. Hence the
player loses information. For two
nodes $u$ and $v$ in $I$, comparing $\his_i(u)$ and $\his_i(v)$
reveals the loss or retention of previously withheld information at
the respective nodes. To capture this there are different notions of
\emph{recall}.

\emph{Perfect recall.} Player $i$ is said to have \emph{perfect
  recall} ($\pfr$) if for every $I \in \Ii_i$, and every pair of
distinct vertices $u, v \in I$, we have $\his_i(u) = \his_i(v)$,
i.e. $|\Hh_i(I)| = 1$.  Otherwise, the player is said to have imperfect
recall.  \cref{fig-allexmp-pftrec} is an example of a perfect recall
game.
\emph{Imperfect recall.} \cref{fig:shuffle-a} gives an example of
a game-structure that has imperfect recall. Notice that states $u_3$
and $u_4$ lie in the same information set $I_3$, but the sequence of
the player's actions leading to these states is different: history at
$u_3$ is $b$, whereas at $u_4$ it is $\bar{b}$. 
Within imperfect recall,
there are distinctions. The imperfect recall in
~\cref{fig:shuffle-b} and the one in ~\cref{fig:shuffle-a}
are in some sense different: in ~\cref{fig:shuffle-b}, the
inability to distinguish between the two nodes in $I_1$ can be traced back to
a point in the past where she forgets her own action from some
information set ($I_3$ in this case), whereas in \cref{fig:shuffle-a}, the player has been able to
distinguish between the two outcomes of the $\chance$ node, but later
forgets at $I_3$ where she started from, leading to four histories $b,\bar{b},c$ and $\bar{c}$ at $I_3$.

\emph{A-loss recall.} Game-structures as in \cref{fig:shuffle-b} are said to have \emph{A-loss
  recall}. A consequence of having A-loss recall is that a player always remembers any new information
gained from $\chance$ outcomes, which is not the case
in~\cref{fig:shuffle-a}. Player $i$ has \emph{A-loss recall}
($\alr$) if for all $I \in \Ii_i$, and every pair of distinct vertices
$u, v \in I$, either $\his_i(u) = \his_i(v)$, or $\his_i(u)$ is of the
form $s a s_1$, and $\his_i(v)$ of the form $s b s_2$, where
$a, b \in \act(I')$ for some $I' \in \Ii$, with $a \neq b$. The game in \cref{fig:match-penny-3-die-a} has A-loss recall, whereas the others, \cref{fig:match-penny-3-die-b} and \cref{fig:match-penny-3-die-c} do not.  

Finally, player $i$ is said to be
\emph{non-absentminded} ($\nam$) if $\forall u, v \in V_i$ with $u$
lying on the path to $v$, the information set that $u$ belongs to is
different from the information set that $v$ belongs
to, i.e. all nodes of $i$ on a path from $r$ to leaf node lie in distinct
information sets. \cref{fig-allexmp-absentm} is an example where
$\Max$ is absentminded, since both $r$ and $u_1$ lie in the same
information set. Notice that $\pfr$ implies $\alr$, which in turn implies implies $\nam$. 

When $\Max$ and $\Min$ have recalls $R_{\Max}, R_{\Min} \in \{ $\pfr$,~$\alr$,~$\nam$ \} $
respectively we will denote the game as a
$(R_{\Max}, R_{\Min})$-game. A one-player game with recall $R$ is denoted as $R$-game. In this paper we are only concerned with one-player $\nam$-games and two-player $(\nam,\nam)$-games. Let us now recall some known
results.
\begin{itemize}\item A maxmin solution in a $(\pfr,\alr)$-game can be computed in
  polynomial- time~\\\cite{KollerMegiddo::1992,vonStengel::1996,kaneko1995behavior}. As a corollary, an optimal solution in a one-player $\alr$-game can be computed in
  polynomial-time ~\cite{kaneko1995behavior}.

\item The maxmin decision problem for $(\nam,\nam)$-games is both
  $\NP$-hard~\cite{KollerMegiddo::1992} and

  $\sqsum$-hard~\cite{GPS20} \footnote{$\sqsum$ is the decision problem of checking
    if the sum of the square roots of $k$ positive integers is less
    than another positive number}.
   The $\NP$-hardness and the $\sqsum$-hardness hold even for
  $(\alr,\pfr)$-games~\cite{Cermak::2018,GPS20}. The maxmin decision problem for one-player $\nam$-games is
  $\NP$-complete~\cite{KollerMegiddo::1992}.
\end{itemize}

Our core idea is to view game structures through the polynomials they generate. 
\paragraph*{Leaf monomials.} In a game structure, assigning variable $x_a$ to each action
$a$, the monomial obtained by taking the product of all $x_a$ along the path to each leaf $t$ is called
a \emph{leaf monomial}, and denoted as $\mu(t)$. E.g., the leaf
monomials of the game-structure in \cref{fig-allexmp-pftrec} are
$\{ x_ax_c, x_ax_d, x_bx_e, x_bx_f\}$. For a game structure $\Tt$, we will write $X(\Tt)$ for the set of leaf monomials. For a game $G$, let
$\prob_{\chance}(t)$ denote the product of $\chance$ probabilities in
the path to $t$. The polynomial given by
$\sum\limits_{t \in L} \prob_{\chance}(t) \cdot \Uu(t) \cdot \mu(t)$
is called the \emph{payoff polynomial} of a game. 
A constraint of the form $\sum\limits_{a \in \act(I)} x_a = 1$ for an
information set $I$ will be called a \emph{strategy constraint}. Any non-negative valuation satisfying these constraints gives a behavioral strategy to the players.
The maxmin value in a game can be given by the maxmin of the payoff polynomial over all possible values satisfying the strategy constraints.

\paragraph*{Overview of our work}
In this work, our mantra for simplifying games is to find simpler games with same payoff polynomials (upto renaming of variables). Leaf monomials are the building blocks of payoff polynomials. We give methods to generate from a given game-structure $\Tt$, a transformed game-structure $\Tt'$ with A-loss recall such that: either $\Tt'$ has the same set of leaf monomials (Section~\ref{sec:shuffled-loss-recall}), or each leaf monomial of $\Tt$ is a linear combination of the leaf monomials of $\Tt'$ (Section~\ref{sec:span}).


\section{Shuffled A-loss recall}
\label{sec:shuffled-loss-recall}


\begin{figure}
\begin{subfigure}{0.5\columnwidth}
\centering
\tikzset{
triangle/.style = {regular polygon,regular polygon sides=3,draw,inner sep = 2},
circ/.style = {circle,fill=cyan!10,draw,inner sep = 3},
term/.style = {circle,draw,inner sep = 1.5,fill=black},
sq/.style = {rectangle,fill=gray!20, draw, inner sep = 4}
}

\begin{tikzpicture}[scale=0.85]
\tikzstyle{level 1}=[level distance=9mm,sibling distance = 22mm]
\tikzstyle{level 2}=[level distance=7mm,sibling distance=10mm]
\tikzstyle{level 3}=[level distance=7mm,sibling distance=6mm]
\tikzstyle{level 4}=[level distance=7mm,sibling distance=5mm]


\begin{scope}[->, >=stealth]
\node (0) [triangle] {}
child {
  node (00) [circ] {}
  child {
    node (000) [circ] {}
    child {
      node (0000) [term, label=below:{\scriptsize $z_1$}] {}
      edge from parent node [left] {\scriptsize $a$}
    }
    child {
      node (0001) [term, label=below:{\scriptsize $z_2$}] {}
      edge from parent node [right] {\scriptsize $\bar{a}$}
      }
    edge from parent node [left] {\scriptsize $b$}
  }
  child {
    node (001) [circ] {}
    child {
      node (0010) [term, label=below:{\scriptsize $z_3$}] {}
      edge from parent node [left] {\scriptsize $a$}
    }
    child {
      node (0011) [term, label=below:{\scriptsize $z_4$}] {}
      edge from parent node [right] {\scriptsize $\bar{a}$}
      }
    edge from parent node [right] {\scriptsize $\bar{b}$} 
  }
  edge from parent node [above] {\scriptsize $p_1$}
}
child {
  node (01) [circ] {}
   child {
     node (010) [circ] {}
     child {
      node (0100) [term, label=below:{\scriptsize $z_5$}] {}
      edge from parent node [left] {\scriptsize $a$}
    }
    child {
      node (0101) [term, label=below:{\scriptsize $z_6$}] {}
      edge from parent node [right] {\scriptsize $\bar{a}$}
      }
    edge from parent node [left] {\scriptsize $c$}
  }
  child {
    node (011) [circ] {}
    child {
      node (0110) [term, label=below:{\scriptsize $z_7$}] {}
      edge from parent node [left] {\scriptsize $a$}
    }
    child {
      node (0111) [term, label=below:{\scriptsize $z_8$}] {}
      edge from parent node [right] {\scriptsize $\bar{a}$}
      }
    edge from parent node [right] {\scriptsize $\bar{c}$} 
  }
  edge from parent node [above] {\scriptsize $p_2$}
}
;
 \node[fit=(00),dashed,thick,blue, draw, circle,inner sep=1pt] {};
  \node[fit=(01),dashed,thick,red, draw, circle,inner sep=1pt] {};
\end{scope}

\draw [dashed, thick, ForestGreen, in=150,out=30] (000) to (001);
\draw [dashed, thick, ForestGreen, in=150,out=30] (001) to (010);
\draw [dashed, thick, ForestGreen, in=150,out=30] (010) to (011);

\node [black] at (0,0.35) {\scriptsize $r$};
\node [black] at (-1.5,-0.55) {\scriptsize $u_1$};
\node [black] at (1.5, -0.55) {\scriptsize $u_2$};
\node [black] at (-2, -1.6) {\scriptsize $u_3$};
\node [black] at (-.25, -1.7) {\scriptsize $u_4$};

\node [black] at (0.25, -1.7) {\scriptsize $u_5$};
\node [black] at (2, -1.6) {\scriptsize $u_6$};

\node [ForestGreen] at (0,-1.1) {\scriptsize $I_3$};
\node [blue] at (-.55,-.9) {\scriptsize $I_1$};
\node [red] at (.55,-.9) {\scriptsize $I_2$};

\end{tikzpicture}
\caption{$\Max$ without $\alr$ but has $\salr$}
\label{fig:shuffle-a}
\end{subfigure}
\begin{subfigure}{0.48\columnwidth}
\centering
\tikzset{
triangle/.style = {regular polygon,regular polygon sides=3,draw,inner sep = 2},
circ/.style = {circle,fill=cyan!10,draw,inner sep = 3},
term/.style = {circle,draw,inner sep = 1.5,fill=black},
sq/.style = {rectangle,fill=gray!20, draw, inner sep = 4}
}

\begin{tikzpicture}[scale=0.85]
\tikzstyle{level 1}=[level distance=9mm,sibling distance = 22mm]
\tikzstyle{level 2}=[level distance=7mm,sibling distance=10mm]
\tikzstyle{level 3}=[level distance=7mm,sibling distance=6mm]
\tikzstyle{level 4}=[level distance=7mm,sibling distance=5mm]


\begin{scope}[->, >=stealth]
\node (0) [circ] {}
child {
  node (00) [triangle] {}
  child {
    node (000) [circ] {}
    child {
      node (0000) [term, label=below:{\scriptsize $z_1$}] {}
      edge from parent node [left] {\scriptsize $b$}
    }
    child {
      node (0001) [term, label=below:{\scriptsize $z_3$}] {}
      edge from parent node [right] {\scriptsize $\bar{b}$}
      }
    edge from parent node [left] {}
  }
  child {
    node (001) [circ] {}
    child {
      node (0010) [term, label=below:{\scriptsize $z_5$}] {}
      edge from parent node [left] {\scriptsize $c$}
    }
    child {
      node (0011) [term, label=below:{\scriptsize $z_7$}] {}
      edge from parent node [right] {\scriptsize $\bar{c}$}
      }
    edge from parent node [right] {} 
  }
  edge from parent node [above] {\scriptsize $a$}
}
child {
  node (01) [triangle] {}
   child {
     node (010) [circ] {}
     child {
      node (0100) [term, label=below:{\scriptsize $z_2$}] {}
      edge from parent node [left] {\scriptsize $b$}
    }
    child {
      node (0101) [term, label=below:{\scriptsize $z_4$}] {}
      edge from parent node [right] {\scriptsize $\bar{b}$}
      }
    edge from parent node [left] {}
  }
  child {
    node (011) [circ] {}
    child {
      node (0110) [term, label=below:{\scriptsize $z_6$}] {}
      edge from parent node [left] {\scriptsize $c$}
    }
    child {
      node (0111) [term, label=below:{\scriptsize $z_8$}] {}
      edge from parent node [right] {\scriptsize $\bar{c}$}
      }
    edge from parent node [right] {} 
  }
  edge from parent node [above] {\scriptsize $\bar{a}$}
}
;
\end{scope}


\node[fit=(0),dashed,thick,ForestGreen, draw, circle,inner sep=1pt] {};
\draw [dashed, thick, blue, in=150,out=30] (000) to (010);
\draw [dashed, thick, red, in=150,out=30] (001) to (011);

%

\node [ForestGreen] at (0,-0.6) {\scriptsize $I_3$};
\node [blue] at (-0.35,-1) {\scriptsize $I_1$};
\node [red] at (0.35,-1) {\scriptsize $I_2$};

\node[black] at (-1.5,-.95) {\scriptsize $p_1$};
\node[black] at (-.73,-.95) {\scriptsize $p_2$};

\node[black] at (1.5,-.95) {\scriptsize $p_2$};
\node[black] at (.73,-.95) {\scriptsize $p_1$};
\end{tikzpicture}
\caption{$\Max$ with $\alr$}
\label{fig:shuffle-b}
\end{subfigure}
\caption{Equivalent $\alr$ game using $\salr$ for game without $\alr$ }
\label{fig:shuffle}
\end{figure}
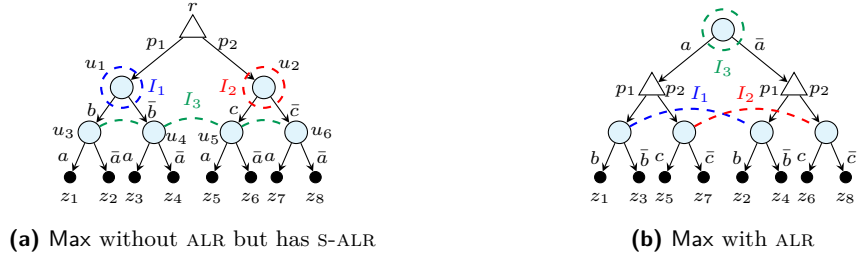

We start with an example. The game-structure in \cref{fig:shuffle-a}
is an equivalent game of version \textbf{II} of the matching-unmatching
game (\cref{fig:match-penny-3-die-b}) obtained by merging die outcome $0$ and $1$ (and renaming $H, T$), with $p_1 = \frac{2}{3}, p_2 = \frac{1}{3}$. The game
does not have $\alr$: we have
$\Hh_{\Max}(I_3) = \{b, \bar{b}, c, \bar{c}\}$. Since
$\{b, \bar{b}\}$ and $\{c, \bar{c}\}$ are from different information
sets, the pair of histories $b$ and $c$, for instance, is a witness
for no $\alr$. The player forgets what she knew about $\chance$ actions. Now,
consider the game-structure in~\cref{fig:shuffle-b}, obtained by
\emph{shuffling} the actions ($a$ goes above $b$ and $c$). This
game-structure has $\alr$. The crucial observation is that both
the game-structures, \cref{fig:shuffle-a} and \cref{fig:shuffle-b},
lead to the same leaf monomials\footnote{Two monomials are same if their sets of variables are same}:
$\{x_ax_b,~x_ax_{\bar{b}},~x_{\bar{a}}x_b,~x_{\bar{a}}x_{\bar{b}},~x_ax_c,~
x_ax_{\bar{c}},~x_{\bar{a}}x_c,~x_{\bar{a}}x_{\bar{c}} \}$. Similarly,
in \cref{fig:match-penny-3-die-b}, by shuffling the turns of Alice and
Bob, we get an $\alr$ recall game that induces the same leaf
monomials. 

We say that the game-structure of ~\cref{fig:shuffle-a} has
\emph{shuffled A-loss recall}. Even though the game-structure
originally does not have $\alr$, it can be shuffled in some way
to get an $\alr$ structure.  Not every game-structure has shuffled
A-loss recall. In this section, we provide a polynomial-time algorithm
to identify whether a game-structure has shuffled A-loss recall. If
the answer is yes, the algorithm also computes the shuffled
game-structure. As a result, we are able to show that one-player
shuffled A-loss recall games can be solved in
polynomial-time. 
We will keep our discussion to one-player games played by $\Max$, and
later in \cref{sec:two-player} discuss extensions to two-player
games. 
We will require few notions and notations, which we introduce
gradually as we need them.

For a game structure $\Tt$, we write $L_{\Tt}$ for the set of its
leaves. Define $|\Tt|$, the \emph{size} of a game structure $\Tt$, to be
$|L_{\Tt}|$, the total number of its leaves. We work with history sequences originating from game structures.
Fix a finite set of information
sets $\Ii$ and a set of actions $\act(I)$ for each $I \in \Ii$.~Recall that the action sets of distinct information sets are disjoint. 
Let $A = \biguplus_{I \in \Ii} \act(I)$. A \emph{sequence} is a finite word over
$A^*$ that contains at most one letter from each $\act(I)$ \footnote{
 We restrict to such words in $A^*$ since histories in an $\nam$ game structure have this property}.

Let $s[i]$ denote the $i$th action in $s$. For sequences $s_1$ and $s_2$ of length $k$, we say
$s_2$ is a permutation of $s_1$ if $\exists$~a bijective function
$\rho : \{1,\dots,k\} \mapsto \{1,\dots,k\}$ such that $\forall i$,
$s_1[i] = s_2[\rho(i)]$.

\begin{definition}[Shuffled A-loss recall]\label{def:salr}
  A game structure $\Tt$ is said to have shuffled A-loss
  recall~(\salr) if $\exists$ a game structure $\Tt'$ with
  $|\Tt| = |\Tt'|$ such that
  \begin{itemize}   \item $\Tt'$ has A-loss recall
  \item There is a bijection $f : L_{\Tt} \mapsto L_{\Tt'}$ such that
    $\forall t \in L_{\Tt}$, $\his(f(t))$ is a permutation of
    $\his(t)$.
  \end{itemize}
\end{definition}
When $\Tt$ has $\salr$ we call the structure $\Tt'$ an $\salr$ witness of $\Tt$. The following lemma is a consequence on leaf monomials.

\begin{restatable}{lemma}{shuffleSameLeafMonomials}\label{lem:alr-salr-same-leaf-monomials}
  Suppose $\Tt$ has $\salr$ with $\salr$ witness $\Tt'$, then $\Tt$ and $\Tt'$ have the same set of leaf
  monomials.
\end{restatable}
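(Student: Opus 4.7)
The plan is to unpack the definitions and observe that the statement reduces to the commutativity of polynomial multiplication. A leaf monomial $\mu(t)$ is defined as $\prod_{a} x_a$ where the product ranges over the player actions on the path from the root to $t$, i.e.\ exactly the actions in $\his(t)$ (chance edges contribute no variables). Since the variables $x_a$ commute, $\mu(t)$ depends only on the multiset of actions appearing in $\his(t)$, not on their order. The $\salr$ condition is designed so that corresponding leaves in $\Tt$ and $\Tt'$ see exactly the same multiset of actions, just in a possibly different order, which makes the conclusion immediate.

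Concretely, I would argue as follows. Let $f : L_{\Tt} \to L_{\Tt'}$ be the bijection guaranteed by \cref{def:salr}, so that $\his(f(t))$ is a permutation of $\his(t)$ for every $t \in L_{\Tt}$. Write $\his(t) = a_1 a_2 \cdots a_k$, and let $\rho$ be the bijection on $\{1,\dots,k\}$ with $\his(t)[i] = \his(f(t))[\rho(i)]$. Then
\[
\mu(f(t)) \;=\; \prod_{i=1}^{k} x_{\his(f(t))[i]} \;=\; \prod_{i=1}^{k} x_{\his(t)[\rho^{-1}(i)]} \;=\; \prod_{i=1}^{k} x_{\his(t)[i]} \;=\; \mu(t),
\]
where the middle equality re-indexes by $\rho$ and the penultimate equality uses commutativity of multiplication in the polynomial ring.

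To finish, I would observe that by definition $X(\Tt) = \{\mu(t) : t \in L_{\Tt}\}$ and $X(\Tt') = \{\mu(t') : t' \in L_{\Tt'}\}$. Since $f$ is a bijection, $\{\mu(f(t)) : t \in L_{\Tt}\} = X(\Tt')$, and combined with the identity $\mu(f(t)) = \mu(t)$ just established, we get $X(\Tt) = X(\Tt')$, as required.

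There is essentially no real obstacle here: the lemma is a direct sanity check on \cref{def:salr}, and the proof is one line up to unpacking notation. The only thing worth being careful about is that the action sets across distinct information sets are disjoint (already assumed in \cref{sec:background}), so that the variables $x_a$ in a single monomial are genuinely distinct symbols and the permutation argument produces the same monomial rather than collapsing different ones.
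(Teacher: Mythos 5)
Your proof is correct and follows essentially the same route as the paper's own argument: both use the bijection $f$ from the definition of $\salr$ and the observation that a permutation of a history yields the same monomial by commutativity, so $X(\Tt) = X(\Tt')$. Your version merely spells out the re-indexing explicitly; no substantive difference.
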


\paragraph*{Detecting S-ALR.}

Checking whether a structure $\Tt$ has $\pfr$ or $\alr$ can be done in
polynomial-time, simply by checking histories at every information
set. On the other hand, from the definition of $\salr$, it is not
immediate if one could test it efficiently. One approach could be
finding good permutations for each leaf history in order to get the $\salr$ witness but this could potentially lead to
exponentially many checks.  In the following discussion we will
provide a polynomial-time algorithm to test $\salr$ in a structure.

\begin{theorem}\label{thm:shuffle-detection-ptime} Given a game structure
  $\Tt$, there is an algorithm that checks if $\Tt$ has $\salr$ in
  time $O(|\Tt|)$. Moreover if $\Tt$ does have $\salr$, this algorithm
  also outputs an $\salr$ witness structure $\Tt'$.
\end{theorem}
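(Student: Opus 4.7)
The plan is to design a recursive algorithm that decomposes the leaf monomial set of $\Tt$ while simultaneously constructing the $\salr$ witness $\Tt'$. I would base the approach on the following characterization: $\Tt$ has $\salr$ iff its leaf monomial set $M$ admits a recursive decomposition wherein, at each step, the current set of monomials is partitioned into connected components under the ``share an info set'' relation, and within each component one finds a \emph{covering info set} $I^*$ whose action set intersects every monomial in the component. One then partitions the component by the specific action from $\act(I^*)$ used in each monomial, strips that action from each monomial, and recurses. A component that lacks a covering info set witnesses the failure of $\salr$. This recursion matches the structure of an $\alr$ tree: player nodes correspond to the covering-info-set step, while chance nodes correspond to splitting into multiple components.

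For the algorithmic side, I would first perform a single traversal of $\Tt$ to precompute, for each info set $I$, the support $L(I) = \{\ell \in L_\Tt : \act(I) \cap \mu(\ell) \neq \emptyset\}$. The recursive procedure maintains the current leaf set, using incremental counters to locate a covering info set (one whose support contains the whole current set) and a Union-Find structure to track connected components under sharing of info sets. The witness $\Tt'$ would be constructed alongside: at each step, a chance node is placed above the current subtree when the leaves split into multiple components, and the chosen covering info set is placed at the root of each component's sub-$\Tt'$, with its children indexed by the actions of that info set and built by the recursive calls on the corresponding action-partitions.

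The hard part will be twofold. First, establishing the characterization requires arguing both directions. For necessity: shared info sets across chance branches in any $\alr$ witness would require nodes in those info sets to have compatible histories; at a root-level chance the only compatible histories are empty, which means the shared info set would cover both branches, contradicting the need to split --- hence chance splits must respect connected components. For sufficiency I would use an exchange argument to show that greedy selection of the covering info set at each level preserves feasibility downstream, by showing that two covering info sets at the top of a component can be swapped without affecting the $\alr$ property of deeper levels. Second, I would establish the $O(|\Tt|)$ runtime via amortized analysis: argue that the total work summed across all recursive calls is bounded by the total number of (leaf, info-set) incidences in $\Tt$, which is $O(|\Tt|)$ under the standard assumption that internal nodes have at least two children (so that the tree has $O(|\Tt|)$ internal nodes and the incidences can be charged to them).
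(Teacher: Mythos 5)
Your proposal follows essentially the same route as the paper: the same recursive characterization (decompose into connected components under the shared-information-set relation, find a covering information set, strip its actions and recurse), the same key lemma that \emph{any} covering information set may be chosen greedily (your ``exchange argument'' is exactly the paper's Lemma~\ref{lem:salr-common-act}), and the same construction of the $\salr$ witness alongside the recursion. The only differences are presentational --- you phrase things in terms of leaf monomials rather than history sequences and add implementation detail (Union-Find, incidence counting) for the $O(|\Tt|)$ bound, which the paper asserts more briefly.
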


To prove Theorem~\ref{thm:shuffle-detection-ptime}, we will work with
the history sequences. For a game structure $\Tt$, recall that
$\Hh(L_{\Tt})$ is the set of all leaf histories in $\Tt$. We will
construct a set of leaf histories $H'$ such that $H' = \Hh(L_{\Tt'})$
for some $\alr$ structure $\Tt'$. Since we work with sequences, and
not game-structures themselves, we will need a notion of $\alr$ for
sets of words. When we are given a game, $\alr$ can be detected by
looking at the histories. When we are given a set of histories, it is
not as direct. We will need to determine some structure inside the
sequences.

\paragraph*{ALR on sequence sets.} 

Two sequences $s_1, s_2$ over $A$ are said to be \emph{connected} if
there is some information set $I$ such that both $s_1$ and $s_2$
contain an action from $\act(I)$. E.g., suppose
$\act(I_1) = \{a, b\}$, $\act(I_2) = \{c, d\}$, $\act(I_3) = \{e,f\}$; then $s_1 = ac$ and
$s_2 = eb$ are connected since $s_1$ contains $a$, $s_2$ contains $b$,
both of which are in $\act(I_1)$. Sequences $s_1, s_2$ are said to be
\emph{disconnected} if they are not connected. For the same alphabet
as before, let $s_3 = e$, then $s_1$ and $s_3$ are disconnected.

We say that a set of sequences $S$ is connected if for every disjoint
partition of $S$ as $S_1 \uplus S_2$ (where $S_1, S_2$ are non-empty),
there exist $s_1 \in S_1$ and $s_2 \in S_2$ such that $s_1$ and $s_2$
are connected. E.g., consider $S = \{ac, eb, e\}$ with
information sets as above. This set $S$ is connected, even though $ac$
and $e$ are not connected. For a set $S$, we can construct an
undirected graph as follows: each $s \in S$ is a vertex, and there is
an edge between $s_1, s_2 \in S$ if they are connected. Notice that a
set $S$ is connected iff there is a path between any two vertices in
this graph. This interpretation allows to decompose $S$ uniquely as
$S = \biguplus_i S_i$ where each $S_i$ is a
\emph{maximal connected component} in the graph.

We can now give a definition of $\alr$ on sequences. This is defined
inductively as follows. The set $\{\epsilon\}$ has $\alr$. A disconnected set
$S$ with decomposition $\biguplus_i S_i$ has $\alr$ if each of
its connected components $S_i$ has $\alr$. A connected set $S$ has
$\alr$ if there exists an $I \in \Ii$ s.t.:
\begin{enumerate} \item every sequence in $S$ starts with $\act(I)$: i.e. each
  $w \in S$ is of the form $a u$ for some $a \in \act(I)$, and
\item the set of continuations of each $a \in \act(I)$ has $\alr$: for
  each $a \in \act(I)$, the set $\{ u \mid au \in S\}$ has $\alr$.
\end{enumerate}
Let us illustrate this definition on examples from
\cref{fig:recall-examples}. In \ref{fig-allexmp-pftrec}, we have the
leaf histories $H_1 = \{ac, ad, be, bf\}$. Notice that $H_1$ is a
connected set. There is an information set $I_1$ with
$\act(I_1) = \{a, b\}$ such that the first condition above is
true. For the second condition, let us look at the continuations:
$H_1^a = \{c, d\}$ and $H_1^b = \{e, f\}$. Both $H_1^a$ and $H_1^b$
are connected and satisfy the first condition. The second condition is
vacuously true for $H_1^a$ and $H_1^b$. This shows $H_1$ has $\alr$
(as expected, since on game-structures, $\pfr$ is a subclass of
$\alr$). Now, let us look at \cref{fig:shuffle-a}. The leaf
histories are given by $H_2 = \{ba,b\bar{a},\bar{b}a,\bar{b}\bar{a},ca,c\bar{a},\bar{c}a,\bar{c}\bar{a}\}$. Observe that
$H_2$ is connected. However, the first condition in the $\alr$
definition does not hold. So $H_2$ is not $\alr$. We can show that the recursive
definition of $\alr$-sets and that of $\alr$ game structures are equivalent.

\begin{restatable}{proposition}{alrSetStructEquiv}\label{prop:alr-set-struct-equiv}
  A game structure $\Tt$ has $\alr$ iff $\Hh(L_{\Tt})$ has $\alr$.
\end{restatable}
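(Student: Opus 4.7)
The plan is to prove both directions by structural induction on $\Tt$, with case analysis on the type of the root $r$.

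For the forward direction (if $\Tt$ has $\alr$, then $\Hh(L_{\Tt})$ has $\alr$), I would induct on $|\Tt|$. The base case of a single leaf is immediate, since $\Hh(L_{\Tt}) = \{\e\}$. If $r$ is a player node in information set $I$ with outgoing actions $a_1, \dots, a_k$ reaching subtrees $T_1, \dots, T_k$, then every leaf history begins with some $a_i \in \act(I)$, so $\Hh(L_{\Tt})$ is connected through $I$. Choosing $I$ satisfies condition~1 of the connected-$\alr$ definition, and the continuation after $a_i$ is exactly $\Hh(L_{T_i})$, which is $\alr$ by induction (each $T_i$ inherits $\alr$ from $\Tt$ because the $\alr$ condition on histories is preserved under taking subtrees).

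The delicate case is a chance root $r$ with children $c_1, \dots, c_k$, where $\Hh(L_{\Tt}) = \bigcup_i \Hh(L_{T_{c_i}})$. Each $\Hh(L_{T_{c_i}})$ is $\alr$ by induction, but the connected components of the union may merge pieces across subtrees. The key auxiliary fact I would prove is: whenever $s_1 \in \Hh(L_{T_{c_i}})$ and $s_2 \in \Hh(L_{T_{c_j}})$ with $i \ne j$ are connected through some info set $I^*$, the first actions of $s_1$ and $s_2$ lie in a common info set. This follows from $\alr$ of $\Tt$ applied to the $I^*$-nodes $u_1, u_2$ visited on the respective paths: since no player action precedes these nodes (only the chance root does), $\his(u_1)$ and $\his(u_2)$ either coincide or split at their very first action, and in either case the first player info sets visited in $T_{c_i}$ and $T_{c_j}$ agree. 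Iterating this observation, every connected component of $\Hh(L_{\Tt})$ admits a common ``first info set'' $I_0$; stripping $I_0$-actions then yields continuation sets on which the induction hypothesis applies again.

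For the backward direction, I would argue by contrapositive. Suppose $\Tt$ violates $\alr$ via $u, v \in I$ with $\his(u) \ne \his(v)$ not admitting the required splitting form, and let $s$ be the longest common prefix of $\his(u), \his(v)$. Pick leaves $t_u, t_v$ under $u, v$: their histories contain actions from $\act(I)$ and are therefore connected in $\Hh(L_{\Tt})$. Unfolding the recursive $\alr$ definition of $\Hh(L_{\Tt})$ along $s$ reduces us to a connected continuation set containing the suffixes of $\his(t_u), \his(t_v)$ after $s$. If the histories diverge at $a \in \act(I_a)$ and $b \in \act(I_b)$ with $I_a \ne I_b$, these suffixes start with $a$ and $b$ from different info sets, contradicting condition~1 of connected-$\alr$. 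If instead one of $\his(u), \his(v)$ is a proper prefix of the other (say $\his(v) = s t$ with $t \ne \e$), then the $t_v$-suffix starts with the first action of $t$; condition~1 forces this action to lie in $\act(I)$, but then a second $I$-node appears on the path to $v$, contradicting $\nam$ (which is implied by $\alr$). The main obstacle I expect is the chance-node subcase of the forward direction: rigorously pinning down the ``first info set'' common to all sequences in a connected component of $\bigcup_i \Hh(L_{T_{c_i}})$ requires a careful structural argument leveraging $\alr$ of $\Tt$ across chance branches, and packaging it as a helper lemma about gluing $\alr$-subtrees at a chance root should make the induction transparent.
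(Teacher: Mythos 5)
Your proposal is correct in substance and shares the paper's overall skeleton: both directions are handled by induction on the size of the structure with a case split on the type of the root, and both identify the chance root as the delicate case. Where you genuinely differ is in how that case is discharged. The paper first normalises so that chance nodes have no chance children, then either observes that subtrees rooted at children lying in \emph{distinct} information sets cannot share any information set (so their leaf-history sets are mutually disconnected and can be treated separately), or, when all children lie in one information set, swaps the first two levels to obtain an equal-history structure with a player root, reducing to the player-root case. You instead prove a gluing lemma --- any cross-branch connection forces the first actions into a common information set --- and conclude that each connected component of the union admits a common first information set $I_0$, after which you strip and recurse. The two mechanisms are interchangeable, and yours is arguably more self-contained; but the paper's swap trick has the advantage that the object fed back into the induction hypothesis is manifestly a single game structure. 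That is the one point you must still nail down: your continuation set after stripping an action $a \in \act(I_0)$ is a union of leaf-history sets of several subtrees (one per $I_0$-node in the component), so the induction hypothesis does not apply to it verbatim; you need to glue these subtrees under a fresh chance root, check that the glued structure is still $\alr$ (it is, because all the relevant histories in $\Tt$ share the first letter $a$) and strictly smaller. This is exactly the role the paper's level swap plays, and you correctly flag it as the obstacle to package into a helper lemma. Two further remarks: your contrapositive argument for the backward direction is sound and considerably more explicit than the paper's (which only says ``similarly''); however, the parenthetical ``contradicting $\nam$ (which is implied by $\alr$)'' is misplaced, since in that branch you assume $\Tt$ is \emph{not} $\alr$ --- what you actually invoke is the paper's standing restriction to non-absentminded structures, which is needed anyway for $\Hh(L_{\Tt})$ to be a set of sequences in the paper's sense.
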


\paragraph*{S-ALR on sequence sets.} We can also extend the
definition of $\salr$ to sequence sets.  $S$ has $\salr$ if $\exists$
another set $S'$ such that (i) $S'$ has $\alr$ and (ii) there is a
bijection $f : S \mapsto S'$ where $\forall s \in S,$ $f(s)$ is a
permutation of $s$. We call the set $S'$ an $\salr$ witness of $S$.

Exploiting the recursive definition of $\alr$ sets we will provide
recursive necessary and sufficient conditions for sequence sets to
have $\salr$.  Firstly, one can check for $\salr$ for a set $S$ by
checking $\salr$ for each individual maximal connected components.

\begin{restatable}{proposition}{salrDisc}\label{lem:aloss-shuffle-disconnected}
  Let $S$ be a disconnected set and $S = \biguplus_i S_i$ be the
  decomposition of $S$ into maximal connected components. Then $S$ has
  $\salr$ iff $\forall i,S_i$ has $\salr$ .
\end{restatable}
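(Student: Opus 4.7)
The plan is to pivot on a single invariant: for any sequence $s$ and any permutation $s'$ of $s$, the two sequences visit exactly the same multiset of actions, hence touch exactly the same information sets. Consequently, for any two sequences $s,t$ and any permutations $s',t'$ of them, $s,t$ are connected if and only if $s',t'$ are. This means the bijection $f$ witnessing $\salr$ always transports the connectedness graph over to the image, so maximal connected components are matched by $f$ setwise.

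For the forward direction, suppose $S$ has $\salr$ with witness $S'$ and bijection $f : S \to S'$. Using the invariance above, the image $f(S_i)$ is a maximal connected component of $S'$, and distinct $f(S_i), f(S_j)$ are mutually disconnected. Hence $S' = \biguplus_i f(S_i)$ is exactly the decomposition of $S'$ into maximal connected components. Since $S'$ has $\alr$, the recursive definition of $\alr$ on disconnected sets forces each $f(S_i)$ to have $\alr$ as well. Restricting $f$ to $S_i$ yields a permutation-preserving bijection $S_i \to f(S_i)$, so $f(S_i)$ is an $\salr$ witness of $S_i$.

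For the backward direction, suppose each $S_i$ has $\salr$ via witness $S'_i$ and bijection $f_i : S_i \to S'_i$. Define $S' := \biguplus_i S'_i$ and $f := \bigcup_i f_i$. Since the $S_i$ partition $S$ and each $f_i$ is a permutation-preserving bijection onto $S'_i$, the map $f$ is a permutation-preserving bijection $S \to S'$. It remains to check that $S'$ has $\alr$. For any $s' \in S'_i$ and $t' \in S'_j$ with $i \neq j$, $s'$ is a permutation of some $s \in S_i$ and $t'$ of some $t \in S_j$; since $S_i,S_j$ are different maximal connected components of $S$, no $s \in S_i$ is connected to any $t \in S_j$, and by the invariance this transfers to $s',t'$. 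Hence the $S'_i$ are mutually disconnected. Within $S'_i$, the set inherits $\alr$ by assumption. Applying the recursive rule for $\alr$ on disconnected sets (decomposing $S'$ into its maximal connected components, each of which is a union of whole $S'_j$'s, each having $\alr$) gives that $S'$ has $\alr$.

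The main technical point, and the only place one could slip, is the invariance of the connectedness relation under coordinate-wise permutation of sequences; once this is stated cleanly, both directions are routine applications of the recursive $\alr$ definition from the excerpt. There is no combinatorial obstruction because the bijections $f_i$ can simply be glued without fear of creating new cross-component connections.
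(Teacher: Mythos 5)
Your proof is correct and follows essentially the same route as the paper's: the witness of a disconnected set decomposes into witnesses of the components, and conversely the union of component witnesses is a witness. The only difference is that you make explicit the permutation-invariance of the connectedness relation (which the paper leaves implicit), which is a welcome clarification rather than a deviation.
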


\begin{restatable}{corollary}{salrDiscApply}\label{cor:alosshuffle-disconnected-apply}
  Let $S =\uplus_i S_i$ be the decomposition of $S$ where each $S_i$
  has $\salr$ witnessed by sets $S'_i$. Then $S$ has $\salr$ and
  $S' = \bigcup_i S'_i$ is an $\salr$ witness of $S$.
\end{restatable}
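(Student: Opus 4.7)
The plan is to extract the witness construction from Proposition~\ref{lem:aloss-shuffle-disconnected}: the existence of some $\salr$ witness for $S$ is immediate from that proposition (since each $S_i$ has $\salr$), so the only new content of the corollary is verifying that the specific set $S' = \bigcup_i S'_i$ is a valid witness.

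First I would check that the union $\bigcup_i S'_i$ is in fact disjoint and that the natural map $f : S \to S'$ defined by $f(s) = f_i(s)$ whenever $s \in S_i$ (where $f_i : S_i \to S'_i$ is the given bijection) is a well-defined bijection. The key observation is that each $S'_i$ uses actions from exactly the same information sets as $S_i$, because every element of $S'_i$ is a permutation of some element of $S_i$ and a permutation preserves the multiset of actions used. Since the $S_i$ are the maximal connected components of $S$, the information sets used in $S_i$ are disjoint from those used in $S_j$ for $i \neq j$; the same then holds for $S'_i$ versus $S'_j$, which forces $S'_i \cap S'_j = \emptyset$ (the only conceivable overlap is $\epsilon$, but $\epsilon$ can belong to at most one $S_i$ by the disjoint decomposition, hence to at most one $S'_i$). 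The map $f$ is then well-defined by disjointness of the $S_i$, injective by the just-established disjointness of the $S'_i$ together with injectivity of each $f_i$, and surjective by construction; and it sends every $s$ to a permutation of itself because each $f_i$ does.

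It remains to verify that $S'$ has $\alr$. By the disconnectedness of the $S'_i$ established above, the maximal connected components of $S'$ are precisely the maximal connected components of the individual $S'_i$. Each $S'_i$ has $\alr$ by hypothesis, which by the recursive definition of $\alr$ on sequence sets means that every one of its connected components has $\alr$; hence every connected component of $S'$ has $\alr$, which is exactly the criterion for the disconnected set $S'$ to have $\alr$. I expect the only real subtlety to be in the disconnectedness argument, where one must carefully track the empty-sequence edge case and rely on the fact that the $S_i$ form a disjoint decomposition of $S$ to rule out duplicates across components. Everything else follows immediately from Proposition~\ref{lem:aloss-shuffle-disconnected} and the recursive definition of $\alr$.
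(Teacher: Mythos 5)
Your proof is correct and follows essentially the same route as the paper, which states this fact as the one-sentence backward direction of Proposition~\ref{lem:aloss-shuffle-disconnected} (``the union of the $\salr$ witnesses for connected components of $S$ is an $\salr$ witness of $S$'') and leaves the verification implicit. Your write-up simply supplies the omitted details --- that permutations preserve the information sets used, hence the $S'_i$ remain mutually disconnected (modulo the $\epsilon$ edge case, which you handle correctly), so the glued bijection is well-defined and $S'$ has $\alr$ by the recursive definition for disconnected sets.
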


Secondly, for connected sets we will provide another recursive
condition to have $\salr$. We use another notation. For an action
$a \in A$, we write $S_a := \{u_1 u_2 \mid u_1 a u_2 \in S\}$, for the
set of sequences obtained by picking a sequence in $S$ that contains $a$,
and dropping the $a$ from it.

\begin{restatable}{proposition}{commonact}\label{prop:commonact}
  Let $S$ be a connected set. $S$ has $\salr$ iff there exists an
  information set $I \in \Ii$ such that
  \begin{enumerate}   \item every sequence $s \in S$ contains an action from $\act(I)$,
  \item and for all $a \in \act(I)$, the set $S_a$ has $\salr$.
  \end{enumerate}
\end{restatable}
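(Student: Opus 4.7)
The plan is to prove both directions by induction on sequence length, leveraging the recursive structure of $\alr$ on connected sets.

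For the forward direction, I would start with an $\salr$ witness $S'$ of $S$ and a bijection $f : S \to S'$ with $f(s)$ a permutation of $s$. Since a permutation preserves the multiset of actions and therefore the set of information sets represented in each sequence, the connectedness relation on $S$ is transported to $S'$; so $S'$ is connected. Combined with the fact that $S'$ has $\alr$, the recursive definition for connected $\alr$-sets yields an information set $I$ such that every $s' \in S'$ begins with some action of $\act(I)$, and each continuation set $S'_a = \{v : av \in S'\}$ has $\alr$. This $I$ is the witness we need. Condition (1) follows because any $s \in S$ is a permutation of $f(s)$ and therefore contains the same unique action from $\act(I)$ that begins $f(s)$. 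For condition (2), I claim $S'_a$ witnesses $\salr$ for $S_a$: given $s = u_1 a u_2 \in S$, the sequence property (at most one action per info set) forces $f(s)$ to begin with exactly that action $a$, so $f(s) = a v$ with $v \in S'_a$; moreover $v$ is a permutation of $u_1 u_2$. Extracting this correspondence gives the bijection $S_a \to S'_a$ required by $\salr$.

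For the backward direction, assume an $I$ satisfying (1) and (2). For each $a \in \act(I)$, fix an $\alr$ witness $S'_a$ for $S_a$ with permutation-bijection $f_a : S_a \to S'_a$. Define
\[
S' \;=\; \bigcup_{a \in \act(I)} \{\, a v : v \in S'_a \,\}.
\]
Then $S'$ is connected (it inherits connectedness through its $\act(I)$-prefixes, each of which is an action from the common info set $I$), every element starts with $\act(I)$, and the continuations are exactly the $\alr$ sets $S'_a$; so $S'$ has $\alr$ by the recursive definition. The bijection $f : S \to S'$ is built by writing each $s \in S$ uniquely as $u_1 a u_2$ (condition (1) plus the sequence property guarantees uniqueness of $a \in \act(I)$), and setting $f(s) = a \, f_a(u_1 u_2)$; this image is a permutation of $s$ by construction.

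The main obstacle will be verifying that $f$ is genuinely a bijection in the backward direction. The $\salr$ witness bijection $f_a$ only provides injectivity/surjectivity on $S_a$, whereas several sequences $s \in S$ may share the same $u_1 u_2$ after deleting their unique $a$. Reconciling cardinalities requires a careful argument showing that the map $s \mapsto (a, u_1 u_2)$ from $S$ to $\bigcup_a (\{a\} \times S_a)$ is itself a bijection—using the structure forced by connectedness and the sequence property—so that composing with the $f_a$'s produces a true bijection onto $S'$. Once this bookkeeping is in place, the two directions close cleanly and the $\alr$-check on $S'$ is immediate from the recursive definition.
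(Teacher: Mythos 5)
Your overall strategy mirrors the paper's: in the forward direction you transport the recursive $\alr$ structure of the witness $S'$ back to $S$, and in the backward direction you glue the witnesses $S'_a$ into $\bigcup_{a\in\act(I)} aS'_a$. The forward direction is essentially sound, but the step ``extracting this correspondence gives the bijection $S_a \to S'_a$'' silently assumes that deleting the common $\act(I)$-action is injective on the elements of $S$ containing $a$. That is true in this direction, but only because $S$ has $\salr$: one must first record the auxiliary fact that a set with $\salr$ cannot contain two distinct sequences that are permutations of one another (two distinct permutations of the same word, with at most one action per information set, can never lie together in an $\alr$ set -- trace the longest common prefix and note that the next letters must come from different information sets while the suffixes remain connected), and two sequences identified by deleting $a$ are exactly such a pair. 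Make that explicit.

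The obstacle you flag in the backward direction is not bookkeeping that ``the structure forced by connectedness and the sequence property'' will dissolve; it is a genuine obstruction, and the argument as you have set it up does not close. Take $\act(I_1)=\{a,\bar a\}$, $\act(I_2)=\{b,\bar b\}$, $\act(I_3)=\{c,\bar c\}$ and $S=\{bac,\,abc\}$. This is a connected set; every sequence contains $a\in\act(I_1)$; $S_a=\{bc\}$ and $S_{\bar a}=\emptyset$ both have $\salr$; yet $S$ itself does not have $\salr$, since any witness would have to contain two distinct permutations of $abc$, which cannot coexist in an $\alr$ set. Here $\lvert\bigcup_{a}aS'_a\rvert=1<2=\lvert S\rvert$, so the map $s\mapsto(a,u_1u_2)$ is not injective and no bijection onto your candidate witness exists. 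The missing ingredient is precisely the hypothesis that $S$ contains no two distinct mutually-permutable sequences (equivalently, that deletion of the shared $\act(I)$-action is injective on $S$); this comes for free in the forward direction but must be assumed, or separately propagated through the recursion $S\mapsto S_a$, before the backward direction can be run. Conditions (1) and (2) alone do not supply it, and such sets can arise from $\nam$ game structures (e.g.\ with a chance node branching to subtrees in which the same information sets are visited in different orders), so this is where the real work of the proof lies rather than a cardinality formality.
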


The above proposition can be naturally translated into an
algorithm. However, this does not yet ensure a polynomial-time
complexity. If there are two information sets $I_1$ and $I_2$ that
satisfy Condition 1 of Proposition~\ref{prop:commonact}, the order in
which we pick them might, in principle, create a difference and hence
one has to guess the right order. The next lemma says this does not
happen.

\begin{restatable}{lemma}{salrPtime}\label{lem:salr-common-act}
  Let $S$ be a connected set and let $I$ be an arbitrary information
  set such that every sequence $s \in S$ contains an action from
  $\act(I)$. Then: $S$ has $\salr$ iff for all $a \in \act(I)$, the
  set $S_a$ has $\salr$. Moreover, if for each $a$, $S'_a$ is $\salr$ witness of $S_a$, then $\bigcup_{a} aS_a'$ is an $\salr$ witness of $S$.  
\end{restatable}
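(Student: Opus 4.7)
The backward direction is essentially an instance of Proposition~\ref{prop:commonact}: assuming each $S_a$ has $\salr$ with witness $S'_a$ via a bijection $f_a: S_a \to S'_a$, I set $S^* = \bigcup_{a \in \act(I)} a S'_a$. Every sequence of $S^*$ starts with an action from $\act(I)$, so $S^*$ is connected; combined with the fact that each continuation $S'_a$ has $\alr$ by hypothesis, $S^*$ has $\alr$ by the recursive definition. To exhibit the $\salr$ bijection, partition $S = \biguplus_{a \in \act(I)} S^{[a]}$ by the unique action from $\act(I)$ in each sequence (unique by the ``one letter per $\act(I)$'' convention), and for $s = u_1 a u_2 \in S^{[a]}$ set $f(s) = a\, f_a(u_1 u_2)$, which is a permutation of $s$.

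For the forward direction, let $S'$ be an $\salr$ witness of $S$ with bijection $f : S \to S'$. Since connectedness depends only on the multiset of actions in each sequence, $S'$ is connected, and every $s' \in S'$ contains exactly one action from $\act(I)$. For each $a \in \act(I)$, set $S'_a = \{u_1 u_2 : u_1 a u_2 \in S'\}$. Splitting each $s \in S$ and $f(s) \in S'$ at its unique occurrence of $a$ gives a bijection $S_a \to S'_a$ sending each sequence to a permutation. It then suffices to prove $S'_a$ has $\alr$.

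I establish this via the following slightly stronger claim, proved by induction on the maximum length of sequences: \emph{if $T$ is any $\alr$ set in which every sequence contains exactly one action from some information set $J$, then $T_a$ has $\alr$ for every $a \in \act(J)$.} If $T$ is disconnected, write $T = \biguplus_i T_i$ into maximal connected components. Removing $a$ from each sequence can only delete the info set $J$ from the info sets appearing in that sequence, so components remain disconnected after removal; thus $T_a$ decomposes as $\biguplus_i (T_i)_a$, and each piece is $\alr$ by induction. If $T$ is connected, $\alr$ gives an information set $J'$ such that every sequence of $T$ starts with $\act(J')$ and each continuation $T^b = \{u : bu \in T\}$ (for $b \in \act(J')$) has $\alr$. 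If $J' = J$, then $T_a$ coincides with $T^a$ and is $\alr$ immediately. Otherwise $\act(J) \cap \act(J') = \emptyset$ (by the paper's running assumption), so $b \notin \act(J)$ and every sequence in each $T^b$ still contains a unique action from $\act(J)$; by induction each $(T^b)_a$ has $\alr$, and then $T_a = \bigcup_{b \in \act(J')} b\,(T^b)_a$ starts with $\act(J')$, is connected via $J'$, and has $\alr$ continuations, hence has $\alr$.

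The main obstacle is the last sub-case: the witness $S'$ may first branch on an information set $I'$ different from the given $I$, so $S'_a$ is not directly obtained from the top-level $\alr$ decomposition of $S'$. The induction lets us descend past the $I'$-branches, strip $a$ below them, and reassemble the $\alr$ structure at $I'$; this is precisely what justifies the statement that the choice of $I$ in Proposition~\ref{prop:commonact} is immaterial, so checking $\salr$ of $S_a$ for the single arbitrary $I$ given in the lemma suffices. Finally, tracing the construction through the backward direction shows that the witness thereby produced is indeed $\bigcup_{a \in \act(I)} a S'_a$, as claimed.
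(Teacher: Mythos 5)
Your proof is correct, and the backward direction coincides with the paper's (both just assemble $\bigcup_a aS'_a$ and invoke the recursive definition of $\alr$). The forward direction, however, takes a genuinely different route. The paper's proof is a short local argument: it invokes a pairwise characterisation of $\alr$ --- a set has $\alr$ iff every two-element subset satisfies the A-loss condition (common prefix followed either by two distinct actions of one information set, or by disconnected suffixes) --- and then observes that deleting a shared action $a$ from both members of a pair preserves that condition, so $S'_a$ inherits $\alr$ from $S'$. You instead stay entirely inside the recursive definition of $\alr$ on sequence sets and prove, by induction on maximum sequence length, that stripping the unique $\act(I)$-action commutes with the $\alr$ tree structure of the witness, descending through any information sets $I'\neq I$ on which the witness happens to branch first. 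Your route is longer and needs the case analysis (disconnected / connected with $J'=J$ / connected with $J'\neq J$), but it is self-contained: the paper's argument silently leans on the equivalence of the recursive and pairwise formulations of $\alr$, which is only really established elsewhere (in the spirit of Proposition~\ref{prop:alr-set-struct-equiv}), whereas yours needs no such bridge. Your explicit remark that $T_a=\bigcup_b b\,(T^b)_a$ remains connected via $J'$ and that disconnected components stay disconnected after deleting $a$ are exactly the points the paper's one-line "removing the common $a$ still satisfies the $\alr$ conditions" glosses over. One presentational nit: in your disconnected case the recursive call on each $T_i$ does not decrease the length parameter, so you should phrase the induction as strong induction on maximum length with the disconnected case reducing to the connected case at the same length (which then strictly decreases it); as written the well-foundedness is implicit but easily repaired.
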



\begin{algorithm}{}  
\caption{Compute shuffled A-loss recall}
\label{algo:alosshuffle}
\begin{algorithmic}[1]
  \STATE \textbf{Input} : $S$
  \STATE \textbf{Output}: $\salr$ witness $S'$ of $S$ if it exists
\IF{$S$ is connected} 
\IF{$\exists I$ such that every $s \in S$ contains an action from
  $\act(I)$ \label{algoShuffleLine:connected}} 
\FOR {$a \in \act(I)$}
\STATE $S_a' \gets $ $\salr$ witness of $S_a$
\ENDFOR
\RETURN $\bigcup_{a} aS_a'$ 
\ELSE \STATE \textbf{EXIT} and report $S$ does not have $\salr$ \label{algoShuffleLine:noshuffle}
\ENDIF
\ELSE
\STATE $S = \uplus S_i$ where each $S_i$ is connected 
\STATE $S'_i \gets $ $\salr$ witness of $S_i$ \label{algoShuffleLine:disconnected}
\RETURN $\cup_i S'_i$
\ENDIF
\end{algorithmic} 
\end{algorithm}


Proposition~\ref{prop:commonact} and Lemma~\ref{lem:salr-common-act} lead
to
Algorithm~\ref{algo:alosshuffle}.      This algorithm runs in time $\Oo(|S|)$, proving
Theorem~\ref{thm:shuffle-detection-ptime}.
Here is an example run of the algorithm on \cref{fig:shuffle-a}. 

We
have $S = \{ ba, b\bar{a}, \bar{b}a, \bar{b}\bar{a}, ca, c\bar{a}, \bar{c}
a, \bar{c} \bar{a} \}$. All the sequences contain an action from
$\act(I_1) = \{a, \bar{a}\}$. For the recursive call, we the set
$H = \{b, \bar{b}, c, \bar{c}\}$. This set is disconnected, with
components $H_1 = \{b, \bar{b}\}$ and $H_2 = \{c, \bar{c}\}$. For
$H_1, H_2$ the same sets are witness for $\salr$. For $H$, the witness
is the same set again. For $S$, the witness is
$aH \biguplus \bar{a}H = \{ab, a\bar{b}, \dots, \bar{a} c, \bar{a}
\bar{c} \}$. This set can be translated to the game
\cref{fig:shuffle-b}, with $\alr$.   Now, back to our
running example \cref{fig:match-penny-3-die-b}. Let $H_{01}, T_{01}$ and
$H_2, T_2$ be Alice's actions out of the information sets obtained
after die roll $0$ or $1$, and $2$ respectively. The induced
sequence set is $\{ H_{01} H, H_{01} T, T_{01} H, T_{01}T, H_2 H, T_2
T \}$, which can be seen to have $\salr$. In fact, for all values for
$n$, the game \textbf{II} has $\salr$.

Theorem~\ref{thm:shuffle-detection-ptime} and the fact that $\alr$
games can be solved in polynomial-time give us the following theorem,
and hence a new polynomial-time solvable class with imperfect
recall.

\begin{restatable}{theorem}{onepShufflePtime}\label{thm:1p-shuffle-ptime}
  The maxmin value in one-player games with $\salr$ can be computed in
  polynomial time.
\end{restatable}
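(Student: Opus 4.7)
The plan is to reduce a one-player $\salr$ game $G = (\Tt, \d, \Uu)$ to an equivalent $\alr$ game $G' = (\Tt', \d', \Uu')$ of polynomial size and then invoke the polynomial-time algorithm for one-player $\alr$ games~\cite{kaneko1995behavior}. First, by \cref{thm:shuffle-detection-ptime}, Algorithm~\ref{algo:alosshuffle} applied to $\Tt$ produces in time $O(|\Tt|)$ an $\salr$ witness $\Tt'$: an $\alr$ game structure together with a bijection $f : L_\Tt \to L_{\Tt'}$ such that $\his(f(t))$ is a permutation of $\his(t)$ for every leaf $t$.

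Next, I build a game on $\Tt'$ with exactly the same payoff polynomial as $G$. By \cref{lem:alr-salr-same-leaf-monomials} (and because monomials are invariant under reordering of the factors) the bijection $f$ preserves leaf monomials, that is $\mu(f(t)) = \mu(t)$. It therefore suffices to choose $\d'$ and $\Uu'$ so that $\prob_\chance'(f(t)) \cdot \Uu'(f(t)) = \prob_\chance(t) \cdot \Uu(t)$ for every leaf $t \in L_{\Tt}$. A convenient choice is to let $\d'$ be any valid chance distribution on the chance nodes of $\Tt'$ (for instance the uniform one) and absorb the mismatch into the payoffs by setting $\Uu'(f(t)) := \prob_\chance(t) \cdot \Uu(t)/\prob_\chance'(f(t))$. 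Each such rational has bit-length polynomial in the bit-length of $G$, so $G'$ has size polynomial in the size of $G$.

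Under our convention that actions identify information sets, the information sets of $\Tt$ and $\Tt'$ carry the same labels, so a behavioural strategy $\s$ for $\Max$ in $G$ is literally a behavioural strategy in $G'$. Writing $\s[\mu(\cdot)]$ for the substitution of $\s$ into a monomial, our choice of $\d',\Uu'$ gives
\[
\Ee_G(\s) \;=\; \sum_{t \in L_\Tt} \prob_\chance(t)\cdot \Uu(t) \cdot \s[\mu(t)]
\;=\; \sum_{t' \in L_{\Tt'}} \prob_\chance'(t')\cdot \Uu'(t') \cdot \s[\mu(t')] \;=\; \Ee_{G'}(\s),
\]
so $G$ and $G'$ have the same maxmin value, and applying the polynomial-time $\alr$ solver to $G'$ returns this value in polynomial time.

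The mildly delicate point is the second step: $\d'$ must be a genuine chance distribution on the possibly reshaped tree $\Tt'$, and the bit-lengths of the new payoffs must remain polynomial. Absorbing the ratio $\prob_\chance(t)/\prob_\chance'(f(t))$ into $\Uu'(f(t))$, rather than trying to transport $\d$ along $f$ node-by-node, handles both issues at once and makes the reduction work uniformly regardless of where $\d'$ places its chance nodes in $\Tt'$.
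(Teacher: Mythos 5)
Your proposal is correct and follows essentially the same route as the paper: the paper observes that an $\salr$ witness is a special case of an $\alr$-span and invokes the construction of \cref{prop:span-to-game}, which likewise fixes an arbitrary full-support $\d'$ (uniform) and absorbs the ratio $\prob_\chance(t)/\prob_\chance'(f(t))$ into the new payoffs so that the payoff polynomials coincide. Your explicit handling of the bijection $f$, the bit-length bound, and the identification of strategies across the two structures just spells out details the paper leaves implicit.
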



\section{Span}
\label{sec:span}


\begin{figure}
\centering

\begin{subfigure}{.3\columnwidth}
\tikzset{
triangle/.style = {regular polygon,regular polygon sides=3,draw,inner sep = 2},
circ/.style = {circle,fill=cyan!10,draw,inner sep = 3},
term/.style = {circle,draw,inner sep = 1.5,fill=black},
sq/.style = {rectangle,fill=gray!20, draw, inner sep = 4}
}

\begin{tikzpicture}[scale=0.85]
\tikzstyle{level 1}=[level distance=9mm,sibling distance = 22mm]
\tikzstyle{level 2}=[level distance=7mm,sibling distance=10mm]
\tikzstyle{level 3}=[level distance=7mm,sibling distance=6mm]
\tikzstyle{level 4}=[level distance=7mm,sibling distance=5mm]


\begin{scope}[->, >=stealth]
\node (0) [triangle] {}
child {
  node (00) [circ] {}
  child {
    node (000) [circ] {}
    child {
      node (0000) [term, label=below:{\scriptsize $z_1$}] {}
      edge from parent node [left] {\scriptsize $c$}
    }
    child {
      node (0001) [term, label=below:{\scriptsize $z_2$}] {}
      edge from parent node [right] {\scriptsize $\bar{c}$}
      }
    edge from parent node [left] {\scriptsize $a$}
  }
  child {
    node (001) [circ] {}
    child {
      node (0010) [term, label=below:{\scriptsize $z_3$}] {}
      edge from parent node [left] {\scriptsize $d$}
    }
    child {
      node (0011) [term, label=below:{\scriptsize $z_4$}] {}
      edge from parent node [right] {\scriptsize $\bar{d}$}
      }
    edge from parent node [right] {\scriptsize $\bar{a}$} 
  }
  edge from parent node [above] {\scriptsize $p_1$}
}
child {
  node (01) [circ] {}
   child {
     node (010) [circ] {}
     child {
      node (0100) [term, label=below:{\scriptsize $z_5$}] {}
      edge from parent node [left] {\scriptsize $c$}
    }
    child {
      node (0101) [term, label=below:{\scriptsize $z_6$}] {}
      edge from parent node [right] {\scriptsize $\bar{c}$}
      }
    edge from parent node [left] {\scriptsize $b$}
  }
  child {
    node (011) [circ] {}
    child {
      node (0110) [term, label=below:{\scriptsize $z_7$}] {}
      edge from parent node [left] {\scriptsize $d$}
    }
    child {
      node (0111) [term, label=below:{\scriptsize $z_8$}] {}
      edge from parent node [right] {\scriptsize $\bar{d}$}
      }
    edge from parent node [right] {\scriptsize $\bar{b}$} 
  }
  edge from parent node [above] {\scriptsize $p_2$}
}
;
 \node[fit=(00),dashed,thick,blue, draw, circle,inner sep=1pt] {};
  \node[fit=(01),dashed,thick,red, draw, circle,inner sep=1pt] {};
\end{scope}

\draw [dashed, thick, ForestGreen, in=150,out=30] (000) to (010);
\draw [dashed, thick, brown, in=150,out=30] (001) to (011);
\node [black] at (0,0.35) {\scriptsize $r$};
\node [black] at (-1.5,-0.55) {\scriptsize $u_1$};
\node [black] at (1.5, -0.55) {\scriptsize $u_2$};
\node [black] at (-2, -1.6) {\scriptsize $u_3$};
\node [black] at (-.25, -1.7) {\scriptsize $u_4$};

\node [black] at (0.25, -1.7) {\scriptsize $u_5$};
\node [black] at (2, -1.6) {\scriptsize $u_6$};


\node [blue] at (-1.7,-.9) {\scriptsize $I_1$};
\node [red] at (1.7,-.9) {\scriptsize $I_2$};
\node [ForestGreen] at (-0.2,-1) {\scriptsize $I_3$};
\node [brown] at (0.3,-1) {\scriptsize $I_4$};

\end{tikzpicture}
\caption{$\Max$ without $\salr$}
\label{fig:alossSpan-a}
\end{subfigure}

\begin{subfigure}{.6\columnwidth}
\tikzset{
triangle/.style = {regular polygon,regular polygon sides=3,draw,inner sep = 2},
circ/.style = {circle,fill=cyan!10,draw,inner sep = 3},
term/.style = {circle,draw,inner sep = 1.5,fill=black},
sq/.style = {rectangle,fill=gray!20, draw, inner sep = 4}
}

\begin{tikzpicture}[scale=0.8]
\tikzstyle{level 1}=[level distance=9mm,sibling distance = 50mm]
\tikzstyle{level 2}=[level distance=5mm,sibling distance=25mm]
\tikzstyle{level 3}=[level distance=9mm,sibling distance=12mm]
\tikzstyle{level 4}=[level distance=10mm,sibling distance=6mm]


\begin{scope}[->, >=stealth]
\node (0) [circ] {}
child {
  node (00) [circ] {}
  child {
  node (000) [triangle] {}
   child {
     node (0000) [circ] {}
     child {
      node (00000) [term, label=below:{\scriptsize $w_1$}] {}
      edge from parent node [left] {\scriptsize $a$}
    }
    child {
      node (00001) [term, label=below:{\scriptsize $w_2$}] {}
      edge from parent node [right] {\scriptsize $\bar{a}$}
      }
    edge from parent node [left,pos=0.2] {\scriptsize $\frac{1}{2}$}
  }
  child {
    node (0001) [circ] {}
    child {
      node (00010) [term, label=below:{\scriptsize $w_3$}] {}
      edge from parent node [left] {\scriptsize $b$}
    }
    child {
      node (00011) [term, label=below:{\scriptsize $w_4$}] {}
      edge from parent node [right] {\scriptsize $\bar{b}$}
      }
    edge from parent node [right,pos=0.2] {\scriptsize $\frac{1}{2}$} 
     }
  edge from parent node [above] {\scriptsize $d$}
  }
  child {
  node (001) [triangle] {}
   child {
     node (0010) [circ] {}
     child {
      node (00100) [term, label=below:{\scriptsize $w_5$}] {}
      edge from parent node [left] {\scriptsize $a$}
    }
    child {
      node (00101) [term, label=below:{\scriptsize $w_6$}] {}
      edge from parent node [right] {\scriptsize $\bar{a}$}
      }
    edge from parent node [left,pos=0.2] {\scriptsize $\frac{1}{2}$}
  }
  child {
    node (0011) [circ] {}
    child {
      node (00110) [term, label=below:{\scriptsize $w_7$}] {}
      edge from parent node [left] {\scriptsize $b$}
    }
    child {
      node (00111) [term, label=below:{\scriptsize $w_8$}] {}
      edge from parent node [right] {\scriptsize $\bar{b}$}
      }
    edge from parent node [right,pos=0.2] {\scriptsize $\frac{1}{2}$} 
     }
  edge from parent node [above] {\scriptsize $\bar{d}$}
  }
  edge from parent node [above] {\scriptsize $c$}
  }
child {
  node (01) [circ] {}
  child {
  node (010) [triangle] {}
   child {
     node (0100) [circ] {}
     child {
      node (01000) [term, label=below:{\scriptsize $w_9$}] {}
      edge from parent node [left] {\scriptsize $a$}
    }
    child {
      node (01001) [term, label=below:{\scriptsize $w_{10}$}] {}
      edge from parent node [right] {\scriptsize $\bar{a}$}
      }
    edge from parent node [left,pos=0.2] {\scriptsize $\frac{1}{2}$}
  }
  child {
    node (0101) [circ] {}
    child {
      node (01010) [term, label=below:{\scriptsize $w_{11}$}] {}
      edge from parent node [left] {\scriptsize $b$}
    }
    child {
      node (01011) [term, label=below:{\scriptsize $w_{12}$}] {}
      edge from parent node [right] {\scriptsize $\bar{b}$}
      }
    edge from parent node [right,pos=0.2] {\scriptsize $\frac{1}{2}$} 
  }
  edge from parent node [above] {\scriptsize $d$}
}
  child {
  node (011) [triangle] {}
   child {
     node (0110) [circ] {}
     child {
      node (01100) [term, label=below:{\scriptsize $w_{13}$}] {}
      edge from parent node [left] {\scriptsize $a$}
    }
    child {
      node (01101) [term, label=below:{\scriptsize $w_{14}$}] {}
      edge from parent node [right] {\scriptsize $\bar{a}$}
      }
    edge from parent node [left,pos=0.2] {\scriptsize $\frac{1}{2}$}
  }
  child {
    node (0111) [circ] {}
    child {
      node (01110) [term, label=below:{\scriptsize $w_{15}$}] {}
      edge from parent node [left] {\scriptsize $b$}
    }
    child {
      node (01111) [term, label=below:{\scriptsize $w_{16}$}] {}
      edge from parent node [right] {\scriptsize $\bar{b}$}
      }
    edge from parent node [right,pos=0.2,pos=0.2] {\scriptsize $\frac{1}{2}$} 
     }
  edge from parent node [above] {\scriptsize $\bar{d}$}
  }
  edge from parent node [above] {\scriptsize $\bar{c}$}
}
;
\end{scope}


\node[fit=(0),dashed,thick,ForestGreen, draw, circle,inner sep=1pt] {};
\draw [dashed, thick, brown, in=165,out=15] (00) to (01);
\draw [dashed, thick, blue, in=150,out=30] (0000) to (0010);
\draw [dashed, thick, blue, in=150,out=30] (0010) to (0100);
\draw [dashed, thick, blue, in=150,out=30] (0100) to (0110);
\draw [dashed, thick, red, in=150,out=30] (0001) to (0011);
\draw [dashed, thick, red, in=150,out=30] (0101) to (0111);
\draw [dashed, thick, red, in=150,out=30] (0011) to (0101);

%

\node [ForestGreen] at (0.55,0.1) {\scriptsize $I_3$};
\node [brown] at (0,-.8) {\scriptsize $I_4$};
\node [blue] at (-2.8,-1.6) {\scriptsize $I_1$};
\node [red] at (2.8,-1.6) {\scriptsize $I_2$};


\end{tikzpicture}
\caption{$\Max$ with $\alr$}
\label{fig:alossSpan-b}
\end{subfigure}
\caption{Equivalent $\alr$ game using $\alr$-span for game without $\salr$}
\label{fig:span}
\end{figure}
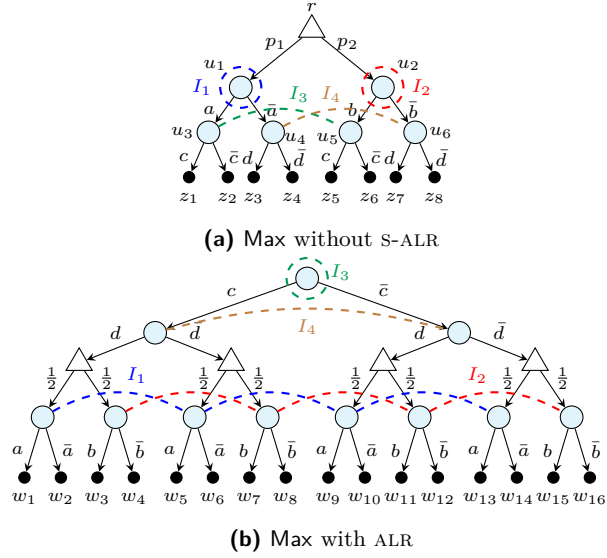

We move on to another way of simplifying game-structures by generalizing $\salr$. The
game-structure in \cref{fig:alossSpan-a} (call it $\Tt_1$) is an equivalent version of game \textbf{III} (\cref{fig:match-penny-3-die-c}). It neither has
$\pfr$, nor $\alr$. Using
Algorithm~\ref{algo:alosshuffle}, we can show that it does not have $\salr$ either. Now, consider the game-structure in
\cref{fig:alossSpan-b} (call it
$\Tt'_1$). It has $\alr$. Each leaf monomial of $\Tt_1$ can be written as 
a linear combination of the leaf monomials of $\Tt'_1$: e.g.,
the leaf monomial $x_ax_{\bar{c}}$ of $\Tt_1$ is equal to
$x_{\bar{c}}x_d x_a + x_{\bar{c}} x_{\bar{d}} x_a$, a combination of leaf
monomials of $\Tt'_1$.  The game-structure $\Tt_1$ is said to be
\emph{spanned by} $\Tt'_1$. This property allows to solve games
derived from the structure $\Tt_1$ by converting them into a game on
$\Tt'_1$ with a suitably designed utility function so that both games
induce the same payoff polynomial, and then solving the resulting
A-loss recall game. This is 
illustrated in \cref{fig:alossSpan-a} and \cref{fig:alossSpan-b}. Results in this section:
\begin{itemize}\item We show that every \nam~ imperfect recall game structure is
  spanned by an A-loss recall structure
  (\cref{thm:existence-alr-span}).  The caveat is that the
  smallest A-loss recall span may be of exponential size: we exhibit a
  family of game structures where this happens
  (\cref{thm:lower-bound}).

\item We provide an algorithm to compute an A-loss recall span of
  smallest size. We show that the associated decision problem is in $\NP$ (\cref{thm:span-NP}). We also identify classes of games with `small' $\alr$-span using a new parameter (\cref{cor:effic-solv-class}). 
\end{itemize}

We will now formally present $\alr$-span.
Similar to last section, we will keep our discussion to one-player
games and later discuss extensions to two players in \cref{sec:two-player}.

Since we will deal with polynomials formed using leaf monomials, we
need the notion of reducing one polynomial to
another.   Recall that since these variables are denoting
behavioral strategies, every valuation to the variables satisfies the
strategy constraints.  We say polynomial $f_1$ reduces to $f_2$ under
strategy constraints if we can get $f_2$ by applying finitely many
substitutions in $f_1$ of the form $\sum\limits_{a \in Act(I)}
x_{a} = 1$. E.g., the polynomial
$x_a x_d x_c ~+~ x_a x_d x_{\bar{c}} ~+~x_{\bar{a}}x_d x_c ~+~ x_{\bar{a}}
x_d x_{\bar{c}}~$ reduces to $x_d$ by applying the substitutions
$x_a + x_{\bar{a}} =1$ and $x_c + x_{\bar{c}} =1$. Observe that, when
$f_1$ reduces $f_2$, they are essentially the same polynomials over
the space with strategy constraints, i.e. they evaluate to the same
value under every assignment of values satisfying the strategy
constraints.   \begin{definition}[$\alr$-span]
  Let $\Tt$ be a game structure with set of leaf monomials
  $X(\Tt)$. We call a structure $\Tt'$ an $\alr$-span of $\Tt$ if
  \begin{itemize}  \item $\Tt'$ has $\alr$.
  \item each monomial in $X(\Tt)$ can be generated by monomials in
    $X(\Tt')$ by linear combinations i.e.  $\forall \mu \in X(\Tt)$
    there exist coefficients
    $ \{c^{\mu}_{\mu'}\}_{\mu' \in X(\Tt')} \in \Real^{|X(\Tt')|}$ such that the
    polynomial $\sum\limits_{\mu' \in X(\Tt')}c^{\mu}_{\mu'}\mu'$ reduces to
    $\mu$ under strategy constraints.
  \end{itemize}

\end{definition}
The game structure $\Tt'_1$ in \cref{fig:alossSpan-b} is an
$\alr$-span of $\Tt_1$ in \cref{fig:alossSpan-a}. E.g., for the
monomial $x_{\bar{b}}x_{\bar{d}}$ in $\Tt_1$, the linear combination
$x_cx_{\bar{d}} x_{\bar{b}} + x_{\bar{c}} x_{\bar{d}} x_{\bar{b}}$
reduces to it by substituting $x_d + x_{\bar{d}} = 1$. In fact,
observe that for any monomial $\mu$ in $X(\Tt_1)$, the sum of the two
monomials in $X(\Tt'_1)$ that contain the actions in $\mu$, reduces to
$\mu$.

The next proposition states that we can use $\alr$-span $\Tt_1'$ to solve
games on structure $\Tt_1$ (see \cref{fig:alossSpan-a} and
\cref{fig:alossSpan-b}). We can assign suitable payoffs $w_i$s in
terms of $z_i$s and $p_i$s
to get an equivalent game. Let $t_i$ be the
leaf that has payoff $w_i$.The payoff $w_i$ is the sum of the quantities
$\frac{\prob_{\chance}(t)c^{\mu(t)}_{\mu(t_i)}\Uu(t)}{\prob_{\chance}(t_i)}$ for all $t$ in
$\Tt_1$, in which $\mu(t_i)$ contributed to generate $\mu(t)$. E.g., $t_1$, with polynomial $x_c x_d x_a$ contributes in generating
only $x_ax_c$ (for the leaf corresponding to $z_1$). Hence $w_1 =
2p_1z_1$. Similarly, $w_5 = 2p_1z_1$. In the payoff polynomial of the second game, since $x_cx_dx_a + x_c\bar{x_d}x_a = x_ax_c$, we will have the term $p_1z_1 x_ax_c$ in the payoff polynomial. This way, we will get the same payoff polynomial as original game, thus reducing the original game to a game on $\Tt'_1$. 

\begin{restatable}{proposition}{spanToGame}\label{prop:span-to-game}
  Let $\Tt$ be a game structure and $\Tt'$ an $\alr$-span of $\Tt$.
  Then for every game $G = (\Tt, \d, \Uu)$ on $\Tt$, there exists a game
  $G' = (\Tt', \d', \Uu')$ on $\Tt'$ such that solving $G$ can be
  reduced to solving $G'$. \end{restatable}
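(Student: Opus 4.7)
The plan is to construct $G' = (\Tt', \d', \Uu')$ directly from $G$ so that its payoff polynomial equals $P_G$ modulo the strategy constraints, and then argue that equality of payoff polynomials implies equality of maxmin values plus a constructive translation of optimal strategies.

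\textbf{Construction of $G'$.} Start from the payoff polynomial
\[
P_G \;=\; \sum_{t \in L_\Tt} \prob_\chance(t)\,\Uu(t)\,\mu(t).
\]
For each $t \in L_\Tt$, the $\alr$-span assumption gives coefficients $\{c^{\mu(t)}_{\mu'}\}_{\mu' \in X(\Tt')}$ such that $\sum_{\mu'} c^{\mu(t)}_{\mu'}\,\mu'$ reduces to $\mu(t)$ under strategy constraints. Substituting and regrouping by leaves $t' \in L_{\Tt'}$,
\[
P_G \;\equiv\; \sum_{t' \in L_{\Tt'}} \Big(\sum_{t \in L_\Tt} \prob_\chance(t)\,\Uu(t)\,c^{\mu(t)}_{\mu(t')}\Big)\,\mu(t'),
\]
the equivalence being modulo the strategy constraints of $\Tt'$. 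Choose any chance distribution $\d'$ on $\Tt'$ assigning positive probability to every outgoing action of each $\chance$ node (e.g.\ uniform), producing $\prob'_\chance(t')>0$ for every leaf $t'$. Define
\[
\Uu'(t') \;:=\; \frac{1}{\prob'_\chance(t')}\sum_{t \in L_\Tt} \prob_\chance(t)\,\Uu(t)\,c^{\mu(t)}_{\mu(t')}.
\]
Then the payoff polynomial $P_{G'}$ of $G' = (\Tt',\d',\Uu')$ coincides with the right-hand side above, hence $P_{G'} \equiv P_G$ modulo strategy constraints.

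\textbf{Translating strategies.} Recall from Section~\ref{sec:background} that actions identify information sets, so a behavioral strategy is specified by the probabilities attached to each action (consistent at each information set). Since every monomial of $X(\Tt)$ is a linear combination of monomials of $X(\Tt')$, every action of $\Tt$ occurs in $\Tt'$, i.e.\ $A_\Tt \subseteq A_{\Tt'}$. Given a behavioural strategy $\s$ of $G$, extend it arbitrarily (respecting the strategy constraints) to the information sets of $\Tt'$ not present in $\Tt$, producing a strategy $\s'$ of $G'$. Conversely every $\s'$ restricts to a strategy $\s$ of $G$ by forgetting the extra actions.

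\textbf{Equality of maxmin values.} Because the reduction $P_{G'} \to P_G$ uses only substitutions of the form $\sum_{a \in \act(I)} x_a = 1$ which hold for every valid behavioural strategy, evaluating $P_{G'}$ at $\s'$ yields the same number as evaluating $P_G$ at $\s$, namely $\Ee_{G'}(\s') = \Ee_G(\s)$. Extra actions in $A_{\Tt'} \setminus A_\Tt$ do not affect this value, confirming that arbitrary extension is harmless. Taking maxima on both sides,
\[
\max_{\s} \Ee_G(\s) \;=\; \max_{\s'} \Ee_{G'}(\s'),
\]
and any maxmin strategy $\s'$ of $G'$ yields a maxmin strategy of $G$ by restriction. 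Solving $G$ thus reduces to solving $G'$, which is an $\alr$ game by assumption.

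\textbf{Main obstacle.} The delicate point is verifying that the polynomial identity $P_{G'} \equiv P_G$ modulo strategy constraints does translate into pointwise equality of expected payoffs at every behavioural strategy. This works precisely because behavioural strategies are the valuations that satisfy these constraints, and because the extra $\Tt'$-actions (outside $A_\Tt$) are exactly those eliminated by the reductions used in the span identity, so their chosen probabilities are irrelevant. Everything else (choosing $\d'$ positive, well-definedness of $\Uu'$) is routine once this is in place.
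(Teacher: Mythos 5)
Your proposal is correct and takes essentially the same route as the paper's proof: the same full-support (uniform) choice of $\d'$, the same formula $\Uu'(t') = \frac{1}{\prob_{\chance}(t')}\sum_{t}\prob_{\chance}(t)\,\Uu(t)\,c^{t}_{t'}$, and the same regrouping argument showing the two payoff polynomials coincide modulo strategy constraints. The only difference is that you make explicit the strategy-translation and value-equality step that the paper leaves implicit.
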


\paragraph*{Existence of ALR-spans.} From the definition, one can
see that when $\Tt$ has $\salr$, an $\salr$ witness is
also an $\alr$-span of $\Tt$. Surprisingly, we can show that every
$\nam$ structure has $\alr$-span.

\begin{restatable}{theorem}{alrspan}\label{thm:existence-alr-span}
  Every $\nam$-game structure $\Tt$ has an $\alr$-span.
\end{restatable}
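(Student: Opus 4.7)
The plan is to construct a ``universal'' $\alr$ structure that works for any $\nam$ game structure $\Tt$. Fix an arbitrary total order $I_1, I_2, \ldots, I_m$ on the information sets of $\Tt$, and define $\Tt'$ to be the complete decision tree of depth $m$ in which every node at depth $i-1$ (with the root at depth $0$) lies in the information set $I_i$ and has exactly $|\act(I_i)|$ outgoing edges, one per action of $\act(I_i)$. All leaves sit at depth $m$, so every leaf of $\Tt'$ is reached by a path that uses each $\act(I_j)$ exactly once, in the chosen order. Consequently, $X(\Tt') = \{x_{a_1} x_{a_2} \cdots x_{a_m} : a_i \in \act(I_i)\}$.

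Next I would verify the two conditions of an $\alr$-span. For the $\alr$ property of $\Tt'$: any two distinct nodes lying in the same information set $I_{i+1}$ of $\Tt'$ have distinct prefixes, so their histories take the form $s\,a\,s_1$ and $s\,b\,s_2$ for the first position $j$ at which they disagree, with $a, b \in \act(I_j)$ and $a \neq b$, which matches the A-loss recall definition exactly. For the spanning property, fix a leaf $t$ of $\Tt$ and let $J(t) \subseteq \Ii$ be the set of information sets visited on the path from the root to $t$. Using $\nam$, the variables of $\mu(t) = \prod_{a \in \his(t)} x_a$ correspond bijectively with $J(t)$. Taking $S_t \subseteq X(\Tt')$ to be the set of leaf monomials of $\Tt'$ that contain every variable of $\mu(t)$, expansion gives
\[
\sum_{\mu' \in S_t} \mu' \;=\; \mu(t) \cdot \prod_{I \notin J(t)} \Bigl(\sum_{a \in \act(I)} x_{a}\Bigr),
\]
and applying the strategy constraints $\sum_{a \in \act(I)} x_a = 1$ for each $I \notin J(t)$ reduces the right-hand side to $\mu(t)$. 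Hence each leaf monomial of $\Tt$ is realised as a $0/1$ linear combination of leaf monomials of $\Tt'$, as required.

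The construction is very direct, so the main ``hard part'' is conceptual: one must recognise that $\nam$ is precisely what allows padding any partial product $\mu(t)$ by factors $\sum_{a \in \act(I)} x_a = 1$ over the untouched information sets, turning $\mu(t)$ into a sum of complete products that are legitimately produced by the $\alr$ tree $\Tt'$. Without $\nam$, a leaf history could contain two distinct actions from the same $\act(I)$, producing a monomial like $x_a x_b$ with $a, b \in \act(I)$ that no linear combination of complete one-action-per-information-set products can reproduce modulo strategy constraints. A by-product of the construction is that $\Tt'$ has size $\prod_i |\act(I_i)|$, which is in general exponential, consistent with the lower bound announced in the subsequent theorem.
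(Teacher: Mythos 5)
Your construction is exactly the one the paper uses: a complete tree with one level per information set, all nodes of a level merged into a single information set, so that the leaf monomials of $\Tt'$ are all products with one action per information set, and each $\mu(t)$ is recovered as $\mu(t)\cdot\prod_{I\notin J(t)}\bigl(\sum_{a\in\act(I)}x_a\bigr)$ reduced via the strategy constraints. Both the $\alr$ verification and the spanning argument match the paper's proof, so the proposal is correct and essentially identical in approach.
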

\begin{proof}[Proof Sketch]
  We can explicitly provide an $\alr$-span $\Tt'$ of $\Tt$.   Let $\Ii$ be the set of information sets in $\Tt$. The structure
  $\Tt'$ has $|\Ii|$ levels of player nodes corresponding to each $I
  \in \Ii$. All nodes in a level are placed in
  one information set. Therefore the leaves of $\Tt'$ are all possible
  monomials using the information sets $\Ii$. Moreover, $\Tt'$ has $\alr$. To generate a monomial $\mu \in X(\Tt)$
  one can combine the monomials of all paths containing actions from
  $\mu$ in $X(\Tt')$.
\end{proof}

\paragraph*{Minimal ALR-span} In order to take advantage of
Proposition~\ref{prop:span-to-game}, one would need to find small
$\alr$-spans. We observe that the $\alr$-span obtained in the proof of
Theorem~\ref{thm:existence-alr-span} has exponential size.  We will
now delve into finding $\alr$-spans of smallest size : a \emph{minimal
  $\alr$-span}.  First we will list some key observations concerning
minimal $\alr$-spans which will lead to an algorithm for computing
one. Then we will show that the exponential blowup in size of
$\alr$-spans is unavoidable in general by exhibiting a class of games
with minimal $\alr$-span of exponential size. Since $\alr$-games are solvable in polynomial time, this aligns with the fact that the maxmin problem for general one-player
  $\nam$-games is $\NP$-hard.

Similar to Section~\ref{sec:shuffled-loss-recall}, we will work
directly with sequence sets. The notions related to span are extended
to sequence sets in a natural manner, i.e. when $\Tt'$ is an
$\alr$-span of $\Tt$, $\Hh(L_{\Tt'})$ is an $\alr$-span of
$\Hh(L_{\Tt})$  

\begin{restatable}{proposition}{minSpanDisc}\label{prop:min-span-disconnected}
  For a sequence set $S$, let $S = \biguplus_{i}S_i$ be the
  decomposition of $S$ into maximal connected components. Let $S'_i$
  be a minimal $\alr$-span of $S_i$. Then
  $S' = \biguplus_{i}S'_i$ is a minimal $\alr$-span of $S$
\end{restatable}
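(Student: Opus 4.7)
The plan is to proceed in two stages, mirroring the structure of \cref{lem:aloss-shuffle-disconnected} and \cref{cor:alosshuffle-disconnected-apply} while additionally tracking sizes.

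In the first stage, I would establish that $S' = \biguplus_i S'_i$ is an $\alr$-span of $S$. The crux is to argue that each minimal $\alr$-span $S'_i$ uses only actions from information sets in $\Ii(S_i)$ (the info sets appearing in $S_i$). This follows from a sibling-collapsing argument: if $\mu' = s_1 a s_2 \in S'_i$ with $a \in \act(I)$ for some $I \notin \Ii(S_i)$, then analyzing the linear combination reducing to some $\mu \in S_i$ (which contains no $I$-actions) forces all the $I$-siblings $\{s_1 a' s_2 : a' \in \act(I)\}$ to also lie in $S'_i$ with equal coefficients. These siblings collectively reduce to $s_1 s_2$, so replacing them by $\{s_1 s_2\}$ produces a strictly smaller spanning set; provided this replacement preserves the $\alr$ property, we obtain a contradiction to minimality. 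Since the $S_i$'s use pairwise disjoint information sets (being maximal connected components), so do the $S'_i$'s, and therefore the maximal connected components of $S'$ are exactly the $S'_i$'s. By the recursive definition of $\alr$, $S'$ is $\alr$ iff each $S'_i$ is, which holds by hypothesis. Each $\mu \in S$ lies in some $S_i$ and is spanned by $S'_i \subseteq S'$, so $S'$ is an $\alr$-span of $S$.

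In the second stage, I would establish minimality. Let $T'$ be an arbitrary $\alr$-span of $S$. Applying the same sibling-collapsing analysis, one may assume every monomial of $T'$ uses information sets from within a single $\Ii(S_i)$. Partition $T' = \biguplus_i T'_i$ accordingly. Each $T'_i$ is an $\alr$-span of $S_i$: it inherits $\alr$ from being a union of connected components of $T'$ all lying within $\Ii(S_i)$, and it spans $S_i$ because monomials in $T'_j$ for $j \neq i$ use disjoint information sets and so cannot contribute (after reduction) to any $\mu \in S_i$. By the minimality of $S'_i$, $|T'_i| \geq |S'_i|$. Summing, $|T'| = \sum_i |T'_i| \geq \sum_i |S'_i| = |S'|$, as required.

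The principal obstacle lies in making the sibling-collapsing step rigorous. Specifically, one must show that replacing a complete sibling group in an $\alr$-set by its shorter collapsed sequence preserves the $\alr$ property. This is not automatic, since the collapse may disrupt the common-first-information-set condition at some level of the recursive $\alr$ decomposition tree. The argument will require a careful induction on the $\alr$ structure, verifying that the modification can be propagated consistently through the subtree affected by the collapse, possibly by iteratively re-applying the collapse at each recursive level so that the resulting set remains $\alr$ at every stage.
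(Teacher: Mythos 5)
Your overall architecture (show $S'$ spans, then compare against an arbitrary competitor component by component and sum sizes) matches the paper's, but your execution of the second stage rests on an unproved step that the paper deliberately avoids, and you yourself flag it as unresolved: the claim that a complete sibling group $\{s_1 a' s_2 : a' \in \act(I)\}$ inside an $\alr$ set can be collapsed to $s_1 s_2$ while preserving $\alr$. As stated, this is a genuine gap, and it infects both stages of your argument (you use it to show the $S'_i$ are pairwise disconnected in stage one, and to normalise the competitor $T'$ in stage two). There is also a secondary unproved assertion inside the collapsing step itself, namely that a linear combination reducing to a monomial with no $\act(I)$-action must contain \emph{all} $I$-siblings with equal coefficients; reductions under strategy constraints can be applied in many orders, so this needs an argument, not just an appeal to how one particular substitution would go.

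The paper's proof sidesteps the collapse entirely. Given a competitor span $\hat{S}$ with $|\hat{S}| < |S'|$, it does not rewrite $\hat{S}$; for each $i$ it simply \emph{selects} a minimal subset $\hat{S}_i \subseteq \hat{S}$ that spans $S_i$. The point that makes this work is the pairwise characterisation of $\alr$ used in the proof of \cref{lem:salr-common-act}: a set has $\alr$ iff every two-element subset does, so \emph{every} subset of an $\alr$ set is again $\alr$, and each $\hat{S}_i$ is therefore an $\alr$-span of $S_i$ with no surgery needed. Pairwise disconnectedness (hence disjointness) of the $\hat{S}_i$ then yields $\sum_i |\hat{S}_i| \le |\hat{S}| < \sum_i |S'_i|$, so some $|\hat{S}_i| < |S'_i|$, contradicting minimality. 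If you want to salvage your route, the cleanest fix is to abandon the collapse and adopt this subset-selection device; what remains to justify in either version is only the disconnectedness of the selected pieces (the analogue of \cref{lem:span-first-I}'s connectedness argument), which is a much lighter burden than proving $\alr$-preservation under sibling collapsing.
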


We can show that a minimal $\alr$-span of a connected set is also
connected. By definition, the $\alr$-span is a set of sequences that
has $\alr$, and by the first point in the definition of $\alr$ on
sequence sets, we deduce that there is an $I$ such that all sequences
in the minimal span start with $\act(I)$.

\begin{restatable}{lemma}{spanFirstInfo}\label{lem:span-first-I}
  Let $S$ be a connected set and $S'$ be a minimal $\alr$-span of
  $S$. Then $\exists I \in \Ii$, such that all sequences in $S'$ start
  with $\act(I)$.
\end{restatable}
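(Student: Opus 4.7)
The plan is to show that a minimal $\alr$-span of a connected set must itself be connected, after which the conclusion follows immediately from the recursive definition of $\alr$ on connected sequence sets (the clause that every sequence in such a set starts with some $\act(I)$).

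First I would reduce to proving that $S'$ is connected. If $S'$ is connected and has $\alr$, then by the connected-set clause of the definition of $\alr$ on sequence sets, there is an information set $I$ such that every $s \in S'$ is of the form $a u$ with $a \in \act(I)$, which is exactly the conclusion. So the whole content of the lemma is the connectedness of $S'$.

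To prove $S'$ is connected, I would argue by contradiction. Suppose $S' = S'_1 \uplus S'_2$ is a non-trivial decomposition into two sets whose action-sets lie in disjoint families of information sets $\Ii_1, \Ii_2$. Let $\Ii_1, \Ii_2$ denote these disjoint families. The key claim is that for every sequence $s \in S$ with monomial $\mu_s$, the information sets used by the actions in $\mu_s$ lie entirely within $\Ii_1$ or entirely within $\Ii_2$. To see this, recall that $\mu_s$ is obtained as a reduction of $\sum c_{\mu'_1}\mu'_1 + \sum c_{\mu'_2}\mu'_2$ under the strategy constraints $\sum_{a \in \act(I)} x_a = 1$; equivalently, $\mu_s$ and this sum must agree as functions on the product of simplices determined by the strategy constraints. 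The sum splits as $f_1 + f_2$ where $f_1$ depends only on $\Ii_1$-variables and $f_2$ only on $\Ii_2$-variables. If $\mu_s$ involved variables $x_a$ with $a \in \act(I_a)$, $I_a \in \Ii_1$, and $x_c$ with $c \in \act(I_c)$, $I_c \in \Ii_2$, then fixing the $\Ii_2$-coordinates to two different values (say $x_c = 1$ vs.\ $x_c = 0$, picking another action of $\act(I_c)$ to take the remaining mass) would give $f_1$ as equal to two different non-constant-affine-in-$x_a$ functions on the $\Ii_1$-simplex, a contradiction since $f_1$ does not depend on $\Ii_2$-variables. Hence every $\mu_s$ is supported either entirely in $\Ii_1$ or entirely in $\Ii_2$.

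Partition $S$ accordingly as $S = S^{(1)} \uplus S^{(2)}$. Since $\Ii_1$ and $\Ii_2$ are disjoint families of information sets, no sequence of $S^{(1)}$ is connected to any sequence of $S^{(2)}$. Because $S$ is connected, one of $S^{(1)}, S^{(2)}$ must be empty; without loss of generality $S^{(2)} = \emptyset$. Then every monomial of $S$ is generated by monomials of $S'_1$ alone. Since $S'_1$ inherits $\alr$ from $S'$ (the definition of $\alr$ on disconnected sets says each connected component has $\alr$), $S'_1$ is already an $\alr$-span of $S$, and $|S'_1| < |S'|$, contradicting minimality of $S'$. Thus $S'$ is connected, completing the proof.

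The main obstacle is the polynomial-identity claim in the second paragraph; once that is established, the rest is routine. A clean way to write it is via functional evaluation on the product of simplices as above, which avoids getting tangled up in the syntactic reduction process.
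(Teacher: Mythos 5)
Your proof is correct and follows the same route as the paper's: reduce the lemma to showing that a minimal $\alr$-span of a connected set is itself connected, then invoke the first clause of the definition of $\alr$ on connected sequence sets. The difference is one of rigour rather than strategy: the paper dispatches the connectedness step in a single sentence (``the subsets of $S$ individually spanned by each $S'_i$ would be disconnected as well, leading to a contradiction''), whereas you actually prove it --- the evaluation argument on the product of simplices showing that each $\mu_s$ is supported entirely in one of the two disjoint families $\Ii_1,\Ii_2$, and the explicit case split (both induced parts of $S$ nonempty contradicts connectedness of $S$; one part empty contradicts minimality of $S'$) fill in exactly what the paper leaves implicit. One small loose end in your final step: when $S^{(2)}=\emptyset$ you assert that $S'_1$ alone spans $S$, but the witnessing combination $f_1+f_2$ only yields $f_1=\mu_s-c$ on the constrained domain, where $c$ is the (possibly nonzero) constant value that $f_2$ takes there; to conclude you must absorb this constant into a combination of $S'_1$-monomials. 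This is possible because any nonempty set with $\alr$ generates the constant $1$ --- precisely the ``strongly branching'' fact the paper records in \cref{lem:strong-branch} and in the remark following its minimal-span algorithm --- so the gap is a half-line fix, but it should be stated.
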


The next observation says how to find this $I$. 
\begin{restatable}{lemma}{spanFindFirstInfo}\label{lem:span-finding-first-I}
  Let $S$ be a connected set. If there is an $I$ such that every
  sequence of $S$ has an action in $\act(I)$, then there is a minimal
  $\alr$-span $S'$ of $S$ such that all sequences in $S'$ start with
  $\act(I)$.
\end{restatable}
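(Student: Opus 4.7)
The plan is to construct an $\alr$-span of $S$ starting with $\act(I)$ and to show it achieves the minimum size via a dimensional argument. For each $i \in \act(I)$, let $S_i := \{u_1 u_2 : u_1 i u_2 \in S\}$ and, by induction on an appropriate measure of $S$, let $S'_i$ be a minimal $\alr$-span of $S_i$ (using Proposition~\ref{prop:min-span-disconnected} to handle the case where $S_i$ is disconnected). Define $S' := \bigcup_{i \in \act(I)} i \cdot S'_i$. Then $S'$ has $\alr$: it is connected, starts with $\act(I)$, and each continuation $S'_i$ has $\alr$. It also generates $X(S)$, since every $\mu \in X(S)$ factors uniquely as $\mu = x_i m'$ for the unique action $i \in \act(I)$ in $\mu$ and some $m' \in X(S_i)$, and $S'_i$ generates $m'$.

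To see that $|S'|$ achieves the minimum size of an $\alr$-span of $S$, we work in the quotient ring $R$ of polynomials modulo the strategy constraints. By non-absentmindedness together with the hypothesis, $X(S) \subseteq \bigcup_i x_i \cdot X(S_i)$. Substituting $x_{i_1} = 1 - \sum_{j \geq 2} x_{i_j}$ for a chosen representative $i_1 \in \act(I)$ transforms any identity $\sum_i x_i v_i = 0$ with $v_i \in \operatorname{span}(X(S_i))$ into $v_{i_1} + \sum_{j \geq 2} x_{i_j}(v_{i_j} - v_{i_1}) = 0$ in $R$; since $v_{i_j} - v_{i_1}$ is free of $\act(I)$-variables, each coefficient must vanish, so $v_i = 0$ for all $i$. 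This shows that the sum $\operatorname{span}(X(S)) = \sum_{i \in \act(I)} x_i \cdot \operatorname{span}(X(S_i))$ is direct, whence $\dim \operatorname{span}(X(S)) = \sum_i \dim \operatorname{span}(X(S_i))$. By the inductive hypothesis $|S'_i| = \dim \operatorname{span}(X(S_i))$, so $|S'| = \dim \operatorname{span}(X(S))$. Since any $\alr$-span $T$ satisfies $|T| \geq \dim \operatorname{span}(X(T)) \geq \dim \operatorname{span}(X(S))$, the constructed $S'$ is minimum.

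The main obstacle is the inductive hypothesis that, for every connected set $C$ smaller than $S$, a minimum $\alr$-span of $C$ has size $\dim \operatorname{span}(X(C))$. When $C$ does not admit any info set appearing in all of its sequences, the construction above cannot be applied directly to $C$; instead, one invokes Lemma~\ref{lem:span-first-I} to pick the leading info set $I'$ of some minimum $\alr$-span of $C$ and carries out a parallel dimensional analysis in the adapted ambient subspace. Well-foundedness of the induction is ensured by using a measure such as the total number of action-occurrences in $S$, and the base cases $S = \{\epsilon\}$ or a single sequence are trivial.
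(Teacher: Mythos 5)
Your construction of $S'$ is fine and agrees with the paper (it is exactly \cref{lem:span-after-fixing-an-I} in the case $S_{\bar I}=\emptyset$, which the hypothesis guarantees), but your minimality argument has a fatal gap: the inductive hypothesis that a minimum $\alr$-span of a connected set $C$ has size $\dim\operatorname{span}(X(C))$ is false, and the paper itself contains the counterexample. By \cref{thm:lower-bound}, the connected set $\hat{S}_n$ has $O(n^2)$ sequences, hence $\dim\operatorname{span}(X(\hat{S}_n)) = O(n^2)$, yet its minimal $\alr$-span has size $\Omega(2^n)$. The dimension bound $|T|\geq\dim\operatorname{span}(X(T))\geq\dim\operatorname{span}(X(S))$ is a valid but far-too-weak lower bound, because $\alr$-spans are not arbitrary generating sets of monomials: they must be leaf-history sets of $\alr$ structures, and that combinatorial constraint is what forces the blow-up. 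The gap cannot be confined to the ``parallel dimensional analysis'' you defer to the case where no common information set exists: even when the lemma's hypothesis holds for $S$, the continuations $S_i$ need not satisfy it (prepend a fresh information set $I_0$ to every sequence of $\hat{S}_n$ to get such an $S$), so your claimed equality $|S'|=\dim\operatorname{span}(X(S))$ already fails one level down, and with it the conclusion that $S'$ meets the lower bound. A secondary issue is that your directness argument reasons about vanishing coefficients in the quotient ring modulo \emph{all} strategy constraints as if it were a free polynomial ring, which also needs justification, but this is moot given the main flaw.

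The paper's proof avoids any size formula for the minimum. It inducts on $|\Ii|$: take an arbitrary minimal $\alr$-span $S'$, which by \cref{lem:span-first-I} starts with $\act(I')$ for some $I'$; by \cref{lem:span-after-fixing-an-I} each continuation $S'_a$ is a minimal $\alr$-span of $S_a\cup S_{\bar{I'}}$, whose sequences still all contain an action of $\act(I)$, so by induction these continuations may be taken to start with $\act(I)$; finally the actions of $\act(I)$ are swapped to the front of every sequence, which preserves the $\alr$ property (by the pairwise characterization used in the proof of \cref{lem:salr-common-act}), the span property, and the size. If you want to repair your write-up, replace the dimension count by this ``normalize an existing minimal span'' argument; your explicit construction then coincides with the normalized witness.
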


If there is no such $I$, then we need to enumerate over all
information sets to find the smallest. Once we fix an $I$, the next
lemma says how to find a minimal $\alr$-span which starts with
$\act(I)$.

\begin{restatable}{lemma}{spanAfterFixingInfo}\label{lem:span-after-fixing-an-I}
  Let $S$ be a connected set, and let $I \in \Ii$. An $\alr$-span of smallest size among all
    $\alr$-spans starting with $\act(I)$ is the following:
    $S' = \bigcup_{a \in \act(I)}a H_a'$ where $H_a'$ is a minimal
    $\alr$-span of $S_a \cup S_{\bar{I}}$ where $S_{\bar{I}} := \{ s
    \in S \mid s \text{ contains no 
      action from } \act(I)$.
                \end{restatable}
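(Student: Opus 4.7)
The plan is to reduce the problem to analyzing each ``sub-span'' starting with a specific action $a \in \act(I)$. Any $\alr$-span $S'$ in which every sequence starts with an action of $\act(I)$ decomposes as $S' = \bigcup_{a \in \act(I)} a T_a$ for some sets $T_a$, and by the recursive definition of $\alr$ on sequence sets each $T_a$ itself has $\alr$. Since the union is disjoint in the first letter, $|S'| = \sum_a |T_a|$, so minimizing $|S'|$ reduces to minimizing each $|T_a|$ independently.

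Next, I would show that each $T_a$ must be an $\alr$-span of $S_a \cup S_{\bar{I}}$. For any $s \in S$ containing an action $a \in \act(I)$, writing $s = u_1 a u_2$ gives $\mu(s) = x_a \cdot \mu(u_1 u_2)$. Since the only monomials in $X(S')$ containing $x_a$ arise from $a T_a$, reducing a linear combination to $\mu(s)$ forces $T_a$ to span $\mu(u_1 u_2)$; hence $T_a$ must span every sequence in $S_a$. For $s \in S_{\bar{I}}$, the monomial $\mu(s)$ contains no variable from $\act(I)$, whereas every monomial in $X(S')$ contains exactly one such variable. The only way to cancel these $\act(I)$ variables in the quotient ring modulo all strategy constraints is to apply $\sum_{a \in \act(I)} x_a = 1$ symmetrically across all $a T_a$, which forces each $T_a$ individually to span $\mu(s)$. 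Thus each $T_a$ must be an $\alr$-span of $S_a \cup S_{\bar{I}}$.

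Finally, I would verify that the proposed construction is actually achievable. Setting $S' := \bigcup_a a H_a'$ with $H_a'$ a minimal $\alr$-span of $S_a \cup S_{\bar{I}}$, the $\alr$ property follows from the recursive definition since $I$ is a common first information set and each $H_a'$ has $\alr$. For the spanning property: for $s = u_1 a u_2 \in S$, the combination generating $\mu(u_1 u_2)$ from $H_a'$, pre-multiplied by $x_a$, generates $\mu(s)$; and for $s \in S_{\bar{I}}$, the combinations generating $\mu(s)$ from each $H_a'$, pre-multiplied by the respective $x_a$ and summed, collapse to $\mu(s)$ via $\sum_{a \in \act(I)} x_a = 1$. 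Combined with the lower bound, this gives the desired minimum.

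The main obstacle will be the necessity portion concerning $S_{\bar{I}}$: rigorously proving that any reduction of a linear combination of elements of $X(S')$ to a monomial $\mu(s)$ for $s \in S_{\bar{I}}$ forces each $T_a$ to independently span $\mu(s)$. This requires a careful argument in the quotient ring by the ideal generated by all strategy constraints, and one needs to rule out clever use of constraints from other information sets that could circumvent the symmetric-combination requirement.
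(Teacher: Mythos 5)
Your proposal follows essentially the same route as the paper: decompose any $\alr$-span starting with $\act(I)$ as $\biguplus_{a} a T_a$, show each $T_a$ must itself be an $\alr$-span of $S_a \cup S_{\bar{I}}$, and conclude by additivity of sizes, with the converse (achievability) direction being routine. The obstacle you flag at the end dissolves once one recalls that ``reduces to'' means agreement on every valuation satisfying the strategy constraints: specializing $x_a := 1$ and $x_{a'} := 0$ for $a' \in \act(I)\setminus\{a\}$ (a legal valuation of the $\act(I)$-block) turns the combination $\sum_{a' \in \act(I)} x_{a'} f_{a'}$ into $f_a$ while leaving $\mu(s)$ unchanged for $s \in S_{\bar{I}}$ and turning $x_a\mu(u_1u_2)$ into $\mu(u_1u_2)$ for $s = u_1 a u_2$, which is exactly how the paper forces each $T_a$ to span $S_a \cup S_{\bar{I}}$ without any quotient-ring analysis.
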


\paragraph*{Algorithm for computing minimal ALR-span.} Based on the
lemmas above, we can design a recursive algorithm to compute a minimal
$\alr$-span for an input sequence set $S$.  
Firstly, if $S$ is disconnected, based on \cref{prop:min-span-disconnected}, $S$ is decomposed
into maximal connected components and a minimal
$\alr$-span is computed for each component. When $S$ is connected:
(1) for each $I \in \Ii$, compute $H_I = \{ s \mid s ~\text{contains no
  action from} \act(I) \}$; (2) if there is some $I$ such that $H_I =
\emptyset$ (Lemma~\ref{lem:span-finding-first-I}), find the smallest $\alr$-span starting with $\act(I)$
using Lemma~\ref{lem:span-after-fixing-an-I}; (3) else, for each $I$,
compute the smallest $\alr$-span starting from $\act(I)$ using
Lemma~\ref{lem:span-after-fixing-an-I} and return the smallest.
We remark that in the algorithm to find the minimal $\alr$-span, $\epsilon$ might appear in intermediate sequence sets. In that case, we just remove $\epsilon$ from the set and continue the algorithm.

Next, we show that there are game-structures whose minimal
$\alr$-spans are exponentially large.

\begin{restatable}{theorem}{spanLowerBound}\label{thm:lower-bound}
  For every $n > 0$, there exists a game structure $\Tt_n$ of size
  $\Oo(n^2)$ such that the size of a minimal $\alr$-span of $\Tt_n$ is
  $\Omega(2^n)$.
\end{restatable}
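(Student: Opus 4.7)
The plan is to exhibit, for every $n>0$, an explicit one-player $\nam$ game structure $\Tt_n$ of size $O(n^2)$ together with an argument that every $\alr$-span of $\Tt_n$ contains at least $2^n$ leaves. The lower bound will be proved by applying the recursive characterization of minimal $\alr$-spans from \cref{lem:span-first-I,lem:span-finding-first-I,lem:span-after-fixing-an-I} and showing that the structure forces a doubling of the span size at every level of the recursion.

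For the construction, I would use $n$ information sets $I_1,\dots,I_n$, each of size two, and a chance node at the root with $O(n)$ outcomes, each leading to a short subtree that traverses a distinct proper subset of $\{I_1,\dots,I_n\}$. Auxiliary ``exit'' leaves at each player node would ensure that the game is a syntactically valid $\nam$ structure with two actions per information set and a total of $O(n^2)$ leaves. The combinatorial goal in choosing which subset each subtree traverses is to make the resulting set of leaf monomials highly symmetric across $I_1,\dots,I_n$, so that for every information set $I$ there is at least one leaf monomial avoiding $\act(I)$, and so that this ``no covering information set'' property is inherited by the residual sets arising in the recursion.

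For the lower bound on the $\alr$-span, I would pick any minimal $\alr$-span of $\Hh(L_{\Tt_n})$ and follow the recursive construction dictated by \cref{lem:span-first-I} and \cref{lem:span-after-fixing-an-I}. Since the leaf-monomial set is connected with no covering information set, the minimal span necessarily branches at the top under the actions of some $I$, and under each $a\in\act(I)$ contains a minimal $\alr$-span of $S_a\cup S_{\bar{I}}$. The technical core is to prove an inductive invariant guaranteeing that, no matter which $I$ is chosen, each combined set $S_a\cup S_{\bar{I}}$ embeds an isomorphic copy of the same hard instance of size $n-1$. Assuming the invariant, each of the two branches has span at least $f(n-1)$, giving the recurrence $f(n)\geq 2\,f(n-1)$ and hence $f(n)\geq 2^n$, where $f(n)$ denotes the minimum $\alr$-span size of $\Tt_n$.

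The main obstacle will be to design $\Tt_n$ so that this inductive invariant is genuinely robust: both the ``no covering information set'' property and the ``embedded copy'' property must survive every possible choice of $I$ made by the minimal-span algorithm, not just a specific canonical one. Plain candidates (such as cyclic families $\{a_i a_{i+1\bmod n}\}$) turn out to admit a ``good'' choice of $I$ that collapses the recursion to polynomial size, so a more carefully symmetric construction is required. Once such a family is identified and the invariant is verified by a direct combinatorial check on the leaf-monomial structure, the $O(n^2)$ size bound from the construction combines with the $\Omega(2^n)$ lower bound via the recurrence to yield the theorem.
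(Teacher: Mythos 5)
Your overall strategy is the same as the paper's: build a family over $n$ binary information sets whose leaf-monomial set is connected, symmetric under permuting the information sets, and has no ``covering'' information set, then drive the recurrence $f(n)\geq 2f(n-1)$ through \cref{lem:span-first-I} and \cref{lem:span-after-fixing-an-I}. The recursive part of your argument is sound (and your observation that an embedded copy suffices, because spanning a superset implies spanning the subset, is a legitimate slight weakening of what the paper proves, which is exact isomorphism $S_a\cup S_{\bar I}\cong \hat S_{n-1}$).

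However, there is a genuine gap: you never exhibit the family. The entire content of the theorem is the explicit construction, and your proposal defers it (``once such a family is identified\dots''), after correctly noting that naive candidates such as cyclic pair families collapse under a good choice of $I$. The paper's witness is concrete and simpler than your sketch suggests: take $\Ii_n=\{I_1,\dots,I_n\}$ with $\act(I_i)=\{a^i,b^i\}$ and let $\hat S_n = A_n \cup \{x^i y^j \mid i<j,\ x,y\in\{a,b\}\}$, i.e.\ all $2n$ single actions together with all $4\binom{n}{2}$ two-action sequences using two distinct information sets. This set has size $O(n^2)$, is fully symmetric in $I_1,\dots,I_n$, has no covering $I$ (the singletons from the other information sets avoid $\act(I)$), and satisfies exactly the invariant you postulate: for every $I$ and every $a\in\act(I)$, the residual $S_a\cup S_{\bar I}$ is a renaming of $\hat S_{n-1}$ (up to an $\epsilon$ that is discarded), so $T(n)=2T(n-1)$ with $T(1)=2$. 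A second, more minor, issue is your description of the game structure itself: a chance root with only $O(n)$ outcomes, each leading to a ``short subtree'', does not obviously realize all $\binom{n}{2}$ pairs of information sets as leaf histories; one needs $\Theta(n^2)$ leaves (e.g.\ one chance branch per sequence of $\hat S_n$, or per pair $i<j$ with a depth-two subtree), which still meets the $O(n^2)$ size bound but should be stated. Without the explicit family and the verification of its self-similarity, the proof is not complete.
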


\paragraph*{Complexity} 
In order to investigate the complexity of computing a minimal
$\alr$-span of a game structure, we consider the following decision
problem and show it to be in $\NP$. We leave open the question of whether
it is $\NP$-hard.  

\textsc{MIN-ALR-SPAN}: Given a game structure $\Tt$ and an integer $k > 0$, is there a game structure $\Tt'$ such that $|\Tt'| \leq k$ and $\Tt'$ is an $\alr$-span of $\Tt$?

\begin{restatable}{theorem}{spanNP}\label{thm:span-NP}
The decision problem MIN-ALR-SPAN is in $\NP$. 
\end{restatable}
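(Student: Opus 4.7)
The plan is to design a polynomial-size NP certificate and a deterministic polynomial-time verifier. Given $\Tt$ and $k$, the certificate consists of (i) a candidate game structure $\Tt'$ with $|\Tt'|\leq k$, encoded combinatorially (tree, information sets, edge labels), and (ii) for each leaf monomial $\mu\in X(\Tt)$, a tuple of rational coefficients $\{c^\mu_{\mu'}\}_{\mu'\in X(\Tt')}$ realising the identity $\sum_{\mu'} c^\mu_{\mu'}\mu'\equiv \mu$ modulo the strategy ideal $I_{\mathrm{strat}}=\langle \sum_{a\in\act(I)}x_a-1 : I\in\Ii\rangle$. Both pieces have polynomial size: the structure by the $k$-bound, and the coefficient vectors by the classical bound that any feasible rational linear system with polynomially many variables and bounded-bit entries admits a solution of polynomial bit-length, which applies here since the number of unknowns per $\mu$ is $|X(\Tt')|\leq k$.

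The verifier operates in two stages. First it checks that $\Tt'$ is a legal game structure of size at most $k$ and that it has $\alr$; by \cref{prop:alr-set-struct-equiv} the latter reduces to inspecting $\Hh(L_{\Tt'})$ and runs in time polynomial in $|\Tt'|$. Second, for each $\mu\in X(\Tt)$, the verifier tests the polynomial identity $\sum_{\mu'} c^\mu_{\mu'}\mu' \equiv \mu$ in the quotient ring $R=\Real[x_a]/I_{\mathrm{strat}}$. Fixing, for every information set $I$, an eliminated action $a^I_0$ and substituting $x_{a^I_0}=1-\sum_{a\neq a^I_0} x_a$, each monomial becomes a product of linear forms lying in pairwise disjoint variable groups, one group per visited information set, i.e., a read-once multilinear formula with a fixed product structure across $\Ii$.

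The main obstacle is carrying out this polynomial identity test deterministically in polynomial time, because naively expanding the $(1-\sum)$ factors can blow up exponentially in the depth of the tree. I would handle this by exploiting the disjoint-group product structure: polynomials in $R$ decompose as tensors indexed by $\Ii$, so the sum $\sum_{\mu'} c^\mu_{\mu'}\mu' - \mu$ is a sum of polynomially many pure tensors, and the identity reduces to equality in a polynomial-dimensional coordinate space that is read off directly from the factored forms without full expansion. Equivalently, one can invoke deterministic polynomial-time PIT for sums of read-once formulas whose factors live in disjoint variable groups, or perform a tailored interpolation on rational behavioural strategies using one carefully-chosen evaluation per information set. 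Once identity testing is in polynomial time, the whole verifier runs in polynomial time and MIN-ALR-SPAN lies in $\NP$.
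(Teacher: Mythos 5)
Your certificate has the same overall shape as the paper's (a candidate structure $\Tt'$ of size at most $k$ together with, for each leaf monomial of $\Tt$, the coefficients of a witnessing linear combination), and your first verification stage (checking that $\Tt'$ is a legal structure with $\alr$ via its leaf histories) is fine. The genuine gap is in the second stage: you reduce membership in $\NP$ to a deterministic polynomial-time polynomial identity test, and none of the justifications you offer for that test holds up as stated. The claim that the identity ``reduces to equality in a polynomial-dimensional coordinate space that is read off directly from the factored forms'' is unsubstantiated: the natural coordinate space is the monomial basis of the quotient ring, whose dimension is the product of the group sizes and hence exponential in the number of information sets, and deciding whether a sum of polynomially many pure tensors vanishes is precisely the (nontrivial) set-multilinear depth-3 PIT problem, not a coordinatewise comparison. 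The ``one carefully-chosen evaluation per information set'' interpolation cannot work either: a single deterministic evaluation point can never certify that a polynomial is identically zero. The one viable route you name --- deterministic white-box PIT for sums of products of linear forms over a common variable partition (Raz--Shpilka) --- would work, but only after two steps you omit: (i) a leaf monomial $\mu'$ of $\Tt'$ need not contain a variable from every information set, so the summands do \emph{not} share a common partition until you homogenise by multiplying each $\mu'$ by $\sum_{a\in\act(I)}x_a$ for every unvisited $I$ (legitimate modulo the strategy constraints); and (ii) you must argue that, for the resulting multi-homogeneous polynomials, equality in the quotient ring is equivalent to identity of polynomials. Without these, your verifier is not established to run in polynomial time, so the proof is incomplete.

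For comparison, the paper sidesteps identity testing altogether by restricting the certificate rather than verifying an arbitrary one: using the structure of the minimal-span algorithm it argues that the coefficients may be taken in $\{0,1\}$ and that the witnessing combination for $\mu(s)$ is exactly the sum over $S_s=\{s'\in\Hh(L_{\Tt'}) \mid \act(s)\subseteq\act(s')\}$. Verifying that this sum reduces to $\mu(s)$ then becomes the purely combinatorial question of whether the quotient set of sequences (which inherits $\alr$) is a \emph{strongly branching} set, decided by a simple recursion (\cref{lem:strong-branch}). To repair your argument you should either supply the homogenisation step and the appeal to set-multilinear PIT in full, or adopt a combinatorial restriction of the certificate in the spirit of the paper.
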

\begin{proof}[Proof Sketch]
One can guess a $\Tt'$ of size at most $k$, and also the linear
combinations required for each $\mu \in X(\Tt)$. This is polynomial in
size. For each monomial $\mu \in X(\Tt)$ we verify if the monomials in
$X(\Tt')$ containing all variables from $\mu$ can generate $\mu$. This
can be done efficiently in polynomial-time. \qedhere \end{proof}


\paragraph*{Efficiently solvable classes}
\label{sec:tractable}

We get back to finding efficiently solvable classes of imperfect recall games, this time using a parameter called \emph{shuffle-depth SD}, that is naturally derived from our algorithm for computing a minimal $\alr$-span. 


When $S$ is disconnected with $S = \uplus S_i$ and each $S_i$ connected : define
SD(S) = $\max_i SD(S_i)$. 
When $S$ is connected and $S$ has $\salr$, define SD(S) = 0.
Otherwise define
\[
SD(S) = 1 + \min_{I}\max_{a \in I}\text{SD}(S_a \cup S_{\bar{I}})
\]
For a $\Tt$ with SD $k$, our minimal $\alr$-span algorithm would encounter sets with $\salr$ after recursively
running upto depth $k$. This takes $\Oo(|\Tt|^{k+1})$ time. 

\begin{proposition}
The minimal $\alr$-span of game structure $\Tt$ with SD = $k$ can be computed in time $\Oo(|\Tt|^{k+1})$.
\end{proposition}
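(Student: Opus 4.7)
The plan is induction on $k = \mathrm{SD}(\Tt)$, writing $S = \Hh(L_{\Tt})$ throughout.

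For the base case $k = 0$, the set $S$ has $\salr$ by definition, so Theorem~\ref{thm:shuffle-detection-ptime} applied via Algorithm~\ref{algo:alosshuffle} produces an $\salr$ witness in $O(|\Tt|)$ time. Since an $\salr$ witness has the same number of leaves as $\Tt$ and satisfies $\alr$, it is automatically a minimal $\alr$-span, matching the bound $O(|\Tt|^{0+1})$.

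For the inductive step, assume the claim for all shuffle-depths strictly less than $k$. If $S$ is disconnected with maximal connected components $S = \biguplus_i S_i$, Proposition~\ref{prop:min-span-disconnected} reduces the problem to computing a minimal $\alr$-span on each $S_i$ independently. Each $S_i$ satisfies $\mathrm{SD}(S_i) \leq k$ and $\sum_i |S_i| = |S|$, so summing the inductive bounds yields $\sum_i O(|S_i|^{k+1}) = O(|S|^{k+1})$. If $S$ is connected, the recursive algorithm iterates over each $I \in \Ii$; for each such $I$, Lemma~\ref{lem:span-after-fixing-an-I} shows that the smallest $\alr$-span starting with $\act(I)$ is assembled from recursively computed minimal $\alr$-spans of $S_a \cup S_{\bar{I}}$ for every $a \in \act(I)$, and the algorithm returns the minimum over $I$. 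By the definition of $\mathrm{SD}$, there is some $I^* \in \Ii$ with $\max_{a \in \act(I^*)} \mathrm{SD}(S_a \cup S_{\bar{I^*}}) = k - 1$; along the $I^*$ branch, the induction hypothesis solves each sub-problem in $O(|S|^k)$, giving $|\act(I^*)| \cdot O(|S|^k) = O(|S|^{k+1})$.

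The main obstacle is handling the other enumeration branches, where a non-optimal $I$ may produce sub-problems of shuffle-depth strictly larger than $k - 1$. The key structural observation is that summed over all $|\Ii| \leq |\Tt|$ candidate information sets and all $a \in \act(I)$, the sub-problems arising at one recursion level have combined input size $O(|\Tt|)$. Combined with recursion depth at most $k + 1$ along the minimizing path captured by the $\min$ in the $\mathrm{SD}$ definition, and with the algorithm only retaining the minimum across enumerated branches, a standard amortization gives total running time $O(|\Tt|^{k+1})$ and completes the induction.
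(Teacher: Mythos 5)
There is a genuine gap in the inductive step, and it sits exactly at the point you yourself flag as ``the main obstacle.'' Your proposed resolution rests on the claim that the sub-problems generated at one recursion level have combined input size $\Oo(|\Tt|)$. This is false: for a connected $S$ with no common information set, the algorithm spawns one sub-problem $S_a \cup S_{\bar{I}}$ for every $I \in \Ii$ and every $a \in \act(I)$, i.e.\ up to $|A|$ sub-problems, and each of them can have size close to $|S|$ (note in particular that $S_{\bar{I}}$ is copied into \emph{every} $a$-branch of $I$), so the combined size at one level is $\Theta(|A|\cdot|S|)$, not $\Oo(|\Tt|)$. Moreover, ``recursion depth at most $k+1$ along the minimizing path'' bounds only the branch realizing the $\min_I$ in the definition of SD; for a non-optimal $I'$ the quantity $\max_a \mathrm{SD}(S_a\cup S_{\bar{I'}})$ can exceed $k-1$, so those branches recurse deeper, and no amortization over a single level can repair this. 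The paper's (terse) argument is different in spirit: the recursion is cut off as soon as a set with $\salr$ is encountered, which by the definition of SD happens within depth $k$; with branching factor at most $|A|\le|\Tt|$ per level and $\Oo(|\Tt|)$ work per node (including the linear-time $\salr$ test of \cref{thm:shuffle-detection-ptime}), the recursion tree has $\Oo(|\Tt|^{k})$ nodes and total cost $\Oo(|\Tt|^{k+1})$. Your write-up never establishes such a uniform depth bound on \emph{all} surviving branches, so the stated running time does not follow.

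A secondary issue: in the base case you assert that an $\salr$ witness is ``automatically a minimal $\alr$-span.'' It is an $\alr$-span of size $|\Tt|$, but nothing rules out a strictly smaller $\alr$-span, so minimality is not automatic; the paper instead obtains the SD~$=0$ case by running the span recursion itself (which coincides with the $\salr$ recursion via \cref{lem:span-finding-first-I} and \cref{lem:span-after-fixing-an-I}) in linear time. This does not affect the time bound but does affect the correctness claim ``computes the \emph{minimal} $\alr$-span.''
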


\begin{restatable}{corollary}{efficSolvClass}\label{cor:effic-solv-class}
The maxmin value in a one-player game can be computed in $\mathsf{PTIME}$ for games having structures with constant SD.
\end{restatable}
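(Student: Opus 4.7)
The plan is to assemble the corollary directly from three results that have already been established in the excerpt: (i) the preceding proposition that a minimal $\alr$-span of a structure $\Tt$ with $\text{SD}(\Tt)=k$ can be computed in time $\Oo(|\Tt|^{k+1})$, (ii) Proposition~\ref{prop:span-to-game}, which reduces solving a game on $\Tt$ to solving a game on any $\alr$-span $\Tt'$ of $\Tt$, and (iii) the known polynomial-time algorithm for solving one-player $\alr$-games recalled in Section~\ref{sec:background}.

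First, given an input one-player game $G=(\Tt,\d,\Uu)$ whose structure $\Tt$ has $\text{SD}(\Tt)=k$ with $k$ constant, I would run the minimal $\alr$-span algorithm from the previous paragraph on $\Tt$. By the preceding proposition, this produces an $\alr$-span $\Tt'$ in time $\Oo(|\Tt|^{k+1})$, which is polynomial in $|\Tt|$ since $k$ is constant. In particular $|\Tt'| \leq c\cdot |\Tt|^{k+1}$ for some constant $c$, so $\Tt'$ itself has polynomial size.

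Second, I would invoke Proposition~\ref{prop:span-to-game} to construct the equivalent $\alr$-game $G'=(\Tt',\d',\Uu')$ on $\Tt'$. The construction from the proof sketch of that proposition assigns to each leaf of $\Tt'$ a payoff obtained as a sum of terms of the form $\frac{\prob_{\chance}(t)\,c^{\mu(t)}_{\mu(t_i)}\,\Uu(t)}{\prob_{\chance}(t_i)}$. The number of such terms is at most $|L_{\Tt}|$, and each term has bit-length polynomial in the size of $G$; hence the representation of $G'$ has polynomial size, and it can be built in polynomial time alongside $\Tt'$.

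Finally, since $\Tt'$ has $\alr$, the maxmin value of $G'$ can be computed in polynomial time using the algorithm of~\cite{kaneko1995behavior}, and by Proposition~\ref{prop:span-to-game} this value equals the maxmin value of $G$. Composing the three polynomial-time steps yields a polynomial-time algorithm for the input game, which is the claim. The only subtle point, and thus the main thing I would verify carefully in the full write-up, is that the payoffs $\Uu'$ produced by the span reduction have bit-length polynomial in the input size; this follows because each $w_i$ is a sum of $\Oo(|L_{\Tt}|)$ rational expressions whose numerators and denominators are products of the original $\chance$ probabilities and payoffs, together with integer coefficients $c^{\mu(t)}_{\mu(t_i)}$ produced by the minimal $\alr$-span algorithm, all of which are of polynomial bit-length when $k$ is constant.
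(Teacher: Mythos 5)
Your proposal is correct and follows essentially the same route as the paper: compute the minimal $\alr$-span in time $\Oo(|\Tt|^{k+1})$, apply Proposition~\ref{prop:span-to-game} to obtain an equivalent $\alr$-game of polynomial size, and solve it with the known polynomial-time algorithm for one-player $\alr$-games. You also correctly flag the one subtle point the paper addresses, namely that the payoffs $\Uu'$ have polynomial bit-length -- the paper pins this down by noting that $\d'$ is uniform and that the coefficients $c^t_{t'}$ produced by the minimal-span algorithm can be taken in $\{0,1\}$, which is exactly the justification your sketch appeals to.
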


The game \textbf{III} for any $n$ has SD $2$, and using our algorithm one can find equivalent $\alr$-game with just a quadratic blowup in size.  


\section{Two-player games}\label{sec:two-player}
\input{Figures/fig-two-player-shuffle}
For a two-player game structure $\Tt$ and the corresponding sequence set $S$, we can look at the projection of sequences on individual player actions, $S_{\Max}$ and $S_{\Min}$, consider their $\alr$-spans and knit them together. 
Consider the 2-player game in \cref{fig:2-p-shuffle}. Projections of the sequence set in this game w.r.t individual players would correspond to game structures in \cref{fig:2-p-shuffle-b} (for $\Max$) and \cref{fig:shuffle-a}  (for $\Min$). The structure \cref{fig:2-p-shuffle-b}  already has $\pfr$. As seen before, \cref{fig:shuffle-b} is an $\salr$ witness for \cref{fig:shuffle-a}. We can plug in this $\salr$ witness to all leaf nodes of \cref{fig:2-p-shuffle-b} and get the structure \cref{fig:2-p-shuffle-c}. Information sets are maintained across all copies of \cref{fig:2-p-shuffle-c} as shown in the illustration. More generally, if $\Tt'_{\Max}$ is an $\alr$-span of $\Tt_{\Max}$, and $\Tt'_{\Min}$ is an $\alr$-span of $\Tt_{\Min}$, we can obtain an $(\alr, \alr)$ structure where all nodes of $\Max$ precede all nodes of $\Min$, and a copy of $\Tt'_{\Min}$ is attached to each leaf node of $\Max$, and information sets of $\Min$ are maintained across all copies of $\Tt'_{\Min}$, as shown in \cref{fig:2-p-shuffle}.  
We can assign suitable payoffs in a similar manner to \cref{prop:span-to-game} to get an equivalent game on this new game structure.

\begin{restatable}{theorem}{twopPfrNam}\label{thm:2p-pfr-name}
Solving an $(\nam, \nam)$ game on structure $\Tt$ can be reduced to solving an $(\alr,\alr)$ game on a structure of size $|\Tt'_{\Max}||\Tt'_{\Min}|$ where $\Tt'_{\Max}$ and $\Tt'_{\Min}$ are minimal $\alr$-spans of $\Tt_{\Min}$.
\end{restatable}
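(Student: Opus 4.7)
The plan is to construct an explicit $(\alr,\alr)$ game-structure $\hat{\Tt}$ of size $|\Tt'_{\Max}|\cdot|\Tt'_{\Min}|$ and then reduce an arbitrary game on $\Tt$ to a game on $\hat{\Tt}$ by matching payoff polynomials, exactly as in the one-player reduction of \cref{prop:span-to-game}.

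For the construction, I would place the $\Max$-layer on top: take $\Tt'_{\Max}$ (regarded as a structure with $\Max$-decisions and possibly chance nodes, ending in placeholder leaves), and at each of its leaves $\ell$ graft a fresh copy of $\Tt'_{\Min}$. Information sets of $\Max$ are inherited directly from $\Tt'_{\Max}$. For $\Min$, nodes that occupy the same position inside $\Tt'_{\Min}$ in different grafted copies are merged into one information set, so that $\Min$ cannot observe which $\Max$-leaf she arrived at. By construction $|\hat{\Tt}|=|\Tt'_{\Max}|\cdot|\Tt'_{\Min}|$, and every leaf monomial of $\hat{\Tt}$ has the form $\mu\nu$ with $\mu\in X(\Tt'_{\Max})$ and $\nu\in X(\Tt'_{\Min})$, so $X(\hat{\Tt}) = \{\,\mu\nu : \mu\in X(\Tt'_{\Max}),\ \nu\in X(\Tt'_{\Min})\,\}$.

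For the recall verification, the $\Max$-projection of $\hat{\Tt}$ is a copy of $\Tt'_{\Max}$, which has $\alr$ by assumption. For $\Min$, any two nodes $u,v$ in the same $\Min$-information set of $\hat{\Tt}$ come from copies indexed by leaves $\ell_u,\ell_v$ of the $\Max$-layer, but their $\Min$-histories $\his_{\Min}(u),\his_{\Min}(v)$ coincide with the $\Min$-histories of the corresponding nodes in $\Tt'_{\Min}$, since the top $\Max$-layer contains no $\Min$-actions. Hence $\alr$ for $\Min$ is inherited from $\Tt'_{\Min}$. The reduction itself then mimics \cref{prop:span-to-game}: a game $G=(\Tt,\d,\Uu)$ on $\Tt$ has payoff polynomial $\sum_{t\in L} \prob_{\chance}(t)\,\Uu(t)\,\mu(t)$, and because action variables commute, each $\mu(t)$ factors as $\mu_{\Max}(t)\,\mu_{\Min}(t)$. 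Applying the $\alr$-span property of $\Tt'_{\Max}$ to $\mu_{\Max}(t)$ and of $\Tt'_{\Min}$ to $\mu_{\Min}(t)$, and multiplying the resulting expansions, each $\mu(t)$ reduces under the strategy constraints of both players to a linear combination of monomials in $X(\hat{\Tt})$. Collecting coefficients and distributing them into chance probabilities $\hat{\d}$ and leaf payoffs $\hat{\Uu}$ yields a game $\hat{G}=(\hat{\Tt},\hat{\d},\hat{\Uu})$ whose payoff polynomial reduces to that of $G$, and hence shares the same maxmin value.

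The main obstacle I anticipate is the verification that $\Min$'s $\alr$ genuinely survives the merging of $\Min$-nodes across all $|L_{\Tt'_{\Max}}|$ grafted copies; this hinges on the fact that all $\Max$-play sits strictly above the $\Min$-layer in $\hat{\Tt}$, so that the $\Max$-prefix contributes no $\Min$-actions and $\Min$'s perceived history is fully determined inside $\Tt'_{\Min}$. A secondary care point is converting the linear combinations over the product monomial set $X(\Tt'_{\Max})\cdot X(\Tt'_{\Min})$ into a concrete choice of $\hat{\d}$ and $\hat{\Uu}$; but as in the one-player span case the required coefficients can always be realized by absorbing chance factors into chance nodes and the remaining scalars into leaf payoffs.
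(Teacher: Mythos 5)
Your proposal is correct and follows essentially the same route as the paper's own proof: the identical grafting construction (copies of $\Tt'_{\Min}$ attached to the leaves of $\Tt'_{\Max}$ with $\Min$'s information sets merged across copies), the same factorization $\mu(t)=\mu_{\Max}(t)\,\mu_{\Min}(t)$ with the two spanning combinations multiplied out, and the same final appeal to the payoff-assignment of \cref{prop:span-to-game}. Your explicit check that $\Min$'s A-loss recall survives the merging (because the $\Max$-layer contributes no $\Min$-actions) is a detail the paper leaves implicit, but it matches the intended argument.
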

Since $(\pfr, \alr)$-games can be solved in polynomial-time, \cref{thm:2p-pfr-name} and Corollary~\ref{cor:effic-solv-class} lead to new polynomial-time solvable classes.
\begin{corollary}\label{cor:2-effic-solv-class}
The maxmin value in a $(\pfr, \nam)$ game where SD of $\Tt_{\Min}$ is constant can be computed in polynomial time. As a conseqeunce, $(\pfr, \salr)$ games can be solved in polynomial time.
\end{corollary}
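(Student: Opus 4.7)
The plan is to invoke Theorem~\ref{thm:2p-pfr-name} together with the polynomial-time computation of minimal $\alr$-spans that underlies Corollary~\ref{cor:effic-solv-class}. Given a $(\pfr, \nam)$ game on structure $\Tt$ where SD$(\Tt_{\Min}) = k$ is constant, I would first set $\Tt'_{\Max} = \Tt_{\Max}$: since $\Max$ has $\pfr$, and $\pfr \Rightarrow \alr$, this is trivially an $\alr$-span of itself (each leaf monomial is its own linear combination), with $|\Tt'_{\Max}| \leq |\Tt|$. For the $\Min$ side, the preceding proposition yields a minimal $\alr$-span $\Tt'_{\Min}$ in time $\Oo(|\Tt_{\Min}|^{k+1})$, which is polynomial in $|\Tt|$ for constant $k$.

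Next, I would apply the composition of Theorem~\ref{thm:2p-pfr-name} to $\Tt'_{\Max}$ and $\Tt'_{\Min}$, obtaining a reduced game on a structure of size $|\Tt'_{\Max}| \cdot |\Tt'_{\Min}|$, which is polynomial. The essential refinement beyond Theorem~\ref{thm:2p-pfr-name} is that this composed structure is not merely $(\alr,\alr)$ but in fact $(\pfr, \alr)$: as illustrated in \cref{fig:2-p-shuffle}, the composition places all $\Max$-nodes strictly above the copies of $\Tt'_{\Min}$ and leaves $\Max$'s information sets intact within the top layer, so $\Max$ retains $\pfr$ because $\Tt'_{\Max} = \Tt_{\Max}$ does. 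Since $(\pfr, \alr)$-games can be solved in polynomial time~\cite{KollerMegiddo::1992,vonStengel::1996,kaneko1995behavior}, the whole pipeline runs in polynomial time, establishing the first claim.

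For the second claim, if $\Tt_{\Min}$ has $\salr$, the definition of shuffle-depth yields SD$(\Tt_{\Min}) = 0$, which is a constant. Thus every $(\pfr, \salr)$ game is a special case of the first claim and can therefore be solved in polynomial time.

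The main point to verify carefully -- and the only nontrivial step beyond stitching together Theorem~\ref{thm:2p-pfr-name} and the span-computation result -- is that the two-player composition preserves $\pfr$ on $\Max$'s side when the $\Max$-span does. This follows directly from inspecting the construction in \cref{fig:2-p-shuffle}: $\Max$'s portion of the composed tree is literally a copy of $\Tt'_{\Max}$ (with subtrees grown below its leaves), and $\Max$'s information sets are inherited unchanged, so no loss of recall is introduced for $\Max$ by attaching $\Tt'_{\Min}$-copies underneath.
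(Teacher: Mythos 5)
Your proposal is correct and follows essentially the same route as the paper: compose the trivial $\alr$-span $\Tt'_{\Max}=\Tt_{\Max}$ with the polynomial-size minimal $\alr$-span of $\Tt_{\Min}$ via \cref{thm:2p-pfr-name}, observe that the result is in fact a $(\pfr,\alr)$ game (which the paper indicates only implicitly via \cref{fig:2-p-shuffle-c}), and invoke the polynomial-time solvability of $(\pfr,\alr)$ games; the reduction of the $\salr$ case to $\mathrm{SD}=0$ matches the paper as well. Your explicit check that attaching the $\Tt'_{\Min}$-copies below $\Max$'s leaves preserves $\Max$'s perfect recall is a worthwhile refinement, since $(\alr,\alr)$ games alone would not suffice for the polynomial-time conclusion.
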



\section{Conclusion}

We have presented a study of imperfect recall without
absentmindedness, through the lens of A-loss recall. Specifically, we
have given two methods to transform imperfect recall games to A-loss
recall games. Behavioral strategies for the original game can be
obtained by analyzing the transformed game. This investigation has
resulted in new polynomial-time solvable classes of one-player and
two-player games. We have also shown how to find a transformation
of minimal size. It would be interesting to see the influence of these
notions of $\salr$ and $\alr$-span, and the idea of using sequence
sets instead of games, in algorithms that use imperfect recall
abstractions.

In summary, in this work, we have laid the foundations to simplify imperfect
recall in terms of $\alr$. We do hope that this perspective leads to
further theoretical and experimental investigations.


\bibliographystyle{plainurl}

\bibliography{main}

\begin{thebibliography}{10}

\bibitem{DBLP:conf/atal/BrownGS15}
Noam Brown, Sam Ganzfried, and Tuomas Sandholm.
\newblock Hierarchical abstraction, distributed equilibrium computation, and
  post-processing, with application to a champion no-limit texas hold'em agent.
\newblock In Gerhard Weiss, Pinar Yolum, Rafael~H. Bordini, and Edith Elkind,
  editors, {\em Proceedings of the 2015 International Conference on Autonomous
  Agents and Multiagent Systems, {AAMAS} 2015, Istanbul, Turkey, May 4-8,
  2015}, pages 7--15. {ACM}, 2015.
\newblock URL: \url{http://dl.acm.org/citation.cfm?id=2772885}.

\bibitem{libratus-poker}
Noam Brown and Tuomas Sandholm.
\newblock Libratus: The superhuman ai for no-limit poker.
\newblock In {\em Proceedings of the Twenty-Sixth International Joint
  Conference on Artificial Intelligence, {IJCAI-17}}, pages 5226--5228, 2017.
\newblock \href {https://doi.org/10.24963/ijcai.2017/772}
  {\path{doi:10.24963/ijcai.2017/772}}.

\bibitem{pluribus-poker}
Noam Brown and Tuomas Sandholm.
\newblock Superhuman ai for multiplayer poker.
\newblock {\em Science}, 365(6456):885--890, 2019.
\newblock URL: \url{https://www.science.org/doi/abs/10.1126/science.aay2400},
  \href {https://doi.org/10.1126/science.aay2400}
  {\path{doi:10.1126/science.aay2400}}.

\bibitem{DBLP:conf/aaai/Celli018}
Andrea Celli and Nicola Gatti.
\newblock Computational results for extensive-form adversarial team games.
\newblock In {\em Proceedings of the Thirty-Second Conference on Artificial
  Intelligence}, AAAI'18, pages 965--972, New Orleans, Louisiana, USA, 2018.
  {AAAI} Press.

\bibitem{Cermak::2018}
Jir{\'{\i}} Cerm{\'{a}}k, Branislav Bosansk{\'{y}}, Karel Hor{\'{a}}k, Viliam
  Lis{\'{y}}, and Michal Pechoucek.
\newblock Approximating maxmin strategies in imperfect recall games using
  a-loss recall property.
\newblock {\em Int. J. Approx. Reasoning}, 93:290--326, 2018.

\bibitem{DBLP:conf/ijcai/CermakBL17}
Jiri Cermak, Branislav Bosansk{\'{y}}, and Viliam Lis{\'{y}}.
\newblock An algorithm for constructing and solving imperfect recall
  abstractions of large extensive-form games.
\newblock In Carles Sierra, editor, {\em Proceedings of the Twenty-Sixth
  International Joint Conference on Artificial Intelligence, {IJCAI} 2017,
  Melbourne, Australia, August 19-25, 2017}, pages 936--942. ijcai.org, 2017.
\newblock URL: \url{https://doi.org/10.24963/ijcai.2017/130}, \href
  {https://doi.org/10.24963/IJCAI.2017/130}
  {\path{doi:10.24963/IJCAI.2017/130}}.

\bibitem{DBLP:conf/aaai/Conitzer19}
Vincent Conitzer.
\newblock Designing preferences, beliefs, and identities for artificial
  intelligence.
\newblock In {\em The Thirty-Third {AAAI} Conference on Artificial
  Intelligence, {AAAI} 2019, The Thirty-First Innovative Applications of
  Artificial Intelligence Conference, {IAAI} 2019, The Ninth {AAAI} Symposium
  on Educational Advances in Artificial Intelligence, {EAAI} 2019, Honolulu,
  Hawaii, USA, January 27 - February 1, 2019}, pages 9755--9759. {AAAI} Press,
  2019.
\newblock URL: \url{https://doi.org/10.1609/aaai.v33i01.33019755}, \href
  {https://doi.org/10.1609/AAAI.V33I01.33019755}
  {\path{doi:10.1609/AAAI.V33I01.33019755}}.

\bibitem{DBLP:conf/aaai/GanzfriedS14}
Sam Ganzfried and Tuomas Sandholm.
\newblock Potential-aware imperfect-recall abstraction with earth mover's
  distance in imperfect-information games.
\newblock In Carla~E. Brodley and Peter Stone, editors, {\em Proceedings of the
  Twenty-Eighth {AAAI} Conference on Artificial Intelligence, July 27 -31,
  2014, Qu{\'{e}}bec City, Qu{\'{e}}bec, Canada}, pages 682--690. {AAAI} Press,
  2014.
\newblock URL: \url{https://doi.org/10.1609/aaai.v28i1.8816}, \href
  {https://doi.org/10.1609/AAAI.V28I1.8816}
  {\path{doi:10.1609/AAAI.V28I1.8816}}.

\bibitem{GPS20}
Hugo Gimbert, Soumyajit Paul, and B.~Srivathsan.
\newblock A bridge between polynomial optimization and games with imperfect
  recall.
\newblock In {\em Proceedings of the 19th International Conference on
  Autonomous Agents and Multiagent Systems, {AAMAS} '20}. International
  Foundation for Autonomous Agents and Multiagent Systems, 2020.

\bibitem{johanson2013evaluating}
Michael Johanson, Neil Burch, Richard Valenzano, and Michael Bowling.
\newblock Evaluating state-space abstractions in extensive-form games.
\newblock In {\em Proceedings of the 2013 international conference on
  Autonomous agents and multi-agent systems}, pages 271--278, 2013.

\bibitem{kaneko1995behavior}
Mamoru Kaneko and J~Jude Kline.
\newblock Behavior strategies, mixed strategies and perfect recall.
\newblock {\em International Journal of Game Theory}, 24:127--145, 1995.

\bibitem{kline2002minimum}
J~Jude Kline.
\newblock Minimum memory for equivalence between ex ante optimality and
  time-consistency.
\newblock {\em Games and Economic Behavior}, 38(2):278--305, 2002.

\bibitem{KollerMegiddo::1992}
Daphne Koller and Nimrod Megiddo.
\newblock The complexity of two-person zero-sum games in extensive-form.
\newblock {\em Games and Economic Behavior}, 4(4):528--552, 1992.

\bibitem{DBLP:conf/sigecom/KroerS16}
Christian Kroer and Tuomas Sandholm.
\newblock Imperfect-recall abstractions with bounds in games.
\newblock In Vincent Conitzer, Dirk Bergemann, and Yiling Chen, editors, {\em
  Proceedings of the 2016 {ACM} Conference on Economics and Computation, {EC}
  '16, Maastricht, The Netherlands, July 24-28, 2016}, pages 459--476. {ACM},
  2016.
\newblock \href {https://doi.org/10.1145/2940716.2940736}
  {\path{doi:10.1145/2940716.2940736}}.

\bibitem{LAMBERT2019164}
Nicolas~S. Lambert, Adrian Marple, and Yoav Shoham.
\newblock On equilibria in games with imperfect recall.
\newblock {\em Games and Economic Behavior}, 113:164--185, 2019.
\newblock URL:
  \url{https://www.sciencedirect.com/science/article/pii/S0899825618301490},
  \href {https://doi.org/https://doi.org/10.1016/j.geb.2018.09.007}
  {\path{doi:https://doi.org/10.1016/j.geb.2018.09.007}}.

\bibitem{DBLP:conf/icml/LanctotGBB12}
Marc Lanctot, Richard~G. Gibson, Neil Burch, and Michael Bowling.
\newblock No-regret learning in extensive-form games with imperfect recall.
\newblock In {\em Proceedings of the 29th International Conference on Machine
  Learning, {ICML} 2012, Edinburgh, Scotland, UK, June 26 - July 1, 2012}.
  icml.cc / Omnipress, 2012.
\newblock URL: \url{http://icml.cc/2012/papers/58.pdf}.

\bibitem{deepstack}
Matej Moravčík, Martin Schmid, Neil Burch, Viliam Lisý, Dustin Morrill,
  Nolan Bard, Trevor Davis, Kevin Waugh, Michael Johanson, and Michael Bowling.
\newblock Deepstack: Expert-level artificial intelligence in heads-up no-limit
  poker.
\newblock {\em Science}, 356(6337):508--513, 2017.
\newblock \href {https://doi.org/10.1126/science.aam6960}
  {\path{doi:10.1126/science.aam6960}}.

\bibitem{vonStengel::1996}
Bernhard~Von Stengel.
\newblock Efficient computation of behavior strategies.
\newblock {\em Games and Economic Behavior}, 14(2):220--246, 1996.

\bibitem{tewolde-et-al:2023}
Emanuel Tewolde, Caspar Oesterheld, Vincent Conitzer, and Paul~W. Goldberg.
\newblock The computational complexity of single-player imperfect-recall games.
\newblock IJCAI '23, 2023.
\newblock \href {https://doi.org/10.24963/ijcai.2023/321}
  {\path{doi:10.24963/ijcai.2023/321}}.

\bibitem{ijcai2024p332}
Emanuel Tewolde, Brian~Hu Zhang, Caspar Oesterheld, Manolis Zampetakis, Tuomas
  Sandholm, Paul Goldberg, and Vincent Conitzer.
\newblock Imperfect-recall games: Equilibrium concepts and their complexity.
\newblock In Kate Larson, editor, {\em Proceedings of the Thirty-Third
  International Joint Conference on Artificial Intelligence, {IJCAI-24}}, pages
  2994--3004. International Joint Conferences on Artificial Intelligence
  Organization, 8 2024.
\newblock Main Track.
\newblock \href {https://doi.org/10.24963/ijcai.2024/332}
  {\path{doi:10.24963/ijcai.2024/332}}.

\bibitem{VONSTENGEL1997309}
Bernhard {von Stengel} and Daphne Koller.
\newblock Team-maxmin equilibria.
\newblock {\em Games and Economic Behavior}, 21(1):309--321, 1997.
\newblock URL:
  \url{https://www.sciencedirect.com/science/article/pii/S0899825697905273},
  \href {https://doi.org/https://doi.org/10.1006/game.1997.0527}
  {\path{doi:https://doi.org/10.1006/game.1997.0527}}.

\bibitem{PracticalUseImperfect}
Kevin Waugh, Martin Zinkevich, Michael Johanson, Morgan Kan, David Schnizlein,
  and Michael~H. Bowling.
\newblock A practical use of imperfect recall.
\newblock In {\em Eighth Symposium on Abstraction, Reformulation, and
  Approximation}, SARA'09, pages 175 -- 182, California, USA, 2009. {AAAI}.

\bibitem{DBLP:conf/sara/WaughZJKSB09}
Kevin Waugh, Martin Zinkevich, Michael Johanson, Morgan Kan, David Schnizlein,
  and Michael~H. Bowling.
\newblock A practical use of imperfect recall.
\newblock In Vadim Bulitko and J.~Christopher Beck, editors, {\em Eighth
  Symposium on Abstraction, Reformulation, and Approximation, {SARA} 2009, Lake
  Arrowhead, California, USA, 8-10 August 2009}. {AAAI}, 2009.
\newblock URL:
  \url{http://www.aaai.org/ocs/index.php/SARA/SARA09/paper/view/839}.

\bibitem{zermelo1913}
Ernst Zermelo.
\newblock Uber eine anwendung der mengenlehre auf die theorie des schachspiels.
  inproceedings of the fifth international congress of mathematicians ii, 1913.

\bibitem{CERMAK2020103248}
Jiří Čermák, Viliam Lisý, and Branislav Bošanský.
\newblock Automated construction of bounded-loss imperfect-recall abstractions
  in extensive-form games.
\newblock {\em Artificial Intelligence}, 282:103248, 2020.
\newblock URL:
  \url{https://www.sciencedirect.com/science/article/pii/S0004370220300126},
  \href {https://doi.org/https://doi.org/10.1016/j.artint.2020.103248}
  {\path{doi:https://doi.org/10.1016/j.artint.2020.103248}}.

\end{thebibliography}

\appendix

\setcounter{proposition}{0}
\setcounter{lemma}{0}
\setcounter{theorem}{0}
\setcounter{corollary}{0}

\renewcommand{\thetheorem}{T\arabic{theorem}}
\renewcommand{\theproposition}{P\arabic{proposition}}
\renewcommand{\thelemma}{L\arabic{lemma}}
\renewcommand{\thecorollary}{C\arabic{corollary}}


~\\~\\
\begin{center}
{\huge{Appendix}}
\end{center}

\section{Shuffled A-loss recall}

\shuffleSameLeafMonomials*
\begin{proof}
Let $L$ and $L'$ be set of leaf nodes of $\Tt$ and $\Tt'$ respectively. From definition of $\salr$, there is a bijection $f$ between the sets $\Hh(L)$ and $\Hh(L')$ such that for $s \in \Hh(L)$, $f(s)$ is a permutation of $s$. Since $\forall s \in \Hh(L),\mu(s) = \mu(f(s))$ it follows that $X(\Tt) = X(\Tt')$. 
\end{proof}

\alrSetStructEquiv*
\begin{proof}
First, starting with an $\alr$-game structure $\Tt$, we will show by induction on the size of $\Tt$, that $\Hh(L)$ has $\alr$. 

When $\Tt$ is a trivial leaf node, we have $\Hh(L) = \{ \epsilon \}$ which has $\alr$ by definition. Suppose the statement holds for all game structure of size at most $k$. 
Let $\Tt$ be a game structure of size $k+1$. First note, that any structure can be reduced to a structure where $\chance$ nodes do not have $\chance$ children, and with the same leaf histories $\Hh(L)$. This can be done by absorbing the $\chance$ children into the parent.

Now we can split into several cases and sub-cases.

Case 1: The root node $r$ of $\Tt$ is a $\chance$ node. When $r$ is a $\chance$ node, there can be two cases.

Case 1a: All children of $r$ (which are player nodes) are in one single information set $I$. In this case, one can construct a structure $\Tt'$ with first two levels of $\Tt$ swapped i.e. the root of $\Tt'$ is a player node in information set $I$ and each actions leading to $\chance$ nodes which again leads to the substructure from third level of $\Tt$. This way, $\Tt$ is reduced to a structure $\Tt'$ with player node at root and $\Hh(\Tt) = \Hh(\Tt')$. This falls in Case 2.

Case 1b: Not all children of $r$ are in single information sets. From definition of $\alr$, substructure rooted at two children of $r$ in two distinct information sets cannot share an information set. This implies, these are substructures of size at most $k$ and by inductive hypothesis, their set of leaf histories have $\alr$. Now since their leaf histories are mutually disconnected, this implies that $\Hh(L)$ has $\alr$.

Case 2: The root node $r$ is a player node.
In this case, for any action $a$ out of root node, and any two node in the sub-tree out of $a$, $a$ is the common prefix of histories of both node. Hence from definition of $\alr$, $\Tt$ has $\alr$ would imply for any $a$, the substructure rooted at each children must individually have $\alr$. By inductive hypothesis the sub-structure has size less than $k+1$ and hence the leaf histories of substructure has $\alr$ as well. Since $\Hh(L)$ is connected in this case, it follows that $\Hh(L)$ has $\alr$ as well. 

~\\
~\\ 
The other direction can be proved similarly, again by using induction on the size of $\Tt$ and breaking into similar cases.  
\end{proof}
\salrDisc*
\begin{proof}
Since $S$ is disconnected from definition of $\salr$ it follows that when $S$ has $\salr$ with $\salr$ witness $S'$, $S'$ is also disconnected, i.e. the maximal connected components of $S'$ are $\salr$ witnesses of maxmimal connected components of $S$. Hence each $S_i$ has $\salr$.

The other direction follows since the union of the $\salr$ witnesses for connected components of $S$ is an $\salr$ witness of $S$. 
\end{proof}
\commonact*
\begin{proof}
For the forward direction, let $S'$ be an $\salr$ witness of $S$. Since $S$ is connected, it follows that $S'$ is connected as well. Now from definition of $\alr$ sets, $\exists I$, such that all sequences in $S'$ starts with actions from $\act(I)$. This proves the first statement. 
Now again it follows from the definition of $\alr$ sets that for each $a \in \act(I)$, $S'_a$ has $\alr$. Moreover $S'_a$ is $\salr$ witness of $S_a$. This proves the second statement. 

For the other direction $\bigcup_a aS'_a$ will have $\alr$ by definition and hence is an $\salr$ witness of $S$, implying $S$ has $\salr$.   
\end{proof}
\salrPtime*
\begin{proof}
Suppose $S$ has $\salr$ and $I$ be an information set such that all sequences in $S$ has actions from $\act(I)$. Let $S'$ be an $\salr$ witness of $S$. We claim that for each $a \in \act(I)$, $S'_a$ will have $\alr$. 

For this we provide an equivalent formulation of $\alr$ sets. A set has $\alr$ iff every subset of size $2$ has $\alr$. This is true because $\Tt$ has $\alr$ iff every pair of leaf histories $s_1$ and $s_2$ of $\Tt$ satisfy the constraints imposed by $\alr$ condition, i.e. either (i) $s_1 = sa_1s'_1,~ s_2 = sa_2s'_2$ for $a_1 \neq a_2$ with $a_1,a_2 \in \act(I')$ for some $I'$ or (ii)  $s_1 = ss'_1,~ s_2 = ss'_2$ where $s'_1$, $s'_2$ are disconnected. Now it follows that if $s_1$ and $s_2$ share a common action $a$, the pair of sequences on removing this common $a$ from both will still satisfy the $\alr$ conditions. This implies that $S'_a$ will also have $\alr$ for each $a$. As a result for each $a$, $S_a$ will have $\salr$.

For the other direction $\bigcup_a aS'_a$ will have $\alr$ by definition and hence is an $\salr$ witness of $S$.  
\end{proof}
\onepShufflePtime*
\begin{proof}
Since $\salr$ witnesses are specific kinds of $\alr$-spans, using the construction in the proof of \cref{prop:span-to-game}, we can reduce the input game $G$ to a new game $G'$ with $\alr$ and with the same payoff polynomial. Since game structures with $\salr$ have SD $0$, this follows from \cref{cor:effic-solv-class}. 
\end{proof}
\section{Span}

\subsection{Existence of $\alr$-span}

\input{Figures/fig-app-span}
\alrspan*

\begin{proof} Suppose $\Tt$ has information sets $\Ii = \{ I_1, \dots, \I_n \}$ and actions $A$. We will provide an explicit $\alr$-span  $\Tt'$ over the same information sets $\Ii$ and actions $A$.
An example of this construction for $n=3$ can be found in \cref{fig:app-alossSpan}, where the game structure in \cref{fig:app-alossSpan-c} is an $\alr$-span of the structure in \cref{fig:app-alossSpan-a} obtained in this manner. 
 
Let $S = \{\epsilon\} \cup  \{a_{i_1}\dots a_{i_k} \mid k \leq n, \forall j~ a_{i_j} \in \act(I_j) \}$. $\Tt'$ be the game structure over the vertex set $V' = \{u_s | s \in S\}$, with $v_{\epsilon}$ being the root node and the set of leaves $L' = \{ v_s \mid \lvert s \rvert = n \}$. 
The set of actions is $A$ where for $u_s \in V'\setm L'$, $u_s \xra{a} u_{sa}$. The set of information sets is $\Ii$ and for each $ 1 \leq i \leq n, I_i = \{ u_s \mid  \lvert s \rvert = i-1\}$.

We claim that $\Tt'$ is an $\alr$-span of $\Tt$. Firstly $\Tt'$ has $\alr$ because it follows from the recursive definition of $\alr$-sequence sets that $S$ has $\alr$.
Now let $\mu(s)$ be a monomial generated by some leaf sequence $s$ in $\Tt$. We will show that the leaf monomials of the set $\{ s' \in \Hh(L') \mid \act(s) \incl \act(s') \}$ generate the monomial $\mu(s)$ where $\act(s)$ is the set of actions in $s$. More particularly the expression $\sum\limits_{s' \in \Hh(L') \mid \act(s) \incl \act(s')} \mu(s')$ reduces to $\mu(s)$. This is because the previous expression can be re-written as

 $\mu(s)\prod\limits_{I \mid \act(I) \cap \act(s) = \emps}(\sum_{a \in \act(I)} x_a)$ which reduces to $\mu(s)$ by repeated applications of $\sum_{a \in \act{I}}x_a = 1$ for $I$ with $ \act(I) \cap \act(s) = \emps$. Hence $\Tt'$ is an $\alr$-span of $\Tt$.

~\\ 
\end{proof}

\spanToGame*

\begin{proof}
We will provide $\d'$ and $\Uu'$ to construct the game $G'$ on $\Tt'$. 

Firstly, we choose $\d'$ which at every $\chance$ node assigns an uniform distribution on its outgoing edges(any distribution with full support would work). We will factor out these $\chance$ probabilities later in the payoff function $\Uu'$.   
For a leaf $t \in L$, it follows from definition of span that a linear combination of $\mu(t')$ for $t' \in \Tt'$, reduces to $\mu(t)$. Let that combination be $\sum_{t' \in L'}c^t_{t'}\mu(t')$. Recall that $\prob_{\chance}(t')$ is the product of $\chance$ probabilities by $\d'$ on path to $t'$ in $\Tt'$.  Define $\Uu'(t') = \frac{\sum_{t \in L}\prob_{\chance}(t)c^t_{t'}\Uu(t)}{\prob_{\chance}(t')} $ and let $G'$ be the resulting game. Observe that payoff polynomial of $G'$ turns out to be 
$\sum\limits_{t' \in L'}\mu(t')\sum_{t \in L}\prob_{\chance}(t)c^t_{t'}\Uu(t)$. By rearranging terms, this is equivalent to $\sum\limits_{t \in L} (\prob_{\chance}(t)\Uu(t))\sum_{t' \in L'}c^t_{t'}\mu(t')$. It follows that this reduces to the payoff polynomial of $G$.
\end{proof}

\subsection{Minimal $\alr$-span }

\minSpanDisc*

\begin{proof}
Firstly we can see that $S'$ is indeed an $\alr$-span of $S$. Also for any distinct $i$ and $j$, since $S_i$ and $S_j$ are not connected, it follows that $S'_i$ and $S'j$ are not connected as well. 

Let us assume that there is an $\alr$-span $\hat{S}$ of $S$ such that $ \mid  \hat{S} \mid   <  \mid  S' \mid   = \sum\limits_i  \mid  S_i' \mid  $. For each $i$, let $\hat{S}_i$ be a minimal subset of $\hat{S}$ that spans $S_i$. Again from the similar previous argument, for any distinct $i$ and $j$, $\hat{S}_i$ and $\hat{S}_j$ are not connected. Since $ \mid  \hat{S} \mid  = \sum\limits_i  \mid  \hat{S}_i \mid  $, $\exists i$ such that $ \mid\hat{S}_i\mid   <  \mid S_i'\mid  $. But this  contradicts the minimality of $S'_i$ for each $S_i$. This completes the proof.  

\end{proof}

\spanFirstInfo*

\begin{proof}
Given a connected sequence set $S$, any minimal $\alr$-span $S'$ of $S$ is also connected. This is because if one could find a decomposition $S' = \uplus_i S_i'$, such that $S_i'$'s are disconnected, then the subsets of $S$ individually spanned by each $S'_i$ would be disconnected as well leading to a contradiction.

Since an $\alr$-span is a set with $\alr$, it follows from definition of $\alr$-sets that for some $I$, all sequences in $S'$ must start with an action from $\act(I)$. 
\end{proof}

\spanAfterFixingInfo*

\begin{proof}
Suppose there is an $\alr$-span $\hat{S}$ of $S$ starting with $\act(I)$ such that $\lvert \hat{S} \rvert < \lvert S' \rvert$. 
Observe that $\hat{S} = \uplus_{a \in \act(I)}a\hat{S}_a$. 
We will show that for each $a$, $\hat{S}_a$ is an $\alr$-span of $S_a \cup S_{\bar{I}}$. Since $\lvert \hat{S} \rvert = \sum_{a} \lvert \hat{S}_a \rvert$ and $\lvert S' \rvert = \sum_{a} \lvert H'_a \rvert$ this would contradict the minimality of some $H'_a$.

Since $\hat{S}$ is a span of $S$, for any $s \in S_a$, $\mu(as) = x_a\mu(s)$ is obtained by reduction from $\sum_{a \in \act(I)}x_af_a$ where $f_a$ is combination of monomials of sequences with $a$. Substituting $x_a$ with $0$ for every term containing $x_a$ we get an expression which reduces to $0$. This implies the expression $f_a$ reduces to $\mu(s)$ itself. Again for $s \in S_{\bar{I}}$ substituting for each $a \in \act(I)$, $x_a$ by $1$ we get an expression devoid of $x_a$ for $a \in \act(I)$ that reduces to $\mu(s)$. 
These together implies that $\hat{S}_a$ is an $\alr$-span of $S_a \cup S_{\bar{I}}$.
This completes the proof.  
\end{proof}

\spanFindFirstInfo*

\begin{proof}
We can show this by induction on the size of $\Ii$. This is trivially true when $\lvert \Ii \rvert = 1$. Let this be true when number of information sets is at most $k$. Now if we have a sequence set $S$ over $k+1$ information sets, since $S$ is connected it follows from \cref{lem:span-first-I} that in a minimal span $S'$ all sequences start with $\act(I')$ for some $I'$. Moreover for each $a \in \act(I')$, it follows from \cref{lem:span-after-fixing-an-I} that $S'_a$ is a minimal $\alr$ span of $S_a \cup S_{\bar{I'}}$. Now by inductive hypothesis since every $S_a \cup S_{\bar{I'}}$ has actions from $I$, we can assume that $S'_a$ starts with actions from $I$. Finally one can swap the order of actions in each sequence, so that actions from $I$ appear at the start. The resulting set will still have $\alr$ because every pair of sequences satisfy the conditions put down in the proof of \cref{lem:salr-common-act}. Since the resulting set still remains a span of $S$, this completes the proof.  
\end{proof}

We provide the algorithm for computing minimal $\alr$-span described in the main text. 
\begin{algorithm}[H]
\caption{Compute minimal $\alr$-span}
\label{algo:min-span}
\begin{algorithmic}[1]
\STATE \textbf{Input} : $S$ 
\STATE \textbf{Output}: minimal $\alr$-span $S'$ of $S$ 
\IF{$S$ is connected} 
\IF{$\exists I$ such that every $s \in S$ contains an action from
  $\act(I)$ \label{algo:min-span-connected}} \FOR {$a \in \act(I)$}
\STATE $S_a' \gets $ minimal $\alr$-span of $S_a$\ENDFOR
\RETURN $\bigcup_{a} aS_a'$ 
\ELSE 
\FOR {$I \in \Ii$}
\STATE $S_{\bar{I}} \gets \{ s \in S \mid s \text{ contains no action from } \act(I) \}$
\FOR {$a \in \act(I)$}
\STATE $S_a' \gets $ minimal $\alr$-span of $S_a \cup S_{\bar{I}}$
\ENDFOR
\STATE $\hat{S}_I \gets \bigcup_{a \in \act(I)} S'_a$
\ENDFOR
\RETURN $\hat{S}_I$ that satsfies $\min_{I \in \Ii}  \lvert \hat{S}_I \rvert $ \label{algo:min-span-all}
\ENDIF
\ELSE
\STATE $S = \uplus S_i$ where each $S_i$ is connected 
\STATE $S'_i \gets $ minimal $\alr$-span of $S_i$ \label{algo:min-span-disconnected}
\RETURN $\cup_i S'_i$
\ENDIF
\end{algorithmic} 
\end{algorithm}

As mentioned in the main text, at every step if $\epsilon$ is also in the set we get rid of it. This doesn't affect the computation of the minimal $\alr$-span. This is because for any $\alr$ structure $\Tt$, $X(\Tt)$ can always generate $\epsilon$. This can be show using the existence of ``strongly branching'' (see \cref{lem:strong-branch}) subsets in $\Hh(L_{\Tt})$. In \cref{algo:min-span}, at every recursive call for computing minimal $\alr$-span, the returned structure will have $\alr$ and hence will generate $\epsilon$ anyway. 

\spanLowerBound*

\begin{proof}
  For $n > 0$, let $\Ii_n = \{I_1,\dots,\I_n\}$, $\act(I_i) = \{a^i,b^i\}$ and $A_n = \bigcup_i \act(I_i)$. Define $S_n = \{a^ib^j \mid i < j\} \cup \{b^ia^j \mid i < j\} \cup \{a^ia^j \mid i < j\} \cup \{b^ib^j \mid i < j\} $. Let $\hat{S}_n =  A_n \cup
S_n$. We claim that the class of game structures with sequence set $\hat{S}_n$ has minimal $\alr$-span of size $\Omega(2^n)$.

  Since $A_n \cup S_n$ is connected, according to \cref{lem:span-first-I}, a minimal span must start with actions from $\act(I)$ for some $I$. Now in order to choose an $I$ and apply \cref{lem:span-after-fixing-an-I}, 
we will observe that the set of monomials of $S_n$ is symmetric w.r.t each $I$ and the choice of $I$ doesn't make a difference. For a set $S$, observe that for any $I \in \Ii_n$, the leaf monomials of the set $\{s \in A_n \cup S_n  \mid  \act(I) \cap \act(s) = \emptyset\}$ and that of the set $S_{n-1} \cup A_{n-1}$ are same up to renaming of variables. 
Also, for any $I$ and for any $a \in \act(I)$, the leaf monomials of the set $\{ ss'  \mid  sas' \in A_n \cup S_n \} $ and the set $A_{n-1} \cup \{ \epsilon \}$ are again same upto renaming of variables. 
Hence for any choice of $I$ from $\Ii_n$, applying \cref{lem:span-after-fixing-an-I} yields a minimal $\alr$-span.

Following this argument if we pick $I_n$, since $(S_{n-1} \cup A_{n-1}) \cup (A_{n-1} \cup \{ \epsilon \}) = \hat{S}_{n-1}$ (ignoring the $\epsilon$) we are then left with finding the minimal $\alr$-span of $\hat{S}_{n-1}$. 
Let $T(n)$ be the size of a minimal $\alr$-span of $\hat{S}_n$. Then from \cref{lem:span-after-fixing-an-I} it follows $T(n) = 2T(n-1)$. 
Since for $n=1$, the minimal $\alr$-span has size $2$, it follows that a minimal $\alr$-span of $\hat{S}_n$ has size $\Omega(2^n)$.

\end{proof}

\spanNP*

To prove \cref{thm:span-NP} we first define something called a \emph{strongly branching set} recursively.
The trivial set $S = \{ \epsilon \}$ is a strongly branching set. $S$ is a strongly branching set if there exists $I \in \Ii$ such that there is a partition of $S = \uplus_{a_i \in \act(I)} a_iS_i $ such that (i) $a_iS_i$s are non-empty (ii) each of $S_i$ are strongly branching sets.

We prove the following lemma, that we will use in our proof. 
\begin{lemma}\label{lem:strong-branch}
Let $S$ be a sequence set with $\alr$. Then $\sum\limits_{s \in S}\mu(s) = 1$  under strategy constraints iff $S$ is a strongly branching set.
\end{lemma}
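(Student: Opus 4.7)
For the forward direction I would proceed by structural induction on the recursive definition of strongly branching sets. The base case $S = \{\epsilon\}$ gives $\sum_{s \in S} \mu(s) = 1$ directly. For the inductive step, if $S = \biguplus_{a \in \act(I)} a S_a$ with each $S_a$ strongly branching and each $a S_a$ non-empty, then by induction $\sum_{s' \in S_a} \mu(s') = 1$, so $\sum_{s \in S} \mu(s) = \sum_{a \in \act(I)} x_a \sum_{s' \in S_a} \mu(s') = \sum_{a \in \act(I)} x_a = 1$ using the strategy constraint at $I$.

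For the backward direction, the natural plan is to establish the following slightly stronger statement by induction on the maximum length of a sequence in $S$: for any $\alr$-set $S$, the polynomial $P_S := \sum_{s \in S} \mu(s)$ reduces to a non-negative integer constant $c$ under strategy constraints iff $S$ is a disjoint union of exactly $c$ strongly branching sets. The lemma is then the case $c = 1$.

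The induction unfolds along the structural characterization of $\alr$-sets from Proposition~\ref{prop:alr-set-struct-equiv}. If $\epsilon \in S$, then $\{\epsilon\}$ is a connected singleton component contributing $1$ to $P_S$, and constancy plus non-negativity of the remaining contribution forces $S = \{\epsilon\}$. In the disconnected case $S = \biguplus_i S_i$, the polynomials $P_{S_i}$ live in pairwise disjoint variable sets, so $P_S = \sum_i P_{S_i}$ is constant iff each $P_{S_i}$ is; applying the inductive hypothesis componentwise and summing the counts yields the result. In the connected case $S = \biguplus_{a \in \act(I)} a S_a$ we have $P_S = \sum_{a \in \act(I)} x_a P_{S_a}$, and evaluating this identity at each vertex of the $\act(I)$-simplex (setting one $x_a = 1$ and the others to $0$) forces $P_{S_a} = c$ as a polynomial identity for every $a$; in particular, if $c > 0$ then no $S_a$ is empty. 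Applying the inductive hypothesis to each $S_a$ produces $c$ strongly branching sub-components $T_{a,1}, \dots, T_{a,c}$ in $S_a$, and grouping them across $a$ as $S^j := \biguplus_{a \in \act(I)} a T_{a,j}$ gives $c$ strongly branching sets whose disjoint union is $S$.

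The main obstacle is identifying the correct inductive hypothesis. Attempting to prove the lemma directly (only the case $c=1$) breaks in the disconnected case, because the individual components could a priori contribute arbitrary constants summing to $1$; the stronger claim that $P_S = c$ forces $S$ to decompose into exactly $c$ strongly branching pieces is what makes the recursion carry through. A secondary, benign point is that the pairing used to group the $c$ sub-components of $S_a$ across different $a \in \act(I)$ is arbitrary, since the strongly branching property depends only on the recursive structure of each continuation and not on the labelling of its components.
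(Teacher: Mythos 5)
Your proof is correct, and your ``forward'' direction (strongly branching implies the sum reduces to $1$) coincides with the paper's argument for that implication. For the hard direction you take a genuinely different route: you strengthen the inductive invariant to ``$\sum_{s\in S}\mu(s)$ reduces to a non-negative integer $c$ iff $S$ is a disjoint union of exactly $c$ strongly branching sets'', which lets you pass through disconnected sets by distributing $c$ over the components, at the cost of extra bookkeeping (pairing the $c$ sub-components of each $S_a$ across the actions of $\act(I)$). The paper instead keeps the constant pinned at $1$ throughout the recursion: it first observes that for any non-empty set whose monomial sum is constant under the strategy constraints, that constant is at least $1$ (evaluate at a pure strategy selecting every action of some fixed $s\in S$, so $\mu(s)=1$ and all other monomials are non-negative), and then deduces that a set with sum exactly $1$ must be \emph{connected}, since two or more components have disjoint variable sets and would each contribute a constant at least $1$. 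So your stated ``main obstacle'' is not actually an obstacle: the direct $c=1$ induction does not break in the disconnected case, because that case cannot occur when the sum is $1$. After this the two arguments agree, writing $\sum_{s\in S}\mu(s)=\sum_{a\in\act(I)}x_a Z_a$ and forcing each $Z_a=1$ at the vertices of the simplex. Two small points in your version: the claim that $\epsilon\in S$ forces $S=\{\epsilon\}$ holds only for $c=1$ (for $c\ge 2$ the singleton $\{\epsilon\}$ is just one component, and the case is subsumed by your disconnected branch anyway); and in the disconnected case you should justify that each component's constant is itself a non-negative \emph{integer} before invoking the iff-form of your inductive hypothesis --- this follows by evaluating at pure strategies, where every leaf monomial takes the value $0$ or $1$.
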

\begin{proof}
First we will prove the backward direction by induction on the number of actions. Observe that any strongly branching set always has $\alr$, hence $\alr$ assumption is redundant in this case.  When $S = \{ \epsilon \}$ recall that $\mu( \epsilon ) = 1$. Now for any non-trivial strongly branching set $S$, from definition we have $S = \uplus_{a_i \in \act(I)} a_iS_i$ where $S_i$s are strongly branching. We have $\sum\limits_{s \in S}\mu(s) = \sum_{a_i} x_{a_i}(\sum\limits_{s' \in S_i}\mu(s'))$. By induction hypothesis the statement holds for each $S_i$ and hence this equals 1. Hence this extends to $S$ as well. 

For the the forward direction we will again use induction on the number of actions. The statement is true when $A = \emps$. Now suppose $S$ be a sequence set over non-trivial action set.

For any set $S'$, if $\sum\limits_{s \in S'}\mu(s)$ is a constant value then this value is at least 1 due to the fact that any $\mu(s)$ has either no constant terms or 1 in its full expansion.
Observe that $S$ is connected.Otherwise, if $S$ has at least two connected components $S'$ and $S''$, since $\mu(S')$ and $\mu(S'')$ have no variable in common, $\sum\limits_{s \in S}\mu(s)$ would become strictly more than 1.

 Since $S$ is connected and has $\alr$, it follows that $S= \uplus_{a_i \in \act(I)} a_iS_i $ for some $I$ where each $S_i$ has $\alr$. Let $Z_i = \sum\limits_{s' \in S_i}\mu(s') $. Then we have $\sum\limits_{s \in S}\mu(s) = \sum_{a_i} x_{a_i}Z_i = 1$, which implies each of $Z_i$s must equal 1. Hence by induction hypothesis each of $S_i$s are strongly branching sets and as a result $S$ is also a strongly branching set. 

\end{proof}
\begin{proof}[Proof of \cref{thm:span-NP}]
Given a structure $\Tt$, we will guess an $\alr$-span $\Tt'$ of size at most $k$ along with the linear combinations for each leaf monomial. Since from \cref{algo:min-span} we can get the minimal span and all the co-efficient needed lies in $\{0,1\}$, we can restrict to this kind of combinations. Hence, this is of size at most $k|\Tt|$.  

Firstly, checking if $\Tt'$ has $\alr$ can be done efficiently by traversing the tree from root.
Now, to verify if each for $s \in \Hh(L)$, the linear combination for generating $\mu(s)$, is correct, we just need to consider all $s' \in \Hh(L')$ that has all actions from $s$,
i.e. we consider the set $S_s = \{ s' \in \Hh(L') \mid \act(s) \incl \act(s') \}$ where $\act(s)$ is the set of actions in $s$. This is because monomials of sequences not containing all actions form $\act(s)$, will either vanish (thus not affecting the final outcome) or not reduce to $\mu(s)$.

 Hence, if $\sum \mu'$ is the candidate linear combination for $\mu$,  $\sum \mu'$ reduces to $\mu$ iff the $\sum \frac{\mu'}{\mu} = 1$. Let $\frac{s'}{s}$ denote the sequence $s'$ restricted to actions from $\act(s')\setminus \act(s)$. But since $S_s$ is assumed to have $\alr$, this set will also have $\alr$. Hence from \cref{lem:strong-branch}, we need to check if the set $\{\frac{s'}{s} | s' \in S_s\}$ has a strongly branching subset. This can be checked recursively in polynomial time.  
\end{proof}

\efficSolvClass*

\begin{proof}
Since $\Tt$ has SD $k$, the size of the minimal $\alr$-span $\Tt'$ of $\Tt$ is at most $|\Tt|^{k+1}$. Now given a game $G = (\Tt, \d, \Uu)$ on $\Tt$, consider the game $G' = (\Tt', \d', \Uu')$ obtained using the construction in \cref{prop:span-to-game}. It follows from \cref{prop:span-to-game} that solving $G$ can be reduced to solving $G'$. Since $\d'$ assigns uniform distribution at every chance nodes, probabilities in $\d'$ need at most $\Oo(\log(|\Tt|^{k+1}))$ bits. It also follows from \cref{algo:min-span} that the coefficients $c^t_{t'}$ in the proof of \cref{prop:span-to-game} can be either $1$ or $0$ for the minimal $\alr$-span. Hence each payoff in $\Uu'$ uses $\Oo(\log{(U|\Tt|^{k+1})})$ bits where $U$ is the maximum possible payoff in $\Uu$.
Since $k$ is constant, hence size of $G'$ is polynomial in the size of $G$. 
\end{proof}

\section{Extension to Two-player games}

\begin{lemma}
Let $\Tt$ be a two player game structure with sequence set $S$ and let $S_{\Max}$ and $S_{\Max}$ be the projection of $S$ onto actions of $\Max$ and $\Min$ respectively. Then, there exists game structures $\Tt_{\Min}$, $\Tt_{\Min}$ such that $S_{\Max}$(resp $S_{\Min}$) is the sequence set of $\Tt_{\Max}$(resp. $\Tt_{\Min}$).
\end{lemma}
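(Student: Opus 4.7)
The plan is to construct $\Tt_{\Max}$ (and $\Tt_{\Min}$ symmetrically) via a purely syntactic rewriting of $\Tt$. The key observation is that the function $\his$ only accumulates actions from non-chance nodes on the path to a leaf; hence if we recast every $\Min$-node of $\Tt$ as a chance node, the resulting leaf histories automatically erase $\Min$'s moves and leave precisely the $\Max$-projections. Concretely, I would define $\Tt_{\Max}$ on the same underlying tree as $\Tt$, keeping $V_{\Max}$, $E$, and $\Ii_{\Max}$ unchanged, but placing all of $\Tt$'s $\Min$-nodes into $V_{\chance}$ (dropping the $A_{\Min}$ labels from their outgoing edges, which no longer need to carry actions). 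The construction of $\Tt_{\Min}$ is obtained by exchanging the roles of the two players.

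To check that this yields a valid one-player game structure I would verify each axiom in turn: the edge relation still induces a tree (the tree shape is untouched), the new partition $V_{\Max} \uplus V_{\chance}$ covers the vertex set, $\Ii_{\Max}$ still partitions $V_{\Max}$, and the restricted alphabet $A_{\Max}$ still identifies information sets, since the only player information sets remaining belong to $\Max$. For correctness, fix any leaf $t$, which remains a leaf in $\Tt_{\Max}$. By definition, the history of $t$ in $\Tt_{\Max}$ is the sequence of action labels along the root-to-$t$ path originating from non-chance nodes; in $\Tt_{\Max}$ these are exactly the original $\Max$-nodes, so the resulting word coincides with the $\Max$-projection $\his_{\Max}(t)$ computed in $\Tt$. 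Ranging over all leaves gives the sequence set $\{\his_{\Max}(t) : t \in L\} = S_{\Max}$, as required.

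Since the whole argument is a syntactic relabelling on top of an already-valid game structure, I do not anticipate any real obstacle; the only care needed is the bookkeeping of the game-structure axioms above, together with the observation that $\his$ ignores chance labels, which is baked into the paper's definition.
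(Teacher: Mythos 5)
Your proposal is correct and follows essentially the same route as the paper: the paper's proof ``fixes a behavioral strategy for $\Min$,'' which is precisely your operation of recasting $\Min$'s decision nodes as chance nodes so that $\his$ discards their labels and the leaf histories become the $\Max$-projections. Your write-up is merely more explicit about verifying the game-structure axioms, which is a harmless (and arguably welcome) addition.
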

\begin{proof}
Fix a behavioral strategy for $\Min$ and this makes the structure a one-player structure with $\Max$ player. The game structure of this game, $\Tt_{\Max}$ is a structure such that the sequence set of $\Tt_{\Max}$ is $S_{\Max}$. 
This works for $\Min$ in a similar way. 
\end{proof}
\twopPfrNam*
\begin{proof}
Given $\Tt_{\Max}'$ and $\Tt_{\Min}'$, the $\alr$ spans of $\Tt_{\Max}$ and $\Tt_{\Min}$, we construct $\Tt'$ as follows : 
the top part of $\Tt'$ is the game structure $\Tt_{\Max}'$ and to each leaf of $\Tt_{\Max}'$ a copy of $\Tt_{\Min}'$ is attached. Nodes in two different copies of $\Tt_{\Min}'$ that are in the same information set in $\Tt_{\Min}'$ are also in same information set of $\Tt'$. 

Now $\Tt'$ is a structure of size $\lvert \Tt_{\Max}' \rvert \lvert \Tt_{\Min}' \rvert$. We will show that $\Tt'$ spans $\Tt$, i.e. it satisfies the 2nd condition in the definition of $\alr$ span. Once we have shown this, the proof will follow using the same construction in \cref{prop:span-to-game}.

Any leaf history in $\Tt'$ will have a $\Max$ part followed by a $\Min$ part, i.e. for $t' \in L'$, $\mu(t') = \mu_{\Max}(t')\mu_{\Min}(t')$. This is true for $\Tt$ as well. Since $\Tt_{\Max}'$ and $\Tt_{\Min}'$ are $\alr$ spans of $\Tt_{\Max}$ and $\Tt_{\Min}$, for any $t \in L$, for $\mu_{\Max}(t)$ we have witness linear combination $\sum_{t'}c^t_{t'}\mu_{\Max}(t')$. The same holds for $\Min$. Hence we have $\mu(t) = \mu_{\Max}(t)\mu_{\Min}(t) = f_{\Max} f_{\Min}$ where $f_i$ is linear combination of monomials from $\Tt'_i$. Expanding this, it follows that $\sum_{t' \in L'}c^t_{t'}\mu(t')$ reduces to any $\mu(t)$. This completes the proof.     
\end{proof}


\end{document}